\newtheorem{thm}{Theorem}[section]
\newtheorem{prop}[thm]{Proposition}
\newtheorem{cor}[thm]{Corollary}
\theoremstyle{plain}
\newtheorem{lem}[thm]{Lemma}
\theoremstyle{definition}
\theoremstyle{remark}
\newtheorem{rmq}{Remark}
\numberwithin{equation}{section}
\newcommand{\tr}{\operatorname{tr}}
\newcommand{\C}{\mathbb{C}}
\newcommand{\R}{\mathbb{R}}
\newcommand{\pp}{\mathbb{P}}
\newcommand{\pprho}[1]{\pp^{#1}}
\newcommand{\qq}{\mathbb{Q}}
\newcommand{\qumu}[1]{\qq_{#1}}
\newcommand{\ee}{\mathbb{E}}
\newcommand{\eerho}[1]{\ee^{#1}}
\newcommand{\eech}{\ee\ch}
\newcommand{\eemu}[1]{\ee_{#1}}
\newcommand{\cB}{\mathcal{B}}
\newcommand{\cD}{\mathcal{D}}
\newcommand{\cF}{\mathcal{F}}
\newcommand{\sys}{\mathcal{S}}
\newcommand{\env}{\mathcal{E}}
\newcommand{\ii}{\mathrm{i}}
\newcommand{\dd}{\mathrm{d}}
\newcommand{\e}{\mathrm{e}}
\newcommand{\Lin}{\mathcal{L}}
\newcommand{\ind}{\mathbbm{1}}
\newcommand{\Id}{\mathrm{Id}}
\renewcommand{\and}{\mbox{ and }}
\newcommand{\inv}{_{\mathrm{inv}}}
\newcommand{\ch}{^{\mathrm{ch}}}
\newcommand{\rhoinv}{\rho\inv}
\newcommand{\pc}[1]{\mathrm{P}\C^{#1}}
\newcommand{\mc}[1]{\mathrm{M}_{#1}(\C)}
\newcommand{\lerg}{\hyperlink{lerg}{\bf ($\Lin$-erg)}}
\newcommand{\pur}{\hyperlink{pur}{\bf(Pur)}}
\newcommand{\nscpur}{\hyperlink{nscpur}{\bf(NSC-Pur)}}
\newcommand{\Lp}[1]{\mathrm{L}^{#1}}
\newcommand{\filtreP}{\big(\Omega,(\cF_t)_t,\pp\big)}
\renewcommand{\tfrac}{\genfrac{}{}{}1}
\newcommand{\ran}{\mathop{\rm ran}\nolimits}
\newcommand{\x}{\mathcal X}
\newcommand{\y}{\mathcal Y}
\newcommand{\z}{\mathcal Z}
\title{Invariant Measure for Stochastic Schr\"odinger Equations}
\author{T.\ Benoist}
\address{Institut de Math\'ematiques de Toulouse, UMR5219, Universit\'e de Toulouse, CNRS, UPS IMT, F-31062 Toulouse Cedex 9, France}
\email{tristan.benoist@math.univ-toulouse.fr}
\author{M.\ Fraas}
\address{Department of Mathematics, Virginia Tech, Blacksburg, VA 24061, United States of America}
\email{fraas@vt.edu}
\author{Y.\ Pautrat}
\address{Laboratoire de Math\'ematiques d'Orsay, Univ.\ Paris-Sud, CNRS, Universit\'e Paris-Saclay, 91405 Orsay, France}
\email{yan.pautrat@math.u-psud.fr}
\author{C.\ Pellegrini}
\address{Institut de Math\'ematiques de Toulouse, UMR5219, Universit\'e de Toulouse, CNRS, UPS IMT, F-31062 Toulouse Cedex 9, France}
\email{clement.pellegrini@math.univ-toulouse.fr}
\date{\today}
\begin{document}
	
\begin{abstract}
Quantum trajectories are Markov processes that describe the time-evolution of a quantum system undergoing continuous indirect measurement. Mathematically, they are defined as solutions of the so-called ``Stochastic Schr\"odinger Equations'', which are nonlinear stochastic differential equations driven by Poisson and Wiener processes. This paper is devoted to the study of the invariant measures of quantum trajectories. Particularly, we prove that the invariant measure is unique under an ergodicity condition on the mean time evolution, and a  ``purification'' condition on the generator of the evolution. We further show that quantum trajectories converge in law exponentially fast towards this invariant measure. We illustrate our results with examples where we can derive explicit expressions for the invariant measure. 
\end{abstract}

\maketitle

\setcounter{tocdepth}{1}
\tableofcontents
\section{Introduction}
Under a Markov approximation, the evolution of an open quantum system $\sys$ in interaction with an environment $\env$ is described by the Gorini--Kossakowski--Sudarshan--Lindblad Master (GKSL) equation \cite{GKS,Li}. More precisely, assuming that the system is described by the Hilbert space $\C^k$, the set of its states is defined as the set $\cD_k$ of density matrices, i.e.\ positive semidefinite matrices with trace one:
\[\cD_k=\{\rho\in \mc k \mbox{ s.t. } \rho\geq0,\tr \rho=1\}.\]
The evolution $t\in\R_+\mapsto \bar\rho_t\in\cD_k$ of states of the system is then determined by the GKSL equation (also called quantum master equation):
\begin{equation}\label{eq:lindblad}
\dd\bar\rho_t=\Lin(\bar\rho_t)\,\dd t,\quad\bar\rho_0\in\cD_k,
\end{equation}
where $\Lin$ is a bounded linear operator on $\mc k$ of the form
\begin{equation}\label{eq_deflindblad}
	\Lin: \rho\mapsto-\ii[H,\rho]+\sum_{i\in I}\big( V_i\rho V_i^*-\tfrac{1}{2}\{V_i^*V_i,\rho\}\big),
\end{equation}
with $I$ a finite set, $H\in \mc k$ self adjoint, and $V_i\in \mc k$ for each $i\in I$ ($[\cdot,\cdot]$ and $\{\cdot,\cdot\}$ are respectively the commutator and anticommutator). Such an $\Lin$ is called a Lindblad operator.

Since $\Lin$ is linear, $t\mapsto \bar\rho_t$ is given by $\bar\rho_t=\e^{t\Lin}(\bar\rho_0)$. The flow is therefore a semigroup $(\e^{t\Lin})_t$, which consists of completely positive, trace-preserving maps (see \cite{wolftour}). In particular, $\Lin$ is the generator of a semigroup of contractions, thus $\operatorname{spec}\mathcal{L}\subset \{\lambda\in\C\ \mbox{ s.t.}\ \operatorname{Re}\lambda\leq 0\}$. Since $\e^{t\Lin}$ is trace preserving, $0\in \operatorname{spec}\Lin$. The following assumption is equivalent to the simplicity of the eigenvalue~$0$  \cite[Proposition 7.6]{wolftour}:

\hypertarget{lerg}{}
\begin{description}\item[\lerg] There exists a unique non zero minimal orthogonal projection $\pi$ such that $\Lin(\pi \mc k\pi)\subset \pi \mc k\pi$.
\end{description}

Assumption {\lerg} implies directly that there exists a unique $\rhoinv\in\cD_k$ such that $\Lin\rhoinv=0$. Moreover, one can show that {\lerg} implies the existence of $\lambda>0$ such that for any $\rho\in\cD_k$, $\e^{t\Lin}(\rho)=\rhoinv+O(\e^{-\lambda t})$ (see \cite[Proposition 7.5]{wolftour}).

The above framework generalizes that of continuous-time Markov semigroups on a finite number of sites: density matrices $\rho$ over $\C^k$ generalize probability distributions over $k$ classical states, while Lindbladians $\Lin$ generalize generators of Markov jump processes. In Section \ref{sec:classical_MC}, we show how a classical finite state Markov jump process can be encoded in the present formalism.
\smallskip

The family $(\bar \rho_t)_t$ describes the reduced evolution of the system $\sys$ when coupled to an environment $\env$ in a conservative manner. This evolution can be derived by considering the full Hamiltonian of $\sys+\env$ in relevant limiting regimes, e.g.\ the \emph{weak coupling} or \emph{fast repeated interactions} regimes, and tracing out the environment degrees of freedom (see \cite{davies_markovian_1974,davies_markovian_1976} and \cite{attal2006repeated} respectively). It can also be described by a \emph{stochastic unravelling}, i.e.\ a stochastic process $(\rho_t)_t$ with values in $\cD_k$ such that the expectation $\overline \rho_t$ of $\rho_t$ satisfies \eqref{eq:lindblad}; this method was developed in \cite{Ba2,BaGre,Ba3}. One possible choice of a stochastic unravelling is described by the following stochastic differential equation (SDE), called a \emph{stochastic master equation}:
\begin{equation}\label{eq1:SME}
\begin{aligned}
\dd\rho_t=&\Lin(\rho_{t-})\,\dd t\\
&+\quad\sum_{i\in I_b}\Big(L_i\rho_{t-}+\rho_{t-} L_i^*-\tr\big(\rho_{t-}(L_i+L_i^*)\big)\rho_{t-}\Big)\,\dd B_i(t)\\
&+\quad\sum_{j\in I_p} \Big(\frac{C_j\rho_{t-}C_j^*}{\tr(C_j\rho_{t-}C_j^*)}-\rho_{t-}\Big)\Big(\dd N_j(t)-\tr(C_j\rho_{t-} C_j^*)\,\dd t\Big), 
\end{aligned}
\end{equation}
where
\begin{itemize}
	\item $I=I_b\cup I_p$ is a partition of $I$ such that $L_i=V_i$ for $i\in I_b$ and $C_j=V_j$ for $j\in I_p$,
	\item each $B_i$ is a Brownian motion,
	\item each $N_j$ is a Poisson process of intensity $t\mapsto\int_0^t \tr(C_j\rho_{s-}C_j^*)\dd s$.
\end{itemize}
\begin{rmq}\label{eq_esprhot}
The processes $\big(B_j(t)\big)_t$ and $\left(N_j(t)-\int_0^t \tr(C_j\rho_{s-}C_j^*)\dd s\right)_t$ are actually martingales. Then assuming that \eqref{eq1:SME} accepts a solution, it is easy to check that for any $t\geq 0$, the expectation of $\rho_t$ is equal to $\bar\rho_t$ whenever $\rho_0=\bar\rho_0$.
\end{rmq}

Proper definitions of these Poisson processes and proofs of existence and uniqueness of {the solution to \eqref{eq1:SME} can be found in \cite{BaGre,Ba3,Pe1,Pe2,Pe3}. A solution $(\rho_t)_t$ of Equation \eqref{eq1:SME} is called a \emph{quantum trajectory}.

Equations of the form \eqref{eq1:SME} are used to model experiments in quantum optics (photo-detection, heterodyne or homodyne interferometry), particularly for measurement and control (see \cite{Ca1,Ha1,wisemanmilburn}). They were also introduced as stochastic collapse models (see \cite{diosi_quantum_1988,gisin_quantum_1984}) and as numerical tools to compute $\overline\rho_t$ (see \cite{dalibard_wave-function_1992}). Here we are interested in the fact that they model the evolution of the system $\sys$ when continuous measurements are done on the environment~$\env$. This can be shown starting from quantum stochastic differential equations using quantum filtering \cite{Ba5,Belavkin1992,bouten_quantum_2006,Ga1,Partha1}. An approach using the notion of a priori and a posteriori states has been also developed using ``classical'' stochastic calculus (see the reference book by Barchielli and Gregoratti \cite{BaGre}, and references therein). Continuous-time limits of discrete-time models can also be considered, see \cite{Pe1,Pe2,Pe3}. 

Equation \eqref{eq1:SME} has the property that if $\rho_0$ is an extreme point of $\cD_k$, then ${\rho}_t$ is almost surely an extreme point of $\cD_k$ for any $t\in\R_+$. Since we will extensively use this property, let us make it explicit. The extreme points of $\cD_k$ are the rank-one orthogonal projectors of $\C^k$; for any $x\in\C^k\setminus\{0\}$, let $\hat x$ be its equivalence class in $\pc k$, the projective space of $\C^k$. For $\hat x\in \pc k$, let $\pi_{\hat x}$ be the orthogonal projector onto $\C x$. Then $\hat x\in \pc k\mapsto\pi_{\hat x}$ is a bijective map from $\pc k$ to the set of extreme points. Assume now that $\rho_0=\pi_{\hat x_0}$ for some $\hat x_0\in \pc k$. Then it is easy to check that $\rho_t=\pi_{\hat x_t}$ almost surely for any $t\in\R_+$, with $t\mapsto x_t$ the unique solution to the following SDE, called a \emph{stochastic Schr\"odinger equation}:
\begin{equation}\label{eq:SDEpur}
\begin{aligned}
\dd x_t&=D(x_{t-})x_{t-}\,\dd t\\
&\quad+\sum_{i\in I_b} \big(L_i-\tfrac{1}{2}v_i(t-)\,\Id\big)x_{t-}\,\dd B_i(t)\\
&\quad+\sum_{j\in I_p}  \Big(\frac{C_j}{\sqrt{n_j(t-)}}-\Id\Big) x_{t-}\,\dd N_j(t),
\end{aligned}
\end{equation}
for $x_0\in\hat x_0$ of norm one, where the operator $D(x_{t-})$ is defined as
\[D(x_{t-})=-\big(\ii H+\frac{1}{2}\sum_{i\in I_b} L_i^*L_i+\frac{1}{2}\sum_{j\in I_p}  C_j^*C_j\big)+\frac{1}{2}\sum_{i\in I_b} v_i(t-)\big(L_i-\tfrac{1}{4}\,v_i(t-)\,\Id\big)+\frac{1}{2}\sum_{j\in I_p}  n_j(t-),\]
with
\begin{equation*}
v_i(t-)=\langle x_{t-},(L_i+L_i^*)x_{t-}\rangle,\qquad
n_j(t-)=\langle x_{t-},C_j^*C_j x_{t-} \rangle=\Vert C_j x_{t-}\Vert^2.
\end{equation*}
The brackets $\langle \cdot, \cdot \rangle$ denote the scalar product in $\mathbb{C}^k$. Without possible confusion, a solution $(x_t)_t$ will be also called a quantum trajectory. Remark that $\|x_0\|=1$ implies $\|x_t\|=1$ almost surely for any $t\in\R_+$; remark also that the numerical computation of $ \rho_t$ involves only multiplications of matrices with vectors and not multiplications of matrices (this is the motivation for the use of quantum trajectories as numerical tools mentioned above).

In the physics literature, extreme points of $\cD_k$ are called pure states. In particular, the preceding paragraph shows that the evolution dictated by Eq.\ \eqref{eq1:SME} preserves pure states. It actually has also the property that quantum trajectories (solution of \eqref{eq1:SME}) tend to ``purify''. This has been formalized by Maassen and K\"ummerer in \cite{Maassen} for discrete-time quantum trajectories, and extended to the continuous-time case by Barchielli and Paganoni in \cite{Ba4}. Purification is related to the following assumption (here $A\propto B$ means there exists $\lambda\in\C$ such that $A=\lambda B$ or $\lambda A=B$. Particularly we allow for $\lambda=0$).
\hypertarget{pur}{}
\begin{description}
	\item[\pur] Any non zero orthogonal projector $\pi$ such that for all $i\in I_b$, $\pi (L_i+L_i^*)\pi \propto  \pi$  and for all $j\in I_p$, $\pi C_j^*C_j\pi\propto \pi$ has rank one.
\end{description}
As shown in \cite{Ba4}, {\pur} implies that for any $\rho_0\in\cD_k$
\begin{equation}\label{eq:purification}
\lim_{t\to\infty} \inf_{\hat y\in \pc k}\|\rho_t -\pi_{\hat y}\|=0\quad \text{almost surely.}
\end{equation}

The main goal of this article is to show how the exponential convergence of the solution $(\overline \rho_t)_t$ of Eq.\ \eqref{eq:lindblad}, induced by {\lerg}, translates for its stochastic unravelling $(\rho_t)_t$ solution of Eq.\ \eqref{eq1:SME}. We prove uniqueness of the invariant measure for continuous-time quantum trajectories assuming both {\lerg} and \pur. From \eqref{eq:purification}, under these assumptions, the invariant measure will be concentrated on pure states, so we only need to prove uniqueness of the invariant measure for $(\hat x_t)_t$ equivalence class of $( x_t)_t$ solution  of \eqref{eq:SDEpur} (since $\pc k$ is compact and the involved process is Feller, the existence of an invariant measure is obvious). The difficulty of this proof lies in the failure of usual techniques like $\varphi$-irreducibility. Note that this question has already been partially addressed in the literature: essentially, only diffusive equations have been considered, i.e.\ equations for which Eq. \eqref{eq1:SME} or \eqref{eq:SDEpur} contain no jump term (in our notation, $I_p=\emptyset$). The results of \cite{Ba4} were, to our knowledge, the most advanced ones so far. In that article, algebraic conditions on the vector fields describing the stochastic differential equation are imposed to obtain the uniqueness of the invariant measure. This allows the authors to apply directly standard results from the analysis of stochastic differential equations. Unfortunately their assumptions are hard to check for a given family of matrices $(L_i)_{i\in I_b}$.

The main result of the present paper is the following theorem.
\begin{thm}\label{thm:expo_conv_wasser1}
Assume that {\pur} and {\lerg} hold. Then the Markov process $(\hat x_t)_t$ has a unique invariant probability measure $\mu\inv$, and there exist $C>0$ and $\lambda>0$ such that for any initial distribution $\mu$ of $\hat x_0$ over $\pc k$, for all $t\geq 0$, the distribution $\mu_t$ of $\hat x_t$ satisfies
\begin{equation*}
W_1(\mu_t,\mu\inv)\leq C\e^{-\lambda t}
\end{equation*}
where $W_1$ is the Wasserstein distance of order $1$.
\end{thm}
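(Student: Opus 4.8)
The plan is to reduce the theorem to a single quantitative contraction estimate: I claim that if for any two initial distributions the laws $\mu_t,\nu_t$ of the corresponding processes satisfy $W_1(\mu_t,\nu_t)\le C\e^{-\lambda t}$ uniformly (and it suffices to treat Dirac masses at pure states, the general case following by integrating the coupling), then both conclusions are immediate. Indeed, $W_1$ is a complete metric on the probability measures over the compact set $\pc k$; applying the contraction to $\mu_{t+s}$ (law at time $t$ issued from $\mu_s$) and $\mu_t$ shows that $t\mapsto\mu_t$ is Cauchy, its limit is invariant, and any two invariant measures $\mu,\mu'$ satisfy $W_1(\mu,\mu')=W_1(\mu_t,\mu'_t)\le C\e^{-\lambda t}\to0$, giving uniqueness of $\mu\inv$ and the stated rate. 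The whole difficulty is thus concentrated in producing this contraction without any form of irreducibility.

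To build the coupling I would pass to the unnormalized (linear) dynamics. Let $M_t$ be the random propagator of the linear stochastic differential equation obtained from \eqref{eq:SDEpur} by dropping the normalizing drift and the renormalization of the jumps, so that the unnormalized solution is $\check x_t=M_t x_0$. Under a reference measure $\qq$ — for which the $B_i$ are Brownian motions and the $N_j$ have a fixed, state-independent intensity — the crucial feature is that one and the same $M_t$ transports every initial vector, while the physical law of $(\hat x_t)_t$ issued from $\hat x_0$ is recovered from $\qq$ by the change of measure with $\qq$-martingale density $\|M_t x_0\|^2$ on $\cF_t$ (note $\ee_\qq\|M_t x_0\|^2=\tr(\e^{t\Lin}\pi_{\hat x_0})=1$ since $\Lin$ is trace preserving). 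This realizes the two processes on a common probability space, driven by a common noise, as $\hat x_t=\widehat{M_t x_0}$ and $\hat y_t=\widehat{M_t y_0}$; alternatively one could couple the physical trajectories directly, sharing the $B_i$ and using a synchronous coupling of the jumps.

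The analytic core is then a contraction estimate for the coupled pair $(\widehat{M_t x_0},\widehat{M_t y_0})$ in a fixed metric $d$ on $\pc k$, for instance $d(\hat x,\hat y)^2=1-|\langle x,y\rangle|^2$. Running the \emph{physical} trajectory started from the mixed state $\tfrac12\pi_{\hat x_0}+\tfrac12\pi_{\hat y_0}$, linearity of the unnormalized flow gives that it equals $p_t\,\pi_{\widehat{M_t x_0}}+(1-p_t)\,\pi_{\widehat{M_t y_0}}$ with $p_t=\|M_t x_0\|^2/(\|M_t x_0\|^2+\|M_t y_0\|^2)$, and purification \eqref{eq:purification} forces this rank-two state to approach the pure states. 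The aim is to upgrade this qualitative collapse into $\ee_\qq[\|M_t x_0\|^2\,d(\widehat{M_t x_0},\widehat{M_t y_0})]\le C\e^{-\lambda t}$. To extract the exponential \emph{rate} I would compute the action of the coupled generator on $\varphi\big(d(\hat x,\hat y)\big)$ for a suitable increasing $\varphi$ (e.g.\ a small power), using {\pur} to guarantee that the drift of $\log d$ along the synchronous flow is strictly negative — the quantitative form of the statement that no nontrivial projector is preserved by the $(L_i+L_i^*)$ and $C_j^*C_j$ — while {\lerg} controls the relevant ergodic averages so that this instantaneous negative drift integrates to genuine exponential decay.

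Finally I would transfer the coupled contraction to $W_1(\mu_t,\nu_t)$. For a $1$-Lipschitz $f$, writing $\int f\,\dd\mu_t-\int f\,\dd\nu_t$ under $\qq$ splits it into a term bounded by $\ee_\qq[\|M_t x_0\|^2\,d(\widehat{M_t x_0},\widehat{M_t y_0})]$, handled above, and a remainder $\ee_\qq[(\|M_t x_0\|^2-\|M_t y_0\|^2)\,f(\widehat{M_t y_0})]$ coming from the \emph{mismatch of the reweightings}. I expect this remainder to be the main obstacle: the two weight martingales converge to distinct limits, so the term does not vanish by soft arguments, and controlling it is precisely where the absence of irreducibility bites. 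Here {\lerg} should be decisive through its control of the first moment, namely $\|\e^{t\Lin}\pi_{\hat x_0}-\e^{t\Lin}\pi_{\hat y_0}\|\le C\e^{-\lambda t}$, which pins down the barycenters of $\mu_t$ and $\nu_t$ exponentially fast; combined with the purification-driven concentration of both measures onto a common random limit direction, this should force the reweighted laws, hence $\mu_t$ and $\nu_t$, together at the exponential rate and complete the proof.
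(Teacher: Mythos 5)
Your overall architecture --- realize all initial points on one probability space via the common linear propagator $S_t$, reweight by the martingale densities $\|S_t x_0\|^2$, and split the Wasserstein estimate into a coupled two-point contraction plus a reweighting-mismatch term --- is the same skeleton as the paper's proof. But both of your analytic steps have genuine gaps. First, the contraction: you propose to show that {\pur} makes the drift of $\log d(\hat x_t,\hat y_t)$ strictly negative along the synchronous coupling and to integrate this. {\pur} is an algebraic condition on projectors; it yields only the \emph{asymptotic} collapse of the martingale $M_t=S_t^*S_t/\tr(S_t^*S_t)$ to a rank-one projector (Proposition \ref{prop:pur}), and gives no pointwise sign on the generator applied to $\varphi(d)$ --- locally the expected distance can increase, which is precisely why drift and irreducibility techniques fail for these models, and no Lyapunov function is exhibited or known here. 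The paper's substitute is global, not infinitesimal: from $d(\hat x,\hat y)=\|x\wedge y\|/(\|x\|\|y\|)$ one gets the pathwise bound $d(\hat x_t,\hat y_t)\le\|\wedge^2 S_t\|/\|S_t x_0\|^2$, hence after reweighting $\eemu\mu\big(d(\hat x_t,\hat y_t)\big)\le f(t):=\ee\big(\|\wedge^2 S_t\|\big)$; the function $f$ is submultiplicative because $\pp$ has independent stationary increments and $\|\wedge^2(AB)\|\le\|\wedge^2A\|\,\|\wedge^2B\|$, and $f(t)\to0$ follows from Proposition \ref{prop:pur} by dominated convergence, so Fekete's lemma alone upgrades qualitative decay to an exponential rate (Lemma \ref{lem:f}). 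Note also that this step uses \emph{only} {\pur}; your suggestion that {\lerg} enters the contraction rate through ``ergodic averages'' is misplaced --- {\lerg} plays no role in Proposition \ref{prop:convconv}.

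Second, the mismatch term, which you rightly identify as the main obstacle, is not closed by your barycenter argument: $\|\e^{t\Lin}(\pi_{\hat x_0})-\e^{t\Lin}(\pi_{\hat y_0})\|_1\to0$ pins down only the first moments of the two laws, whereas $\ee\big[(\|S_tx_0\|^2-\|S_ty_0\|^2)\,f(\widehat{S_ty_0})\big]$ involves the correlation between the weight and the position, and a measure on $\pc k$ is far from determined by its barycenter. The missing idea is to compare each trajectory not with the other but with an $\cF$-measurable surrogate built from the propagator alone: the maximum-likelihood process $\hat z_t=\mathrm{argmax}_{\hat x}\|S_tx\|$ and $\hat y_t=S_t\cdot\hat z_t$, whose definition involves no initial point. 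For bounded $\cF$-measurable functionals, Proposition \ref{prop:margin} shows that the law under $\qumu\mu$ depends on $\mu$ only through the barycenter $\rho_\mu$; combining the shift identity $\eerho\rho(f\circ\theta^t)=\eerho{\e^{t\Lin}(\rho)}(f)$ with the density factorization $\dd\pprho\rho=k\,\tr(\rho M_\infty)\,\dd\pp\ch$ of Theorem \ref{thm:M_t} yields $\big|\eerho\rho(f\circ\theta^t)-\eerho\rhoinv(f)\big|\le\|f\|_\infty\,\|\e^{t\Lin}(\rho)-\rhoinv\|_1$ (Proposition \ref{prop:convergo}), and this --- not a concentration argument --- is the unique point where {\lerg} enters, via Theorem \ref{thm:convavstate}. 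Splitting the time interval at $t/2$, approximating $\hat x_t$ by $\hat y_t^{t/2}$ through the contraction, and applying the ergodic bound to the remaining $\cF$-measurable term completes the proof; without the $\cF$-measurable estimator (or an equivalent device) your remainder stays genuinely uncontrolled, since it is exactly where the absence of irreducibility bites, as you yourself anticipated.
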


This theorem is more general than previous similar results in different ways. First, we consider stochastic Schr\"odinger equations involving both Poisson and Wiener processes. Second, our assumptions are standard for quantum trajectories and are easy to check for a given family of operators $\big(H,(L_i)_{i\in I_b},(C_j)_{j\in I_p}\big)$. Last, we prove an exponential convergence towards the invariant measure. As a byproduct, we also provide a simple proof of the purification expressed in Eq.\ \eqref{eq:purification} (see Proposition \ref{prop:pur}). To complete the picture, assuming only \pur, we show that {\lerg} is necessary. We also provide a complete characterization of the set of invariant measures of $(\hat x_t)$ whenever {\lerg} does not hold (see Proposition \ref{prop:set_inv_measure}). Arguments in Sections \ref{sec:inv} and \ref{app:inv_measure} are adaptations of \cite{Inv1}, where similar results for discrete-time quantum trajectories are considered. 
\medskip

The paper is structured as follows. In Section \ref{model}, we give a precise description of the model of quantum trajectories with a proper definition of the underlying probability space. In particular, we introduce a new martingale which is central to our proofs.  In Section \ref{sec:inv}, we prove Theorem \ref{thm:expo_conv_wasser1}. In Section~\ref{app:inv_measure} we derive the full set of invariant measures assuming only \pur. In Section \ref{sec:pur_not_nec} we show that {\pur} is not necessary even if {\lerg} holds. In Section \ref{sec:ex}, we provide some examples of explicit invariant measures. In Section \ref{sec:classical_MC} we provide an encoding of any classical finite state Markov jump process into a stochastic master equation.

\section{Construction of the model}\label{model}

\subsection{Construction of quantum trajectories}

In this section we fix the notations and introduce the probability space we use to study $(\hat x_t)_t$. First, for an element $x\neq 0$ of $\C^k$, and for an operator $A$ with $Ax\neq 0$ we denote
\[A\cdot \hat x=\widehat{Ax}.\]
We consider the following distance on $\pc k$:
\begin{equation} \label{eq_defdistance}
	d(\hat x,\hat y)=\sqrt{1- |\langle x,y\rangle|^2\,},
\end{equation}
for all $\hat x,\hat y\in \pc k$, where $x$ and $y$ are norm-one representatives of $\hat x$ and $\hat y$ respectively. 
We equip $\pc k$ with the associated Borel $\sigma$-algebra denoted by $\mathcal B$. 

Now we introduce a stochastic process with values in $\mc k$. Let $\filtreP$ be a filtered probability space with standard brownian motions $W_i$ for $i\in I_b$, and standard Poisson processes $N_j$ for $j\in I_p$, such that the full family $\big(W_i, N_j; \,i\in I_b, j\in I_p \big)$ is independent. The filtration $(\cF_t)_t$ is assumed to satisfy the standard conditions, and we denote $\cF_{\infty}$ by $\cF$ and the processes $\big(W_i(t)\big)_t$ and $\big(N_j(t)-t\big)_t$ are $(\cF_t)_t$-martingales under $\pp$. We denote by $\ee$ the expectation with respect to~$\pp$.

On $\filtreP$, for $s\in\R_+$, let $(S_t^s)_{t\in[s,\infty)}$ be the solution to the following SDE:
\begin{equation}\label{eq_defS}
\dd S_{t}^s= \big(K+{\textstyle {\frac{\# I_p}2}}\,{\Id}\big)S_{t-}^{s}\,\dd t+\sum_{i\in I_b} L_iS_{t-}^{s}\,\dd W_i(t)+\sum_{j\in I_p}(C_j-\Id)S_{t-}^{s}\,\dd N_j(t),\qquad S_s^s=\Id
\end{equation}
($\# I_p$ is the cardinal of $I_p$), where $$K=-\ii H-\frac{1}{2}(\sum_{i\in I_b} L_i^*L_i+\sum_{j\in I_p} C_j^*C_j).$$ Since standard Cauchy--Lipschitz conditions are fulfilled, the SDE defining $(S_t^s)_t$ has indeed a unique (strong) solution. We denote $S_t:=S_t^0$.
Note that for $s$ fixed the process $(S_t^s)_t$ is independent of $\cF_s$, and we have that for all $0\leq r\leq s\leq t$
\[S_t^sS_s^r=S_t^r.\]
In addition, for any $\rho\in\cD_k$, let $(Z_t^\rho)_t$ be the positive real-valued process defined by
\[Z_t^\rho=\tr(S_t^*S_t\rho),\]
and let $(\rho_t)_{t}$ be the $\cD_k$-valued process defined by
\[\rho_t=\frac{S_t\rho S_t^*}{\tr(S_t\rho S_t^*)}\]
if $Z_t^\rho\neq 0$, taking an arbitrarily fixed value whenever $Z_t^\rho=0$ (this value will always appear with probability zero in the sequel).

The following results on the properties of $(Z_t^\rho)_t$ were proven in \cite{Ba3}. We give short proofs adapted to our restricted setting where the Hilbert space is finite-dimensional, and $I=I_b\cup I_p$ is a finite set. 
\begin{lem}\label{lem_Zmg} For any $\rho\in\cD_k$, the stochastic process $(Z_t^\rho)_t$ is the unique solution of the SDE
\[\dd Z_t^\rho=Z_{t-}^\rho\Big(\sum_{i\in I_b}\tr\big((L_i+L_i^*)\rho_{t-}\big)\dd W_i(t)+\sum_{j\in I_p} \big(\tr(C_j^*C_j\rho_{t-})-1\big)\big(\dd N_j(t)-\dd t\big)\Big),\quad Z_0^\rho=1.\]
Moreover, $(Z_t^\rho)_{t}$ is a nonnegative martingale under $\pp$.
\end{lem}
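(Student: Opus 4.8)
The plan is to handle the three assertions—the identification of the SDE, uniqueness, and the martingale property—separately, the first two by direct It\^o/linear-SDE arguments and the last by relating the second moment of $S_t$ to the semigroup $\e^{t\Lin}$.

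First I would compute $\dd(S_t^*S_t)$ via the product rule for the semimartingale $S_t$, keeping track of its Brownian and Poisson parts. The terms $S_{t-}^*\,\dd S_t$ and $(\dd S_t^*)\,S_{t-}$ produce the drift $S_{t-}^*\big(K+K^*+\#I_p\,\Id\big)S_{t-}\,\dd t$; the Brownian quadratic covariation produces $\sum_{i\in I_b}S_{t-}^*L_i^*L_iS_{t-}\,\dd t$; and at a jump time of $N_j$ the jump of $S_t^*S_t$ is $S_{t-}^*(C_j^*C_j-\Id)S_{t-}$. Since $K+K^*=-\sum_{i}L_i^*L_i-\sum_j C_j^*C_j$, the Brownian covariation cancels the $L_i^*L_i$ part of the drift, and after splitting $\dd N_j=(\dd N_j-\dd t)+\dd t$ the residual drift $\sum_j S_{t-}^*(\Id-C_j^*C_j)S_{t-}\,\dd t$ is exactly cancelled by the compensator of the jump terms. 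This leaves the driftless equation
\[\dd(S_t^*S_t)=\sum_{i\in I_b}S_{t-}^*(L_i+L_i^*)S_{t-}\,\dd W_i(t)+\sum_{j\in I_p}S_{t-}^*(C_j^*C_j-\Id)S_{t-}\,\big(\dd N_j(t)-\dd t\big).\]
Applying $\tr(\,\cdot\,\rho)$, and using cyclicity together with $S_{t-}\rho S_{t-}^*=Z_{t-}^\rho\,\rho_{t-}$ to write $\tr(S_{t-}^*AS_{t-}\rho)=Z_{t-}^\rho\,\tr(A\rho_{t-})$ for $A=L_i+L_i^*$ and $A=C_j^*C_j-\Id$ (with $\tr\rho_{t-}=1$), yields the claimed SDE for $Z_t^\rho$.

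For uniqueness I would note that $(\rho_{t-})_t$ is determined by $(S_t)_t$ alone, so the equation for $Z^\rho$ is a \emph{linear} SDE $\dd Z_t=Z_{t-}\,\dd Y_t$ whose driving coefficients $\tr((L_i+L_i^*)\rho_{t-})$ and $\tr(C_j^*C_j\rho_{t-})-1$ are adapted and bounded, $\rho_{t-}$ ranging in the compact set $\cD_k$. The coefficients are then globally Lipschitz in $Z$, so the strong solution with $Z_0^\rho=1$ exists and is unique by the standard theory for SDEs driven by Brownian motions and Poisson processes; equivalently, it is the Dol\'eans--Dade exponential $\mathcal E(Y)_t$.

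Nonnegativity is immediate from $Z_t^\rho=\tr(S_t\rho S_t^*)\ge0$, since $S_t\rho S_t^*$ is positive semidefinite. The driftless equation above shows $Z^\rho$ is a local martingale, hence (being nonnegative) a supermartingale; to promote it to a true martingale I would show its expectation is constant. Running the same product-rule computation on $S_t\rho S_t^*$, its drift turns out to be precisely $\Lin(S_{t-}\rho S_{t-}^*)\,\dd t$, so $\psi_t:=\E[S_t\rho S_t^*]$ solves $\dot\psi_t=\Lin(\psi_t)$ with $\psi_0=\rho$, giving $\psi_t=\e^{t\Lin}(\rho)$ and $\E[Z_t^\rho]=\tr\,\e^{t\Lin}(\rho)=\tr\rho=1$ by trace preservation; a nonnegative supermartingale whose expectation equals its initial value is a martingale. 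The one genuinely technical point—the main obstacle—is justifying the exchange of expectation and stochastic integration that yields this ODE for $\psi_t$, i.e.\ that the local-martingale parts of $\dd(S_t\rho S_t^*)$ have zero mean. This rests on the moment bound $\E[\sup_{s\le t}\|S_s\|^2]<\infty$, which holds because $S_t$ solves a linear SDE with constant coefficients and follows from a standard Gronwall/Burkholder--Davis--Gundy estimate; with it, the relevant integrands are in $L^2$ and the passage to the ODE is routine, as is the rest of the It\^o bookkeeping.
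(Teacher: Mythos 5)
Your proof is correct, and for the first two assertions it is essentially the paper's argument made explicit: the paper disposes of the SDE identification with the phrase ``direct application of the It\^o formula'' (your product-rule computation on $S_t^*S_t$, with the cancellation between $K+K^*$, the Brownian covariation and the Poisson compensator, is exactly what is meant), and it obtains uniqueness from Cauchy--Lipschitz conditions, valid because $(\rho_{t-})_t$ ranges in the compact set $\cD_k$ so the coefficients are bounded and adapted --- the same linear/Dol\'eans--Dade observation you make. Where you genuinely diverge is in upgrading the nonnegative local martingale to a true martingale: the paper cites the criterion of \cite[Theorem 12]{Kabanov}, applicable because the jump coefficients $\tr(C_j^*C_j\rho_{t-})$ are nonnegative and bounded, whereas you argue that a nonnegative supermartingale with constant expectation is a martingale, and obtain $\ee(Z_t^\rho)=1$ by showing directly under $\pp$ that $\psi_t=\ee(S_t\rho S_t^*)$ solves $\dot\psi_t=\Lin(\psi_t)$, hence $\ee(Z_t^\rho)=\tr\e^{t\Lin}(\rho)=1$ by trace preservation. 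Your route is self-contained and elementary, and it has the side benefit of proving the identity \eqref{exppsrhos} without passing through the change of measure (the paper derives \eqref{exppsrhos} afterwards via $\pprho\rho$ and Remark \ref{eq_esprhot}, which presupposes this lemma, so your direct derivation is the logically cleaner one); the price is the moment bound $\ee\big(\sup_{s\leq t}\|S_s\|^2\big)<\infty$ needed to annihilate the local-martingale terms, which, as you say, is a routine Gronwall/Burkholder--Davis--Gundy estimate for a linear SDE with constant coefficients driven by finitely many Brownian motions and unit-rate Poisson processes. The paper's citation is shorter and covers general density processes with bounded characteristics. One point you handle implicitly but correctly: under $\pp$ (unlike under $\pprho\rho$) the event $\{Z_{t-}^\rho=0\}$ may have positive probability, yet the identity $S_{t-}\rho S_{t-}^*=Z_{t-}^\rho\,\rho_{t-}$ used to rewrite the coefficients still holds there, since a positive semidefinite matrix with zero trace vanishes, so the arbitrary convention for $\rho_{t-}$ on that event is immaterial.
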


\begin{proof}
The fact that $(Z_t^\rho)_t$ verifies the given SDE is a direct application of the It\^o formula. Since $(\rho_t)_t$ takes its values in the compact space $\cD_k$, that SDE verifies standard Cauchy--Lipschitz conditions, ensuring the uniqueness of the solution. Since the processes $\big(W_i(t)\big)_t$ and $\big(N_j(t)-t\big)_t$ are $\pp$-martingales, it follows that $(Z_t^\rho)_t$ is a $\pp$-local martingale. Since $\tr(C_j^*C_j\rho)\geq 0$ for any $j\in I_p$ and $\rho\in \cD_k$, and $(\rho_t)_t$ takes value in the compact space $\cD_k$, it follows from \cite[Theorem 12]{Kabanov} that $(Z_t^\rho)_{t\in [0,T]}$ is a $\pp$-nonnegative martingale for all $T$.
\end{proof}
For any $\rho\in\cD_k$, we define a probability $\pprho\rho_t$ on $(\Omega,\cF_t)$:
\begin{equation} \label{eq_defprho}
	\dd\pprho\rho_t=Z_t^\rho \,\dd\pp|_{\cF_t}.
\end{equation}
Since $(Z_t^\rho)_{t}$ is a $\pp$-martingale from Lemma \ref{lem_Zmg}, the family $(\pprho{\rho}_t)_t$ is consistent, that is $\pprho\rho_t(E)=\pprho\rho_s(E)$ for $t\geq s$ and $E\in\cF_s$. Kolmogorov's extension theorem defines a unique probability on $(\Omega,\cF_\infty)$, which we denote by $\pprho\rho$. We will denote by $\eerho\rho$ the expectation with respect to $\pprho\rho$.

The following proposition makes explicit the relationship between $\pp$ and $\pprho\rho$. It follows from a direct application of Girsanov's change of measure Theorem (see \cite[Theorems III.3.24 and III.5.19]{JacodShiryaev}). For all $i\in I_b$ and $t\in \R_+$, let
\[B_i^{\rho}(t)=W_i(t)-\int_0^t\tr\big((L_i+L_i^*)\rho_{s-}\big)\,\dd s.\]
\begin{prop}\label{prop:girsa} Let $\rho\in\cD_k$. Then, with respect to $\pprho\rho$, the processes $\{B_i^\rho\}_{i\in I_b}$ are independent Wiener processes and the processes $\{N_j\}_{j\in I_p}$ are point processes of respective stochastic intensity $\{t\mapsto \tr(C_j^*C_j\rho_{t-})\}_{j\in I_p}$.
\end{prop}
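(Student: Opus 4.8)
The plan is to derive Proposition~\ref{prop:girsa} as a direct application of Girsanov's theorem for semimartingales, using the density process $(Z_t^\rho)_t$ that was shown in Lemma~\ref{lem_Zmg} to be a nonnegative $\pp$-martingale with $Z_0^\rho=1$. Since $\pprho\rho$ is defined by $\dd\pprho\rho_t = Z_t^\rho\,\dd\pp|_{\cF_t}$, the change of measure is governed entirely by the multiplicative structure of $Z_t^\rho$, and the content of the proposition is simply the identification of the transformed characteristics of the driving processes under the new measure.

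First I would write the martingale $(Z_t^\rho)_t$ in its stochastic exponential form. The SDE from Lemma~\ref{lem_Zmg} expresses $Z_t^\rho$ as a Dol\'eans--Dade exponential with a continuous part driven by the $W_i$ with integrands $\tr\big((L_i+L_i^*)\rho_{t-}\big)$, and a purely discontinuous part driven by the compensated Poisson processes $\big(N_j(t)-t\big)_t$ with multiplicative jump factor $\tr(C_j^*C_j\rho_{t-})$. Writing $Z^\rho = \mathcal{E}(M)$ for the appropriate local martingale $M$, the Girsanov theorem (in the form of \cite[Theorems III.3.24 and III.5.19]{JacodShiryaev}) tells us how the $\pp$-characteristics of each driving semimartingale transform under the measure change with density $Z^\rho$. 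For the continuous martingales $W_i$, the drift correction is the predictable bracket $\dd\langle W_i, M\rangle_t$; computing this bracket gives precisely $\tr\big((L_i+L_i^*)\rho_{t-}\big)\,\dd t$, so that $B_i^\rho(t) = W_i(t) - \int_0^t \tr\big((L_i+L_i^*)\rho_{s-}\big)\,\dd s$ is a continuous $\pprho\rho$-local martingale. Since the $W_i$ are independent standard Brownian motions under $\pp$ and the quadratic variation is invariant under the measure change, L\'evy's characterization shows the $B_i^\rho$ are independent standard $\pprho\rho$-Wiener processes.

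For the Poisson processes $N_j$, I would invoke the part of Girsanov's theorem governing random measures: the $\pp$-intensity of $N_j$ is the constant $1$ (i.e.\ the compensator is $\dd t$), and under a measure change with exponential density the new stochastic intensity is obtained by multiplying the old one by the predictable jump-density factor appearing in $Z^\rho$, namely $\tr(C_j^*C_j\rho_{t-})$. This yields the stochastic intensity $t\mapsto \tr(C_j^*C_j\rho_{t-})$ claimed in the statement. The independence of the continuous and jump-driven parts is preserved because the corresponding components of $M$ are orthogonal local martingales (the mutual brackets between the $W_i$-part and the $N_j$-part vanish, as Brownian and Poisson terms have no common jumps and zero covariation).

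The main obstacle is mostly bookkeeping rather than conceptual: one must verify that the integrability and boundedness conditions required to apply \cite[Theorems III.3.24 and III.5.19]{JacodShiryaev} are met, which follows from the fact that $(\rho_t)_t$ takes values in the compact set $\cD_k$ and the coefficients $\tr\big((L_i+L_i^*)\rho_{t-}\big)$ and $\tr(C_j^*C_j\rho_{t-})$ are therefore uniformly bounded. Care is also needed to ensure the intensity $\tr(C_j^*C_j\rho_{t-})$ is genuinely a (nonnegative, predictable) stochastic intensity and that the resulting point process is well-defined even where this intensity degenerates to zero; but since positivity $\tr(C_j^*C_j\rho)\geq 0$ holds throughout $\cD_k$ and the degenerate set has probability zero in the sequel, this is handled exactly as in the martingale argument of Lemma~\ref{lem_Zmg}. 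I expect the whole proof to reduce to quoting the cited theorems once the density process is identified in exponential form.
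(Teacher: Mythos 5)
Your proposal is correct and follows exactly the paper's route: the paper proves Proposition~\ref{prop:girsa} precisely by noting it is a direct application of Girsanov's theorem in the form of \cite[Theorems III.3.24 and III.5.19]{JacodShiryaev}, with $(Z_t^\rho)_t$ from Lemma~\ref{lem_Zmg} serving as the density process. Your elaboration of the exponential form of $Z^\rho$, the bracket computation for the drift correction, the jump-density factor giving the intensity $\tr(C_j^*C_j\rho_{t-})$, and the boundedness of the coefficients via compactness of $\cD_k$ is a faithful (and more detailed) unpacking of that same argument.
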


The process $(\rho_t)_t$ considered under $\pprho\rho$ models the evolution of a Markov open quantum system subject to indirect measurements. We refer the reader to \cite{BaGre,BreuerPetru,Ca1} and references therein for a more detailed discussion of this interpretation.
 
From It\^o calculus, $(\rho_t)_t$ is solution of the SDE
\begin{equation}
\begin{aligned}
\dd{\rho}_t&=\Lin(\rho_{t-})\dd t\\
&\quad+\sum_{i\in I_b}\Big(L_i\rho_{t-}+\rho_{t-} L_i^*-\tr\big(\rho_{t-}(L_i+L_i^*)\big)\rho_{t-}\Big)\dd B_i^\rho(t)\\\
&\quad+\sum_{j\in I_p} \Big(\frac{C_j\rho_{t-}C_j^*}{\tr(C_j\rho_{t-}C_j^*)}-\rho_{t-}\Big)\Big(\dd N_j(t)-\tr(C_j\rho_{t-} C_j^*) \,\dd t\Big).
\end{aligned}
\end{equation}
Proposition \ref{prop:girsa} then implies that with respect to $\pprho\rho$, the process $(\rho_t)_t$ is indeed the unique solution of \eqref{eq1:SME} with $\rho_0=\rho$. Similarly, if $\rho_0=\pi_{\hat {x}}$ for some $\hat x\in \pc k$, then with respect to $\pprho{\pi_{\hat x}}$, the process $\big(\frac{S_t {x}}{\|S_t{x}\|}\big)_t$ is the solution of \eqref{eq:SDEpur} with $x$ any norm one representative of $\hat {x}$.

Remark also that for any $\rho\in \mathcal D_k$, using \eqref{eq_defprho}, one has from Remark \ref{eq_esprhot}
\begin{equation} \label{exppsrhos}
	\ee(S_t\rho S_t^*) = \ee(\rho_t Z_t^\rho)=\eerho\rho(\rho_t)=\e^{t\Lin}(\rho).
\end{equation}

Our strategy of proof is based on the study of the joint distribution of $S_t$ and a random initial state $\hat x$. To this end, we consider the product  space $\Omega\times \pc k$ equipped with the filtration $(\cF_t\otimes\mathcal B)_t$ and the full $\sigma$-algebra $\cF\otimes\mathcal B$. For any probability measure $\mu$ on $\pc k$, and for all $E\in\cF$ and $A\in\mathcal B$, let
\begin{equation*}
\qumu\mu(E\times A)=\int\pprho{\pi_{\hat x}}(E)\,\ind_{\hat x \in A}\,\dd\mu(\hat x).
\end{equation*}
We will denote by $\eemu\mu$ the expectation with respect to $\qumu\mu$. Note that $\dd \pprho{\pi_{\hat x}}_t = \|S_t x\|^2 \,\dd \pp$ for any $\hat x\in \pc k$, so that $\pprho{\pi_{\hat x}}(\{S_tx=0\})=0$ for all $x\in \hat x$. Therefore
\[\qumu\mu\big(\{S_tx=0\}\big)=0\]
and there exists a process $(\hat x_t)_t$ for which
\[\hat x_t=S_t\cdot x\]
holds almost surely. It has the same distribution as the image by the map $x\mapsto \hat x$ of the solution $(x_t)_t$ to \eqref{eq:SDEpur} with $x_0\in \hat x$, $\|x_0\|=1$.

The following proposition shows that the laws of any $\cF$-measurable random variables are given by a marginal of $\qumu\mu$. For a probability measure $\mu$ on $\pc k$, we define $$\rho_\mu:=\mathbb E_\mu(\pi_{\hat x}).$$
\begin{prop}\label{prop:margin}
Let $\mu$ be a probability measure on $\pc k$, then $\rho_\mu \in \cD_k$ and for any $E\in\cF$,
\[\qumu\mu(E\times \pc k)=\pprho{\rho_\mu}(E).\]
\end{prop}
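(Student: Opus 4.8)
The plan is to verify the two claims separately, starting with the simpler statement that $\rho_\mu\in\cD_k$, then proving the measure-theoretic identity by reducing it to a pointwise statement and integrating. For the first claim, $\rho_\mu=\int_{\pc k}\pi_{\hat x}\,\dd\mu(\hat x)$ is an average of rank-one orthogonal projectors over a probability measure; each $\pi_{\hat x}$ lies in the convex, compact set $\cD_k$, and $\cD_k$ is closed under taking barycenters, so the integral lands in $\cD_k$. Concretely, $\rho_\mu$ is positive semidefinite because each $\pi_{\hat x}\geq 0$ and positivity is preserved by integration against a positive measure, and $\tr\rho_\mu=\int \tr\pi_{\hat x}\,\dd\mu(\hat x)=\int 1\,\dd\mu(\hat x)=1$. (One should note that $\hat x\mapsto\pi_{\hat x}$ is continuous, hence measurable, so the Bochner integral over the finite-dimensional space $\mc k$ is well defined.)

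For the main identity, the key observation is the linearity of the map $\rho\mapsto Z_t^\rho$ in $\rho$, together with the linearity of the defining formula $Z_t^\rho=\tr(S_t^*S_t\rho)$. First I would fix $E\in\cF$ and, using the definition of $\qumu\mu$, write
\begin{equation*}
\qumu\mu(E\times\pc k)=\int_{\pc k}\pprho{\pi_{\hat x}}(E)\,\dd\mu(\hat x).
\end{equation*}
The next step is to express each factor $\pprho{\pi_{\hat x}}(E)$ via the martingale change of measure. Since $E\in\cF=\cF_\infty$, I would approximate by $E\in\cF_t$ and use $\dd\pprho{\pi_{\hat x}}_t=Z_t^{\pi_{\hat x}}\,\dd\pp|_{\cF_t}$ from \eqref{eq_defprho}, giving $\pprho{\pi_{\hat x}}(E)=\ee(\ind_E\,Z_t^{\pi_{\hat x}})$ for $E\in\cF_t$, where $Z_t^{\pi_{\hat x}}=\tr(S_t^*S_t\,\pi_{\hat x})$.

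Now the heart of the argument is an interchange of the $\dd\mu$-integral with the $\pp$-expectation, followed by exploiting linearity in the state:
\begin{equation*}
\int_{\pc k}\ee\big(\ind_E\,\tr(S_t^*S_t\,\pi_{\hat x})\big)\,\dd\mu(\hat x)
=\ee\Big(\ind_E\,\tr\Big(S_t^*S_t\int_{\pc k}\pi_{\hat x}\,\dd\mu(\hat x)\Big)\Big)
=\ee\big(\ind_E\,\tr(S_t^*S_t\,\rho_\mu)\big)=\ee(\ind_E\,Z_t^{\rho_\mu}).
\end{equation*}
By \eqref{eq_defprho} applied with $\rho=\rho_\mu$, the last expression equals $\pprho{\rho_\mu}_t(E)=\pprho{\rho_\mu}(E)$ for $E\in\cF_t$. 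The Fubini/Tonelli interchange is justified because the integrand is nonnegative (the $Z_t^{\pi_{\hat x}}$ are nonnegative by Lemma \ref{lem_Zmg}) and jointly measurable in $(\omega,\hat x)$, so no integrability obstruction arises; the linearity step simply pulls the finite-dimensional trace and the $\dd\mu$-average through each other. Finally, to pass from $\cF_t$ to all of $\cF$, I would note that both set functions $E\mapsto\qumu\mu(E\times\pc k)$ and $E\mapsto\pprho{\rho_\mu}(E)$ are probability measures on $\cF$ agreeing on the algebra $\bigcup_t\cF_t$, which generates $\cF$ and is stable under intersection; a monotone-class (Dynkin) argument then extends the equality to all of $\cF$.

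The main obstacle is essentially bookkeeping rather than a deep difficulty: one must make sure the Bochner integral $\int\pi_{\hat x}\,\dd\mu(\hat x)$ genuinely commutes with $S_t^*S_t(\cdot)$ and with the $\pp$-expectation, and that the consistency of $(\pprho{\rho}_t)_t$ (already established via Kolmogorov's extension) lets one upgrade the $\cF_t$-identity to an $\cF$-identity. Since all spaces are finite-dimensional and all integrands nonnegative, these interchanges are routine, and the linearity $\rho_\mu=\ee_\mu(\pi_{\hat x})$ is exactly the structural fact that makes the marginal of $\qumu\mu$ coincide with the single-state measure $\pprho{\rho_\mu}$.
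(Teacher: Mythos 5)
Your proposal is correct and follows essentially the same route as the paper: fix $E\in\cF_t$, apply Fubini together with the linearity of $\rho\mapsto\tr(S_t^*S_t\rho)$ to identify $\int\pprho{\pi_{\hat x}}(E)\,\dd\mu(\hat x)$ with $\int_E Z_t^{\rho_\mu}\,\dd\pp=\pprho{\rho_\mu}_t(E)$, and then extend from $\bigcup_t\cF_t$ to all of $\cF$. Your Dynkin-system extension step is just a slightly more explicit version of the paper's appeal to the uniqueness part of Kolmogorov's extension theorem, so the two arguments are the same in substance.
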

\begin{proof}
The fact that $\rho_\mu\in \cD_k$ follows from the positivity and linearity of the expectation. Concerning the second part, let $t\geq0$ and $E\in\cF_t$, then
\begin{equation*}
\qumu\mu(E\times \pc k)=\int\pprho{\pi_{\hat x}}(E)\, \dd\mu(x)=\int\!\int_{E}\tr( S_t^*S_t \pi_{\hat x}) \,\dd\pp \,\dd\mu(x).
\end{equation*}
Fubini's Theorem implies
\begin{equation*}
\qumu\mu(E\times \pc k)=\int_E\tr(S_t^*S_t\rho_\mu)\,\dd\mathbb P=\int_E Z_t^{\rho_\mu}\,\dd\pp=\pprho{\rho_\mu}_t(E).
\end{equation*}
The uniqueness of the extended measure in Kolmogorov's extension Theorem yields the proposition.
\end{proof}

\begin{rmq}
 Any $\cF$-measurable random variables $X$ can be extended canonically to a $\cF\otimes \mathcal B$-measurable random variables setting $X(\omega,\hat x)=X(\omega)$. Proposition \ref{prop:margin} then implies that the distribution of a $\cF$-measurable random variable under $\qumu\mu$ depends on $\mu$ only through $\rho_\mu$. The central idea of our proof is that assumption {\pur} will allow us to find a $\cF$-measurable process approximating $(\hat x_t)_t$. The $\cF$-measurability of the process will then imply that it inherits some ergodicity properties from assumption \lerg.
\end{rmq}

\begin{rmq} \label{rmq_muinvrhoinv}
If $\mu\inv$ is an invariant measure for the Markov chain $(\hat x_t)_t$, then with the above notation, $\rho_{\mu\inv}$ is an invariant state for $(\e^{t\Lin})_t$. In particular, if {\lerg} holds then $\rho_{\mu\inv}=\rho\inv$. This follows from the identities
\[\e^{t\Lin}(\rho_{\mu\inv}) = \int \e^{t\Lin}(\pi_{\hat x})\, \dd\mu\inv = \int S_t \pi_{\hat x} S_t^* \, \dd\pp \, \dd\mu\inv(\hat x)= \int \pi_{\hat x}\, \dd\mu\inv(\hat x)= \rho_{\mu\inv}\]
where the second identity uses \eqref{exppsrhos}.
\end{rmq}

\subsection{Key martingale}
The following process is the key to construct a $\cF$-measurable process approximating $(\hat x_t)_t$.
For any $t\geq 0$, let
\begin{equation}\label{eq:defmt}M_t=\frac{S_t^*S_t}{\tr(S_t^*S_t)},
\end{equation}
whenever $\tr(S_t^*S_t)\neq 0$, and give $M_t$ a fixed arbitrary value whenever $\tr(S_t^*S_t)=0$. Since, by definition, for any $\rho\in \cD_k$, $\pprho\rho\big(\{\tr S_t^*S_t=0\}\big)=0$, the arbitrary definition of $M_t$ on this set of vanishing probability is irrelevant. It turns out that with respect to $\pprho{\Id/k}$, $(M_t)_t$ is a martingale. For convenience we write $\pp\ch=\pprho{\Id/k}$ and similarly for any other $\rho$-dependent object, whenever $\rho=\Id/k$.

\begin{thm}\label{thm:M_t}
With respect to $\pp\ch$, the stochastic process $(M_t)_t$ is a bounded martingale. Therefore, it converges $\pp\ch$-almost surely and in $\Lp 1$ to a random variable $M_\infty$. Moreover, for any $\rho\in \cD_k$,
\[\dd\pprho\rho=k\,\tr(\rho M_{\infty})\, \dd\pp\ch,\]
and $(M_t)_t$ converges almost surely and in $\Lp 1$ to $M_{\infty}$ with respect to $\pprho\rho$.
\end{thm}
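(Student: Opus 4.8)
The cornerstone of the argument is the $\pp\ch$-martingale property of $(M_t)_t$, from which the remaining assertions follow. Boundedness is immediate, since each $M_t$ lies in the compact set $\cD_k$, so $\|M_t\|\le 1$ uniformly. For the martingale property the key algebraic fact is that $\ee(S_t^*S_t)=\Id$ for every $t$: taking the trace in \eqref{exppsrhos} and using that $\e^{t\Lin}$ is trace preserving gives $\tr\big(\ee(S_t^*S_t)\,\rho\big)=\tr\rho$ for all $\rho\in\cD_k$, whence $\ee(S_t^*S_t)=\Id$. Then, for $s\le t$, I would use the cocycle identity $S_t=S_t^sS_s$ with $S_t^s$ independent of $\cF_s$ (and governed by the same SDE over a duration $t-s$, so that $\ee\big((S_t^s)^*S_t^s\big)=\Id$ by the same computation) to get
\[\ee(S_t^*S_t\mid\cF_s)=S_s^*\,\ee\big((S_t^s)^*S_t^s\big)\,S_s=S_s^*S_s.\]
Combining this with the Bayes formula for the change of measure $\dd\pp\ch_t=Z_t\ch\,\dd\pp$, where $Z_t\ch=\tfrac1k\tr(S_t^*S_t)$ is a $\pp$-martingale by Lemma~\ref{lem_Zmg}, and with the simplification $M_tZ_t\ch=\tfrac1k S_t^*S_t$, yields
\[\eech(M_t\mid\cF_s)=\frac{\ee\big(\tfrac1k S_t^*S_t\mid\cF_s\big)}{Z_s\ch}=\frac{S_s^*S_s}{\tr(S_s^*S_s)}=M_s.\]
Applying the bounded (hence uniformly integrable) martingale convergence theorem entrywise in $\mc k$ then produces $M_\infty$ with $M_t\to M_\infty$ $\pp\ch$-almost surely and in $\Lp1(\pp\ch)$; as an a.s.\ limit of density matrices, $M_\infty\in\cD_k$ almost surely.

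To obtain the Radon--Nikodym identity I would first record it at finite times: on $\cF_t$,
\[\frac{\dd\pprho\rho_t}{\dd\pp\ch}=\frac{Z_t^\rho}{Z_t\ch}=\frac{\tr(S_t^*S_t\rho)}{\tfrac1k\tr(S_t^*S_t)}=k\,\tr(\rho M_t).\]
The process $D_t:=k\,\tr(\rho M_t)$ is therefore a nonnegative $\pp\ch$-martingale with $\eech(D_t)=1$, and since $M_t\to M_\infty$ in $\Lp1(\pp\ch)$ it converges in $\Lp1(\pp\ch)$ to $D_\infty:=k\,\tr(\rho M_\infty)$. Uniform integrability closes this martingale, so $D_t=\eech(D_\infty\mid\cF_t)$. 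To upgrade the identity to the full $\sigma$-algebra $\cF$, I would note that $D_\infty\,\dd\pp\ch$ is a probability measure (of total mass $\eech(D_\infty)=1$) agreeing with $\pprho\rho$ on each $\cF_t$: for $A\in\cF_t$, $\int_A D_\infty\,\dd\pp\ch=\int_A D_t\,\dd\pp\ch=\pprho\rho_t(A)=\pprho\rho(A)$. Since $\bigcup_t\cF_t$ is a generating $\pi$-system, the two probability measures coincide on $\cF$, giving $\dd\pprho\rho=k\,\tr(\rho M_\infty)\,\dd\pp\ch$.

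Finally, the convergence under $\pprho\rho$ is essentially a corollary. Because $\pprho\rho\ll\pp\ch$ on $\cF$, the $\pp\ch$-almost sure convergence $M_t\to M_\infty$ also holds $\pprho\rho$-almost surely; and since $\|M_t-M_\infty\|\le 2$ uniformly (both $M_t$ and $M_\infty$ belong to $\cD_k$), dominated convergence under the probability measure $\pprho\rho$ promotes this to $\Lp1(\pprho\rho)$ convergence.

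I expect the delicate point to be not a single computation but the logical ordering around the passage to $\cF=\cF_\infty$. The identity on the full $\sigma$-algebra requires ruling out a singular part in the Lebesgue decomposition of $\pprho\rho$ with respect to $\pp\ch$, and this is exactly what the $\Lp1(\pp\ch)$-convergence of $M_t$ (equivalently, the uniform integrability of $D_t$) provides. Thus the bounded martingale convergence of the first step is not merely a convergence statement but the precise ingredient that makes the closure $D_t=\eech(D_\infty\mid\cF_t)$, and hence the final Radon--Nikodym identity, valid.
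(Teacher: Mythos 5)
Your proof is correct, but it establishes the crucial martingale property by a genuinely different route than the paper. The paper works under $\pp\ch$ from the start: it rewrites the SDE for $(S_t)_t$ in terms of the Girsanov-transformed drivers $B_i\ch$, applies the It\^o formula to $M_t$ to exhibit an explicit SDE whose right-hand side consists only of stochastic integrals against $\pp\ch$-martingales, concludes that $(M_t)_t$ is a local martingale, and then upgrades to a true martingale via boundedness. You instead work under $\pp$: you prove $\ee(S_t^*S_t)=\Id$ from trace preservation via \eqref{exppsrhos}, use the cocycle identity $S_t=S_t^sS_s$ together with the independence of $(S_t^s)_t$ from $\cF_s$ (both stated in the paper) to get $\ee(S_t^*S_t\mid\cF_s)=S_s^*S_s$, and then transfer to $\pp\ch$ by the abstract Bayes formula, using $M_tZ_t\ch=\tfrac1k S_t^*S_t$ (which, as you implicitly need, holds even on $\{Z_t\ch=0\}$ since $\tr(S_t^*S_t)=0$ forces $S_t^*S_t=0$). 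Your argument is softer: it avoids It\^o calculus and Girsanov entirely, and sidesteps the local-martingale-to-martingale step. What the paper's computation buys, however, is the explicit SDE for $(M_t)_t$, whose martingale increments are exactly what the It\^o isometry argument in the proof of Proposition \ref{prop:pur} consumes; your route proves Theorem \ref{thm:M_t} but would not by itself supply that later ingredient. The remaining steps are essentially the paper's: your finite-time Radon--Nikodym identity $\dd\pprho\rho_t=k\tr(\rho M_t)\,\dd\pp\ch_t$ plus $\Lp1(\pp\ch)$-closure and the $\pi$-system extension is a more detailed version of the paper's one-line deduction, and for the convergence under $\pprho\rho$ you use absolute continuity plus bounded convergence where the paper uses the domination $\pprho\rho\leq k\,\pp\ch$ obtained from $\tr(AB)\leq\|A\|\tr(B)$ --- both equally valid.
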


\begin{proof}
Expressing $(S_t)_t$ in terms of $B_i\ch$ for $i\in I_b$, we have that
\begin{equation*}
\dd S_{t}= \Big(K+{\textstyle {\frac{\# I_p}2}}\,{\Id}+\sum_{i\in I_b} \frac{\tr\big(S_{t-}^*(L_i+L_i^*)S_{t-}\big)}{\tr(S_{t-}^*S_{t-})}L_i\Big)S_{t-}\,\dd t+\sum_{i\in I_b} L_iS_{t-}\,\dd B_i\ch(t)+\sum_{j\in I_p}(C_j-\Id)S_{t-}\,\dd N_j(t).
\end{equation*}
Recall that the distributions of the $B_i\ch$ and $N_j$ under $\pp\ch$ are given by Proposition \ref{prop:girsa}.

Since $\tr(S_t^*S_t)$ is $\pp\ch$-almost surely non zero, we can define $R_t$ by $R_t=S_t/\sqrt{\tr(S_t^*S_t)}$ almost surely for $\pp\ch$, and therefore for $\pprho\rho$ and $\qumu\mu$. The It\^o formula implies
\begin{align*}
\dd M_t=&\sum_{i\in I_b} \Big(R_{t-}^*(L_i+L_i^*)R_{t-}-M_{t-}\tr\big(R_{t-}^*(L_i+L_i^*)R_{t-}\big)\Big)\,\dd B\ch_i(t)\\
&+\sum_{j\in I_p} \Big(\frac{R_{t-}^*C_j^*C_j R_{t-}}{\tr(R_{t-}^*C_j^*C_j R_{t-})}- M_{t-}\Big)\big(\dd N_j(t)-\tr(R_{t-}^* C_j^*C_jR_{t-})\,\dd t\big).
\end{align*}
Hence, with respect to $\pp\ch$,  $(M_t)_t$ is a local martingale. By definition, it is positive-semidefinite, and is also bounded since $\tr (M_t)=1$ almost surely. Thus $(M_t)_t$ is a martingale and standard theorems of convergence for martingales imply the convergence almost surely and in $\Lp 1$. 

By direct computation, we get $\dd\pprho\rho|_{\cF_t}=k\tr(\rho M_t)\, \dd\pp\ch|_{\cF_t}$. The $\Lp 1$ convergence of $(M_t)_t$ with respect to $\pp\ch$ then implies $\dd\pprho\rho=k\tr(\rho M_\infty)\,\dd\pp\ch$. Finally the inequality $\tr(AB)\leq \|A\|\tr(B)$ for any two positive semidefinite matrices implies $\pprho\rho\leq k\,\pp\ch$, which yields the $\Lp 1$ and almost sure convergence with respect to $\pprho\rho$.
\end{proof}

Now we are in the position to show that under the assumption {\pur} the limit $M_\infty$ is a rank-one projector. To this end let us introduce the polar decomposition of $(S_t)_t$: there exists a process $(U_t)_t$ with values in the set of $k\times k$ unitary matrices such that for all $t\geq0$
\[S_t=\sqrt{\tr(S_t^*S_t)}\, U_tM_t^{1/2}.\]

\begin{prop}\label{prop:pur} Assume that {\pur} holds. Then, for any $\rho\in \cD_k$, $\pprho\rho$-almost surely, the random variable $M_\infty$ is a rank-one orthogonal projector on $\C^k$.
\end{prop}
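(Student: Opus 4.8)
The plan is to work throughout under $\pp\ch$ and transfer the conclusion to a general $\rho\in\cD_k$ only at the very end, using the bound $\pprho\rho\le k\,\pp\ch$ from Theorem \ref{thm:M_t}: this guarantees that any $\pp\ch$-almost sure statement also holds $\pprho\rho$-almost surely. The guiding idea is that $(M_t)_t$ is a \emph{bounded}, hence $\Lp2$-bounded, $\pp\ch$-martingale, so its quadratic variation is finite; the instantaneous fluctuations of $M_t$ must therefore die out along the trajectory, and in the limit the degeneracy this forces is precisely the situation that \pur\ excludes unless $M_\infty$ has rank one.

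Concretely, write $X_{i}=R^*(L_i+L_i^*)R-M\,\tr\big(R^*(L_i+L_i^*)R\big)$, $Y_{j}=\tfrac{R^*C_j^*C_jR}{\lambda_j}-M$ and $\lambda_{j}=\tr(R^*C_j^*C_jR)$ for the coefficients of the SDE for $(M_t)_t$ obtained in the proof of Theorem \ref{thm:M_t}. I would apply the It\^o formula to the convex function $M\mapsto\tr(M^2)$. The crucial cancellation is that the first-order drift coming from the compensator $-\sum_j\lambda_j Y_j\,\dd t$ of $(M_t)_t$ exactly cancels the cross term $2\sum_j\lambda_j\tr(MY_j)$ produced by expanding $\tr\big((M+Y_j)^2\big)-\tr(M^2)$, leaving the manifestly nonnegative drift
\[\alpha_t=\sum_{i\in I_b}\tr(X_{i,t}^2)+\sum_{j\in I_p}\lambda_{j,t}\,\tr(Y_{j,t}^2)\ge0.\]
Thus $\tr(M_t^2)$ is a bounded submartingale whose increasing part has density $\alpha_t$. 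Since $\tr(M_t^2)\le1$, taking $\pp\ch$-expectations gives $\eech\int_0^\infty\alpha_s\,\dd s\le1$, so $\int_0^\infty\alpha_s\,\dd s<\infty$ and hence $\liminf_{t\to\infty}\alpha_t=0$ holds $\pp\ch$-almost surely.

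The final step is a compactness-and-limit argument carried out $\omega$ by $\omega$ on the full-measure set where $M_t\to M_\infty$ and $\int_0^\infty\alpha_s\,\dd s<\infty$. I would choose $t_n\to\infty$ with $\alpha_{t_n}\to0$ and, using the polar decomposition $R_t=U_tM_t^{1/2}$ together with compactness of the unitary group, extract a further subsequence along which $U_{t_n}\to U_\infty$, so that $R_{t_n}\to R_\infty:=U_\infty M_\infty^{1/2}$ with $R_\infty^*R_\infty=M_\infty$. Since every summand in $\alpha_t$ is nonnegative, $\alpha_{t_n}\to0$ forces $X_{i,\infty}=0$ for all $i\in I_b$ and, for each $j\in I_p$, either $\lambda_{j,\infty}=0$ or $Y_{j,\infty}=0$. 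Let $P$ be the orthogonal projector onto $\ran R_\infty$, whose rank equals that of $M_\infty$ because $R_\infty^*R_\infty=M_\infty$. Reading off $X_{i,\infty}=0$ and $Y_{j,\infty}=0$ (sandwiching by $R_\infty$) yields $P(L_i+L_i^*)P\propto P$ and $PC_j^*C_jP\propto P$, while in the case $\lambda_{j,\infty}=0$ one has $C_jR_\infty=0$, so again $PC_j^*C_jP=0\propto P$. Hence $P$ satisfies the hypotheses of \pur, which forces $P$, and therefore $M_\infty$, to have rank one; being a rank-one density matrix, $M_\infty$ is a rank-one orthogonal projector.

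The main obstacle I anticipate is the dependence of the coefficients $X_i,Y_j,\lambda_j$ on $R_t$ — equivalently on the unitary phase $U_t$ — which need not converge as $t\to\infty$. The degeneracy relations can only be read off along a subsequence on which $U_t$ converges, and \pur\ must then be applied to the resulting limiting projector $P$ onto $\ran R_\infty$ rather than to the range projector of $M_\infty$ itself. A secondary technical point is the jump channel with $\lambda_{j,\infty}=0$, which must be handled separately since one cannot divide by $\lambda_j$ in the limit: there $\alpha_{t_n}\to0$ gives no direct information, but joint continuity of $\lambda_j$ in $(M,U)$ gives $\tr\big(M_\infty\,U_\infty^*C_j^*C_jU_\infty\big)=0$ and hence the relation needed for $P$.
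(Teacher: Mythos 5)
Your proposal is correct and follows essentially the same route as the paper's proof: reduce to $\pp\ch$ via $\pprho\rho\leq k\,\pp\ch$, use boundedness of the martingale $(M_t)_t$ to force the SDE coefficients to vanish along a subsequence, pass to a limit via the polar decomposition $R_t=U_t\sqrt{M_t}$ and compactness of the unitary group, and apply {\pur} to the limiting projector to conclude that $M_\infty$ is a rank-one orthogonal projector. Your only deviations are cosmetic: you track $\tr(M_t^2)$ pathwise as a bounded submartingale with compensator density $\alpha_t$, producing an $\omega$-dependent sequence $t_n$, whereas the paper computes $\eech(M_t^2)$ via the It\^o isometry and extracts an almost surely convergent subsequence from $\Lp 1$ convergence; and you handle the degenerate jump channel $\lambda_{j,\infty}=0$ explicitly (via $C_jR_\infty=0$), a case the paper's limit passage subsumes implicitly since {\pur} allows proportionality constant zero.
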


\begin{proof} First, since $\pprho\rho$ is absolutely continuous with respect to $\pp\ch$, proving the result with $\rho=\Id/k$ is sufficient. To achieve this, remark that the $\pp\ch$-almost sure convergence of $(M_t)_t$ and the $\pp\ch$-almost sure bound $\sup_{t\geq 0}\|M_t\|\leq 1$ imply the convergence of $\eech(M_t^2)$. Now recall that $R_t=S_t/\sqrt{\tr(S_t^*S_t)}$. The Itô isometry implies
\begin{align*}
\eech(M_t^2)=M_0^2&+\sum_{i\in I_b}\int_0^t\eech\Big( R_{s}^*(L_i+L_i^*)R_{s}-M_{s}\tr\big(R_{s}^*(L_i+L_i^*)R_s\big)\Big)^2\,\dd s\\
&+\sum_{j\in I_p}\int_0^t\eech\bigg(\Big(\frac{R_{s}^*C_j^*C_jR_{s}}{\tr(R_s^*C_j^*C_jR_s)}-M_{s}\Big)^2\tr(R_{s}^*C_j^*C_jR_s)\bigg)\,\dd s.
\end{align*}
Therefore, the convergence of $\eech(M_t^2)$ to $\eech(M_\infty^2)$ implies that
\[\int_0^\infty\eech\Big( R_{s}^*(L_i+L_i^*)R_{s}-M_{s}\tr\big(R_{s}^*(L_i+L_i^*)R_s\big)\Big)^2\,\dd s<\infty\]
for all $i\in I_b$ and
\[\int_0^{\infty}\eech\bigg(\Big(\frac{R_{s}^*C_j^*C_jR_{s}}{\tr(R_s^*C_j^*C_jR_s)}-M_{s}\Big)^2\tr(R_{s}^*C_j^*C_jR_s)\bigg)\,\dd s<\infty\]
for all $j\in I_p$. Since the integrands are nonnegative, their inferior limits at infinity are $0$. Hence there exists an unbounded increasing sequence $(t_n)_n$ such that for any $i\in I_b$,
\[\lim_n \eech\Big( R_{t_n}^*(L_i+L_i^*)R_{t_n}-M_{t_n}\tr\big(R_{t_n}^*(L_i+L_i^*)R_{t_n}\big)\Big)=0\]
and for any $j\in I_p$,
\[\lim_n \eech\bigg(\Big(\frac{R_{t_n}^*C_j^*C_jR_{t_n}}{\tr(R_{t_n}^*C_j^*C_jR_{t_n})}-M_{t_n}\Big)^2\tr(R_{t_n}^*C_j^*C_jR_{t_n})\bigg)=0.\]
Since convergence in $\Lp 1$ implies the almost sure convergence of a subsequence, there exists an unbounded increasing sequence, which we denote also by $(t_n)_n$, such that $\pp\ch$-almost surely,
\[\lim_{n\to\infty} \Big( R_{t_n}^*(L_i+L_i^*)R_{t_n}-M_{t_n}\tr(R_{t_n}^*(L_i+L_i^*)R_{t_n}) \Big)=0\]
and
\[\lim_{n\to\infty} \bigg(\Big(\frac{R_{t_n}^*C_j^*C_jR_{t_n}}{\tr(R_{t_n}^*C_j^*C_jR_{t_n})}-M_{t_n}\Big)^2\tr(R_{t_n}^*C_j^*C_jR_{t_n})\bigg)=0\]
for all $i\in I_b$ and $j\in I_p$.

Now and for the rest of this paragraph, fix a realization (i.e.\ an element of $\Omega$) such that $(M_{t_n})_n$ converges to $M_\infty$. The polar decomposition of $R_t$ is $R_t=U_t\sqrt{M_t}$. Since the set of $k\times k$ unitary matrices is compact, there exists a subsequence $(s_n)_n$ of $(t_n)_n$ such that $(U_{s_n})_n$ converges to $U_{\infty}$. We therefore have
\[\sqrt{M_{\infty}}U_\infty^*(L_i+L_i^*)U_\infty \sqrt{M_\infty}-M_\infty\tr(M_{\infty}U_\infty^*(L_i+L_i^*)U_\infty)=0\]
and
\[\sqrt{M_{\infty}}U_\infty^*C_j^*C_jU_\infty \sqrt{M_\infty}-M_\infty\tr(M_{\infty}U_\infty^*C_j^*C_jU_\infty )=0,\]
for all $i\in I_b$ and $j\in I_p$. Denoting $P_\infty$ the orthogonal projector onto the range of $M_\infty$, it follows that there exist real numbers $(\alpha_i)_{i\in I_b}$ and $(\beta_j)_{j\in I_p}$ such that
\[U_\infty P_\infty U_\infty^* (L_i+L_i^*) U_\infty P_\infty U_\infty^*=\alpha_i U_\infty P_\infty U_\infty^*\]
and
\[U_\infty P_\infty U_\infty^* C_j^*C_j U_\infty P_\infty U_\infty^*=\beta_j U_\infty P_\infty U_\infty^*.\]
Assumption {\pur} implies that the orthogonal projector $U_\infty P_\infty U_\infty^*$ has rank one, thus so does $P_\infty$. Since $\tr(M_\infty)=1$, $M_\infty$ is a rank one orthogonal projector.

Since $(M_{t_n})_n$ converges $\pp\ch$-almost surely, the above paragraph and the absolute continuity of $\pprho\rho$ with respect to $\pp\ch$ show that $M_\infty$ is $\pprho\rho$-almost surely a rank one orthogonal projector.
\end{proof}

\section{Invariant measure and exponential convergence in Wasserstein distance}\label{sec:inv}

This section is devoted to the main result of the paper, which concerns the exponential convergence to the invariant measure for the Markov process $(\hat x_t)_t$. We first show a convergence result for ${\cF}$-measurable random variables. The following theorem is a transcription of \cite[Proposition 7.5]{wolftour}.

\begin{thm}\label{thm:convavstate}
Assume that {\lerg} holds. Then there exist two constants $C>0$ and $\lambda>0$ such that for any $\rho\in \cD_k$ and any $t\geq0$,
\[\left\Vert \e^{t\Lin}(\rho)-\rhoinv\right\Vert\leq C\e^{-\lambda t}\]
\end{thm}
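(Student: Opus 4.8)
The plan is to reduce the statement to a spectral-gap estimate for the bounded linear operator $\Lin$ on the finite-dimensional space $\mc k$, and to extract the gap from {\lerg}. First I would record the two pieces of spectral data attached to the eigenvalue $0$. Trace preservation of $(\e^{t\Lin})_t$ means $\tr\Lin(\rho)=0$ for every $\rho$, so $\Id$ is a left eigenvector of $\Lin$ at $0$; assumption {\lerg} guarantees, via \cite[Proposition 7.6]{wolftour}, that $0$ is an algebraically simple eigenvalue, with right eigenvector $\rhoinv$. Consequently the Riesz spectral projection $P$ associated with $\{0\}$ has rank one, and the normalization $\tr\rhoinv=1$ forces $P(\rho)=\tr(\rho)\,\rhoinv$.

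Next I would split $\mc k=\C\rhoinv\oplus\ran(I-P)$ into two $\Lin$-invariant subspaces. On the first summand $\e^{t\Lin}$ acts as the identity, so for $\rho\in\cD_k$ (where $\tr\rho=1$) one has $\e^{t\Lin}(\rho)-\rhoinv=\e^{t\Lin}\big((I-P)\rho\big)$. This reduces the theorem to an operator-norm bound on the restriction $\Lin_0:=\Lin|_{\ran(I-P)}$, uniform in $\rho$: indeed $\|\e^{t\Lin}(\rho)-\rhoinv\|\le\|\e^{t\Lin_0}\|\cdot\sup_{\rho\in\cD_k}\|(I-P)\rho\|$, where the supremum is finite because $\cD_k$ is compact and $I-P$ is bounded.

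The heart of the argument, and the step I expect to be the main obstacle, is to show that $\Lin_0$ has a spectral gap, i.e.\ that $\lambda_0:=-\max\{\operatorname{Re}\mu:\mu\in\operatorname{spec}\Lin\setminus\{0\}\}$ is strictly positive. We already know $\operatorname{spec}\Lin\subset\{\operatorname{Re}\mu\le0\}$, so the only obstruction would be a nonzero purely imaginary eigenvalue of $\Lin$, which would produce a persistent oscillation and prevent convergence. The key structural fact, special to generators of positive trace-preserving semigroups, is that such a peripheral eigenvector can be shown to force a nontrivial $\Lin$-invariant subalgebra of the form $\pi\mc k\pi$, contradicting the uniqueness of the minimal invariant projection $\pi$ in {\lerg}, equivalently the algebraic simplicity of the eigenvalue $0$ given by \cite[Propositions 7.5 and 7.6]{wolftour}. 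Hence the peripheral spectrum reduces to $\{0\}$ and $\lambda_0>0$.

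Granting the gap, the conclusion is routine finite-dimensional semigroup theory. The Jordan form of $\Lin_0$ yields $\|\e^{t\Lin_0}\|\le C_0\,(1+t)^{m}\,\e^{-\lambda_0 t}$, where $m$ bounds the sizes of the relevant Jordan blocks; choosing any $0<\lambda<\lambda_0$ absorbs the polynomial factor and gives $\|\e^{t\Lin_0}\|\le C'\,\e^{-\lambda t}$. Combined with the uniform reduction of the second paragraph, this produces the stated estimate $\|\e^{t\Lin}(\rho)-\rhoinv\|\le C\e^{-\lambda t}$ with constants $C,\lambda$ independent of $\rho\in\cD_k$.
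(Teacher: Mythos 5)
Your proposal is correct and follows essentially the same route as the paper: the paper's proof is simply a one-line transcription of \cite[Proposition 7.5]{wolftour}, and your argument is the standard spectral-gap proof underlying that citation (rank-one Riesz projection $P(\rho)=\tr(\rho)\,\rhoinv$ from algebraic simplicity of $0$, plus a Jordan-form bound on the complementary part). You correctly identify the one genuinely nontrivial step --- ruling out nonzero purely imaginary eigenvalues, which fails for general matrices and even for discrete-time irreducible channels, and here follows from the fact that a peripheral eigenvector would generate additional minimal invariant projections, contradicting {\lerg} --- and, like the paper, you defer exactly this step to \cite[Propositions 7.5 and 7.6]{wolftour}.
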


Our next proposition requires the introduction of a shift semigroup. From now on we assume that $\filtreP$ is a canonical realization of the processes $W_i$ and $N_j$, in particular $\Omega$ is a subset of $(\R^{I_b\cup I_p})^{\R_+}$. We can then define for every $t\geq 0$ the map $\theta^t$ on $\Omega$ by 
\[\big(\theta^t\omega\big)(s)=\omega(s+t)-\omega(t).\]
From the previous theorem we deduce the following proposition for $\cF$-measurable random variables.
\begin{prop}\label{prop:convergo}
Assume {\lerg} holds. Then there exist two constants $C>0$ and $\lambda>0$ such that for any $\cF$-measurable, essentially bounded function $f:\Omega\mapsto\C$ with essential bound $\|f\|_{\infty}$, any $t\geq 0$ and any $\rho\in\cD_k$,
\begin{equation}
\big\vert \eerho\rho(f\circ \theta^t)-\eerho\rhoinv(f)\big\vert\leq\Vert f\Vert_{\infty} C\e^{-\lambda t}.
\end{equation}
\end{prop}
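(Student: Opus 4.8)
The plan is to reduce the statement to the norm estimate of Theorem~\ref{thm:convavstate} by first establishing, for every bounded $\cF$-measurable $f$, the exact identity
\[\eerho\rho(f\circ\theta^t)=\tr\big(\Phi(f)\,\e^{t\Lin}(\rho)\big),\qquad \Phi(f):=k\,\eech(fM_\infty)\in\mc k,\]
where $\Phi$ depends linearly on $f$. Alongside this I would record the companion relation $\tr(\Phi(f)\rho)=\eerho\rho(f)$ for every $\rho\in\cD_k$, which is immediate from $\dd\pprho\rho=k\,\tr(\rho M_\infty)\,\dd\pp\ch$ in Theorem~\ref{thm:M_t}; specializing it to $\rho=\rhoinv$ gives $\eerho\rhoinv(f)=\tr\big(\Phi(f)\rhoinv\big)$. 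Granting these, the difference in the statement equals $\tr\big(\Phi(f)(\e^{t\Lin}(\rho)-\rhoinv)\big)$. Because $M_\infty$ is positive semidefinite with $\tr M_\infty=1$, we have $\|\Phi(f)\|\le k\|f\|_\infty$, so bounding $|\tr(\Phi(f)\,\cdot\,)|$ by $\|\Phi(f)\|$ times a trace norm and invoking Theorem~\ref{thm:convavstate} (after passing between equivalent norms on $\mc k$) yields the claimed bound with a constant $C$ independent of $f$ and $\rho$.

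First I would prove the identity for $f$ that is $\cF_s$-measurable for some finite $s$. Such an $f\circ\theta^t$ depends only on the noise increments over $[t,t+s]$, hence is $\cF_{t+s}$-measurable and independent of $\cF_t$ under $\pp$; therefore $\eerho\rho(f\circ\theta^t)=\ee\big((f\circ\theta^t)\,Z_{t+s}^\rho\big)$ by \eqref{eq_defprho}. Writing $Z_{t+s}^\rho=\tr\big((S_{t+s}^t)^*S_{t+s}^t\,S_t\rho S_t^*\big)$ and using the time-homogeneity of \eqref{eq_defS} together with uniqueness of solutions, which gives the cocycle identity $S_{t+s}^t(\omega)=S_s(\theta^t\omega)$, both $f\circ\theta^t$ and $(S_{t+s}^t)^*S_{t+s}^t$ become functionals of $\theta^t\omega$ alone. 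Under $\pp$ the shifted path $\theta^t\omega$ is distributed as $\omega$ and is independent of the $\cF_t$-measurable matrix $S_t\rho S_t^*$, so conditioning on $\cF_t$ and pulling the expectation inside the trace produces $\ee\big[\tr\big(\ee(fS_s^*S_s)\,S_t\rho S_t^*\big)\big]=\tr\big(\Phi(f)\,\ee[S_t\rho S_t^*]\big)$, and \eqref{exppsrhos} identifies $\ee[S_t\rho S_t^*]=\e^{t\Lin}(\rho)$. A short martingale computation, using $\eech(fM_\infty)=\eech(fM_s)$ for $\cF_s$-measurable $f$ and $\dd\pp\ch=\tfrac1k\tr(S_s^*S_s)\,\dd\pp$ on $\cF_s$, checks that $\ee(fS_s^*S_s)=k\,\eech(fM_\infty)=\Phi(f)$, so the two descriptions of $\Phi$ agree.

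To pass to a general bounded $\cF$-measurable $f$, I would invoke the functional monotone class theorem. The class $\cH$ of bounded $\cF$-measurable $f$ for which $\eerho\rho(f\circ\theta^t)=\tr(\Phi(f)\e^{t\Lin}(\rho))$ holds for all $\rho$ and $t$ is a vector space (it is cut out by linear conditions in $f$) containing the constants, and it is stable under bounded monotone limits: if $f_n\to f$ boundedly, then $f_n\circ\theta^t\to f\circ\theta^t$ pointwise, whence $\eerho\rho(f_n\circ\theta^t)\to\eerho\rho(f\circ\theta^t)$ by dominated convergence, while $\Phi(f_n)=k\,\eech(f_nM_\infty)\to\Phi(f)$ by dominated convergence under $\pp\ch$. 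Since $\cH$ contains the multiplicative class of finite-time ($\cF_s$-measurable) functions, which generates $\cF=\cF_\infty$, it contains all bounded $\cF$-measurable functions, completing the proof. I expect the main obstacle to be exactly this bookkeeping: pinning down the cocycle/stationarity identity $S_{t+s}^t(\omega)=S_s(\theta^t\omega)$ and the independence of $\theta^t\omega$ from $\cF_t$ under $\pp$, and then extending off the finite-time functions. Working with the \emph{linear} identity rather than with the sublinear inequality of the statement is what makes the monotone class step go through, after which the exponential bound is a one-line consequence of Theorem~\ref{thm:convavstate}.
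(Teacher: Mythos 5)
Your proposal is correct and takes essentially the same route as the paper: the cocycle identity $S_{t+s}^t=S_s\circ\theta^t$ plus independence and stationarity of increments yields $\eerho\rho(f\circ\theta^t)=\eerho{\e^{t\Lin}(\rho)}(f)$ (your $\tr\big(\Phi(f)\,\e^{t\Lin}(\rho)\big)$ is the same identity in disguise, since $\dd\pprho\rho=k\,\tr(\rho M_\infty)\,\dd\pp\ch$ by Theorem \ref{thm:M_t}), after which the bound $|\tr(M_\infty A)|\leq\|A\|_1$ and Theorem \ref{thm:convavstate} conclude exactly as in the paper. The only differences are cosmetic: you make explicit, via the monotone class theorem, the extension from $\cF_s$-measurable to general $\cF$-measurable $f$ (which the paper's proof leaves implicit through its unexplained auxiliary $s$), and your constant carries a harmless extra factor of $k$.
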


\begin{proof}
Recall that by definition $\pp$ is the law of processes with independent increments. It follows that if $g$ is $\cF_t$-measurable and $h$ is $\cF$-measurable, $\ee(h\circ\theta^t g)=\ee(h\circ\theta^t)\ee(g)$. Then, by definition of $\pprho\rho$,
\begin{align*}
\eerho\rho(f\circ\theta^t)=\ee(f\circ\theta^tZ_{t+s}^{\rho}).
\end{align*}
Since $Z_{t+s}^{\rho}=\tr(S_{t+s}^{t\,*}S_{t+s}^t S_t\rho S_t^*)$ where $S_{t}^*\rho S_t$ is $\cF_t$-measurable and $S_{t+s}^{t\,*}S_{t+s}^t=S_{s}^*S_{s}\circ\theta^t$ by \eqref{eq_defS}
\begin{align*}
\eerho\rho(f\circ\theta^t)=\ee\Big(f\circ\theta^t\tr\big(S_{t+s}^{t\,*}S_{t+s}^t \ee(S_t\rho S_t^*)\big)\Big).
\end{align*}
Then relation \eqref{exppsrhos}, the $\theta$-invariance of $\pp$, and the definition of the measures $\pprho\rho$ yield
\begin{equation*}
\eerho\rho(f\circ\theta^t)=\eerho{\bar \rho_t}(f)
\end{equation*}
with $\bar\rho_t=\e^{t\Lin}(\rho)$.
It follows from Theorem \ref{thm:M_t} that 
\begin{equation*}
\eerho\rho(f\circ\theta^t)-\eerho{\rhoinv}(f)=\eech\Big( f \tr\big(M_\infty (\e^{t\Lin}(\rho )-\rhoinv)\big)\Big).
\end{equation*}
For any matrix $A$, denoting $\|A\|_1$ its trace norm, $\big|\tr(M_\infty A)\big|\leq\| A\|_1$. Therefore, 
\begin{equation*}
\big|\eerho\rho(f\circ\theta^t)-\eerho\rhoinv(f)\big|\leq \|f\|_\infty \|\e^{t\Lin}(\rho) -\rhoinv\|_1.
\end{equation*}
Theorem \ref{thm:convavstate} then yields the proposition.
\end{proof}

The main strategy to show Theorem \ref{thm:expo_conv_wasser1} is to construct a $\cF$-measurable process $(\hat y_t)_t$ approximating the process $(\hat x_t)_t$. Let $(\hat z_t)_t$ be the maximum likelihood process:
\begin{equation} \label{eq_defyn}
\hat z_{t}=\mathop{\mathrm{argmax}}_{\hat x\in \pc k}\,\|S_t x\|
\end{equation}
where $x$ is a norm one representative of $\hat x$. If the largest eigenvalue of $S_t^*S_t$ is not simple, the choice of $\hat z_t$ may not be unique.  However we can always choose an appropriate $\hat z_t$ in an $(\cF_t)_t$-adapted way. If {\pur} holds, Proposition \ref{prop:pur} ensures that the definition of $\hat z_t$ is almost surely unambiguous for large enough $t$: it is the equivalence class of eigenvectors of $M_t$ corresponding to its largest eigenvalue. 

Let now $(\hat y_t)_t$ be the evolution of this maximum likelihood estimate:
\begin{equation} \label{eq_defzn}
\hat y_t=S_t\cdot\hat z_t.
\end{equation}
We shall also use the notation $\hat z_{t+s}^s:=\hat z_{t}\circ\theta^s$ and $\hat y_{t+s}^s:=\hat y_t\circ\theta^s$, that is, processes defined in the same fashion but substituting $S_{t+s}^s$ for $S_t$. It is worth noticing that these processes are all $\cF$-measurable.

Our proof that $(\hat y_t)_t$ is an exponentially good approximation of $(\hat x_t)_t$ relies in part on the use of the exterior product of $\C^k$. We recall briefly the relevant definitions: for $x_1, x_2\in\C^k$ we denote by $x_1\wedge x_2$ the alternating bilinear form 
$$x_1\wedge x_2:(y_1,y_2)\mapsto \det\begin{pmatrix}
\langle x_1, y_1\rangle & \langle x_1, y_2\rangle\\ \langle x_2, y_1\rangle&\langle x_2, y_2\rangle
\end{pmatrix}.$$ 
Then, the set of all $x_1\wedge x_2$ is a generating family for the set $\wedge^2\C^k$ of alternating bilinear forms on $\C^k$. We equip it with a complex inner product by 
\[\langle x_1\wedge x_2, y_1\wedge y_2\rangle = \det\begin{pmatrix}
\langle x_1, y_1\rangle & \langle x_1, y_2\rangle\\ \langle x_2, y_1\rangle&\langle x_2, y_2\rangle
\end{pmatrix}, \]
and denote by $\|x_1\wedge x_2\|$ the associated norm (there should be no confusion with the norm on vectors). It is immediate to verify that our metric $d(\cdot,\cdot)$ on $\pc k$ satisfies
\begin{equation}\label{eq:d_wedge}
d(\hat x,\hat y)=\frac{\|x\wedge y\|}{\|x\|\|y\|}.
\end{equation}
For $A\in \mc k$, we write $\wedge^2 A$ for the operator on $\wedge^2\C^k$ defined by
\begin{equation}\label{eq_defwedgepA}
\big(\wedge^2 A\big) \,(x_1\wedge x_2)=Ax_1\wedge Ax_2.
\end{equation}
It follows that $\wedge^2 (AB)=\wedge^2 A\,\wedge^2 B$, so that $\|\wedge^2\! (AB)\|\leq\|\wedge^2 A\|\|\wedge^2 B\|$. There exists a useful relationship between the operator norm on $\wedge^2\mc k$ and singular values of matrices. From e.g.\  Chapter XVI of \cite{BMLalgebra},
\begin{equation}\label{eq:wedge_singular val}
\|\wedge^2 A\|=a_1(A)\,a_2(A),
\end{equation}
where $a_1(A)\geq a_2(A)$ are the two first singular values of $A$, i.e.\  the square roots of eigenvalues of~$A^* A$. We recall that the operator norm is defined such that $\|A\|:=a_1(A)$.

The exponential decrease of $d(\hat x_t,\hat y_t)$ is derived from the exponential decay of the following function:
\[f:t\mapsto\ee\big(\Vert\wedge^2S_t\Vert\big).\]

\begin{lem}\label{lem:f}
Assume that {\pur} holds. Then there exist two constants $C>0$ and $\lambda>0$ such that for all $t\geq0$
\[f(t)\leq C\e^{-\lambda t}\]
\end{lem}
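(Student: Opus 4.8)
The plan is to prove the two structural facts that $f$ is \emph{submultiplicative} and that $f(t)\to 0$ as $t\to\infty$, and then to combine them with the elementary boundedness of $f$ via a Fekete-type argument to extract exponential decay. Of these, the vanishing at infinity is where assumption \pur{} genuinely enters (through Proposition \ref{prop:pur}), and it is the main obstacle; submultiplicativity and the final bookkeeping are comparatively routine.

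For submultiplicativity, I would use the cocycle identity $S_{t+s}=S_{t+s}^s\,S_s$ together with the multiplicativity $\wedge^2(AB)=\wedge^2 A\,\wedge^2 B$ and submultiplicativity of the operator norm to get
\[\|\wedge^2 S_{t+s}\|\leq \|\wedge^2 S_{t+s}^s\|\,\|\wedge^2 S_s\|.\]
Here $\wedge^2 S_s$ is $\cF_s$-measurable, while $S_{t+s}^s$ is independent of $\cF_s$ and has the same law under $\pp$ as $S_t$ (stationary independent increments of the driving processes). Taking expectations and factorizing over the independent factors yields $f(t+s)\leq f(t)\,f(s)$.

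For the vanishing at infinity, I would first pass to $\pp\ch$. Since $\dd\pp\ch|_{\cF_t}=\tfrac1k\tr(S_t^*S_t)\,\dd\pp$, one obtains
\[f(t)=k\,\eech\!\Big(\frac{\|\wedge^2 S_t\|}{\tr(S_t^*S_t)}\Big).\]
Writing $a_1\geq a_2$ for the two largest singular values of $S_t$ and using \eqref{eq:wedge_singular val}, the integrand equals $a_1a_2/\tr(S_t^*S_t)$, which is the geometric mean of the two largest eigenvalues of $M_t$ and is thus bounded above by $\sqrt{\nu_2(M_t)}$, where $\nu_2(M_t)$ denotes the second-largest eigenvalue of $M_t$; in particular the integrand is at most $\tfrac12$. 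By Proposition \ref{prop:pur}, $M_t\to M_\infty$ with $M_\infty$ a rank-one projector $\pp\ch$-almost surely, so by continuity of eigenvalues $\nu_2(M_t)\to 0$ $\pp\ch$-almost surely. Dominated convergence (with dominating constant $\tfrac12$ on the probability space $\pp\ch$) then gives $f(t)\to 0$, and the same bound shows $f(t)\leq k/2$ for every $t$.

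To conclude, I would fix $t_0$ with $q:=f(t_0)<1$ and set $\lambda:=-\tfrac1{t_0}\log q>0$. For arbitrary $t$, writing $t=n t_0+r$ with $n=\lfloor t/t_0\rfloor$ and $0\leq r<t_0$, submultiplicativity and the uniform bound give $f(t)\leq q^n f(r)\leq \tfrac k2\,q^{\lfloor t/t_0\rfloor}\leq C\e^{-\lambda t}$ with $C=\tfrac k2\e^{\lambda t_0}$, which is the claim. The only delicate point throughout is the identification of the integrand with $\sqrt{\nu_2(M_t)}$ and the justification that the almost-sure rank-one limit forces $\nu_2(M_t)\to 0$; everything else is bookkeeping.
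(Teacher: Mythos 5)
Your proposal is correct and follows essentially the same route as the paper: submultiplicativity of $f$ via the cocycle property $S_{t+s}=S_{t+s}^sS_s$ with independent stationary increments, and vanishing at infinity via the change of measure to $\pp\ch$ together with the almost-sure convergence of $M_t$ to a rank-one projector (your bound by $\sqrt{\nu_2(M_t)}$ is the paper's $a_1(R_t)a_2(R_t)$ with $R_t^*R_t=M_t$ in disguise). The only difference is cosmetic: the paper concludes with Fekete's subadditive lemma where you iterate submultiplicativity from a fixed $t_0$ with $f(t_0)<1$, which is equivalent (and the implicit edge case $f(t_0)=0$ is harmless, since then submultiplicativity makes the conclusion trivial).
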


\begin{proof}
First, we show that $f$ converges to zero as $t$ grows to $\infty$. To this end recall that $R_t=S_t/\sqrt{kZ_t\ch}$, so that
\begin{equation*}
\ee\big(\Vert\wedge^2 S_{t}\Vert\big)=\eech\big(k\Vert\wedge^2R_{t}\Vert\big).
\end{equation*}
Furthermore, since $R_t^*R_t=M_t$, we have from Theorem \ref{thm:M_t} and Proposition \ref{prop:pur} that
\[\lim_{t\rightarrow \infty}\Vert\wedge^2R_{t}\Vert=\lim_{t\rightarrow \infty}a_1(R_t)\,a_2(R_t)=0.\]
Indeed, since $a_1(R_t)$ and $a_2(R_t)$ are the largest two eigenvalues of $\sqrt{M_t}$, the fact that it converges to a rank-one projector implies that $a_1(R_t)$ converges to $1$ and $a_2(R_t)$ to zero. The inequality $\|S_t\|^2\leq \tr(S_t^*S_t)$ implies $\|\wedge^2 R_t\|\leq 1$ almost surely. Then Lebesgue's dominated convergence theorem yields $\lim_{t\to\infty}f(t)=0$.

Second, we show $f$ is submultiplicative. By the semi-group property, $S_{t+s}=S_{t+s}^sS_s$ for all $t,s\geq0$. Using that the norm is submultiplicative, for any $t,s\geq0$,
\[\Vert\wedge^2 S_{t+s}\Vert\leq\Vert\wedge^2 S_{t+s}^s\Vert\Vert\wedge^2 S_{s}\Vert\]
Since $\pp$ has independent increments, $\|\wedge^2 S_{t+s}^s\|=\|\wedge^2S_t\|\circ\theta^s$ is $\cF_s$-independent and $\|\wedge^2S_s\|$ is $\cF_s$-measurable,
\begin{equation*}
\ee \big(\Vert\wedge^2S_{t+s}\Vert\big) \leq \ee\big(\Vert\wedge^2 S_{t+s}^s\Vert\big) \ee\big(\Vert\wedge^2 S_{s}\Vert\big)
\end{equation*}
The measure $\pp$ being shift-invariant,
\begin{equation*}
\ee \big(\Vert\wedge^2S_{t+s}\Vert\big) \leq \ee\big(\Vert\wedge^2 S_{t}\Vert\big) \ee\big(\Vert\wedge^2 S_{s}\Vert\big)
\end{equation*}
which yields that $f$ is submultiplicative.

Since $f$ is measurable, submultiplicative and $0\leq f(t)\leq k$ for all $t$, Fekete's subadditive lemma ensures that there exists $\lambda\in (-\infty,\infty]$ such that
$$\lim_{t\rightarrow\infty}\frac{1}{t}\log f(t)=\inf \frac{1}{t}\log f(t)=-\lambda.$$
Since $f$ converges towards $0$, this $\lambda$ belongs to $(0,\infty]$. This yields the lemma.
\end{proof}

\begin{prop}\label{prop:convconv}
Assume that {\pur} holds. Then there exist two constants $C>0$ and $\lambda>0$ such that for any $s,t\in\R_+$ and for any probability measure $\mu$ on $(\pc k,\mathcal B)$,
\begin{equation}
\eemu\mu\big(d(\hat x_{t+s},\hat y_{t+s}^s)\big)\leq C\e^{-t\lambda}.
\end{equation}
\end{prop}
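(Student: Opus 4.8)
The plan is to transport the expectation to the reference measure $\pp$ through the explicit Radon--Nikodym density, reduce the estimate to a bound on $\wedge^2 S_{t+s}^s$, and then invoke Lemma~\ref{lem:f}. First I would unfold the definitions. Since $d(\hat x_{t+s},\hat y_{t+s}^s)$ is $\cF_{t+s}$-measurable and $\dd\pprho{\pi_{\hat x}}_{t+s}=\|S_{t+s}x\|^2\,\dd\pp$, the definition of $\qumu\mu$ gives, for $x$ a norm-one representative of $\hat x$,
\[\eemu\mu\big(d(\hat x_{t+s},\hat y_{t+s}^s)\big)=\int \ee\big(\|S_{t+s}x\|^2\,d(\hat x_{t+s},\hat y_{t+s}^s)\big)\,\dd\mu(\hat x).\]
Using the semigroup identity $S_{t+s}=S_{t+s}^sS_s$ I write $\hat x_{t+s}=S_{t+s}^s\cdot\hat x_s$ and $\hat y_{t+s}^s=S_{t+s}^s\cdot\hat z_{t+s}^s$, and let $w,z$ be norm-one representatives of $\hat x_s,\hat z_{t+s}^s$. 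Then $\|S_{t+s}x\|^2=\|S_sx\|^2\,\|S_{t+s}^sw\|^2$, while \eqref{eq:d_wedge} and \eqref{eq_defwedgepA} yield
\[d(\hat x_{t+s},\hat y_{t+s}^s)=\frac{\|\wedge^2 S_{t+s}^s\,(w\wedge z)\|}{\|S_{t+s}^sw\|\,\|S_{t+s}^sz\|}.\]

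The step I expect to be the crux is controlling the denominator, which can be arbitrarily small, so that $d$ on its own cannot be dominated by $\|\wedge^2 S_{t+s}^s\|$. The key observation is that the density $\|S_{t+s}x\|^2$ carries precisely the compensating factor $\|S_{t+s}^sw\|^2$: multiplying the two displays, the weighted integrand equals
\[\|S_sx\|^2\,\frac{\|S_{t+s}^sw\|}{\|S_{t+s}^sz\|}\,\|\wedge^2 S_{t+s}^s\,(w\wedge z)\|.\]
By its definition \eqref{eq_defyn}, $\hat z_{t+s}^s$ maximizes $\|S_{t+s}^s\cdot\|$, hence $\|S_{t+s}^sz\|=a_1(S_{t+s}^s)=\|S_{t+s}^s\|$, so the ratio $\|S_{t+s}^sw\|/\|S_{t+s}^sz\|\leq1$. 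Since $\|w\wedge z\|=d(\hat x_s,\hat z_{t+s}^s)\leq1$, we have $\|\wedge^2 S_{t+s}^s\,(w\wedge z)\|\leq\|\wedge^2 S_{t+s}^s\|$, and therefore the weighted integrand is bounded above by $\|S_sx\|^2\,\|\wedge^2 S_{t+s}^s\|$. This cancellation, rather than any quantitative control of how close $\hat x_s$ is to the leading singular direction, is what makes the argument go through.

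Finally I would take expectations and use independence. As $\pp$ has independent increments, $\|\wedge^2 S_{t+s}^s\|=\|\wedge^2 S_t\|\circ\theta^s$ is independent of $\cF_s$, whereas $\|S_sx\|^2=Z_s^{\pi_{\hat x}}$ is $\cF_s$-measurable; thus $\ee\big(\|S_sx\|^2\,\|\wedge^2 S_{t+s}^s\|\big)=\ee(Z_s^{\pi_{\hat x}})\,\ee(\|\wedge^2 S_{t+s}^s\|)$. The first factor is $1$ by the martingale property in Lemma~\ref{lem_Zmg}, and the shift-invariance of $\pp$ identifies the second with $f(t)$. Integrating over $\mu$ gives $\eemu\mu\big(d(\hat x_{t+s},\hat y_{t+s}^s)\big)\leq f(t)$, and Lemma~\ref{lem:f} then furnishes constants $C,\lambda>0$ with $f(t)\leq C\e^{-\lambda t}$, which is the assertion.
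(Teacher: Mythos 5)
Your proof is correct and takes essentially the same route as the paper's: in both, the crux is that the sampling density $\|S_{t+s}x\|^2$ exactly cancels the potentially small factor $\|S_{t+s}^s w\|$ while the maximizing property of $\hat z_{t+s}^s$ gives $\|S_{t+s}^s z\|=\|S_{t+s}^s\|$, so everything reduces to $\ee\big(\|\wedge^2 S_t\|\big)=f(t)$ and Lemma \ref{lem:f}. The only (harmless) difference is bookkeeping: the paper first reduces to $s=0$ via the Markov property and then performs the cancellation, whereas you keep general $s$ and factor the expectation directly through the independence of increments together with $\ee(Z_s^{\pi_{\hat x}})=1$ from Lemma \ref{lem_Zmg} --- which is the same mechanism underlying the Markov reduction, spelled out.
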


\begin{proof}
Recall that $\ee_\mu$ is the expectation with respect to $\qumu\mu$. Using the Markov property, we have
\begin{equation}
	\eemu\mu\big(d(\hat x_{t+s},\hat y_{t+s}^s)\big)=\mathbb E_{\mu_s}\big(d\big(\hat x_{t},\hat y_t)\big)
\end{equation}
with $\mu_s$ the distribution of $\hat x_s$ conditioned on $\hat x_0\sim \mu$.
Then it is sufficient to prove the proposition for $s=0$. For any $t\geq0$, using the fact that $\|S_tz_t\|=\|S_t\|$ for $z_t$ a norm one representative of $\hat z_t$,
\begin{align*}
d(\hat x_{t},\hat y_t)&=\frac{\Vert x_t\wedge y_t\Vert}{\Vert x_t\Vert\Vert y_t\Vert}
=\frac{\Vert S_t x_0\wedge S_t z_t\Vert}{\Vert S_t x_0\Vert\Vert S_t z_t\Vert}
\leq\frac{\Vert\wedge^2 S_t\Vert}{\Vert S_t  x_0\Vert\Vert S_t\Vert}
\leq \frac{\Vert\wedge^2 S_t\Vert}{\Vert S_t  x_0\Vert^2}
\end{align*}
Using this inequality and the fact that $\dd\qumu\mu{}|_{\cF_t\otimes\cB}=\|S_t x_0\|^2\,\dd\pp\,\dd\mu(\hat x_0)$,
\begin{equation*}
\eemu\mu\big(d(\hat x_{t},\hat y_t)\big)\leq\int \ee\big(\Vert\wedge^2 S_t\Vert\big)\, \dd \mu(\hat x_0)\leq f(t).
\end{equation*}
Finally Lemma \ref{lem:f} yields the proposition.
\end{proof}

We turn to the proof of our main theorem, Theorem \ref{thm:expo_conv_wasser1}. The speed of convergence is expressed in terms of the Wasserstein distance $W_1$. Let us recall the definition of this distance for compact metric spaces: for $X$ a compact metric space equipped with its Borel $\sigma$-algebra, the Wasserstein distance of order $1$ between two probability measures $\sigma$ and $\tau$ on $X$ can be defined using the Kantorovich--Rubinstein duality Theorem as
\[W_1(\sigma,\tau)=\sup_{f\in \mathrm{Lip}_1(X)}\Big| \int_{X} f\,\dd\sigma- \int _X f\, \dd\tau\Big|,\]
where $\mathrm{Lip}_1(X)=\{f:X\rightarrow\mathbb R \ \mathrm{s.t.}\ \vert f(x)-f(y)\vert\leq d(x,y)\}$ is the set of Lipschitz continuous functions with constant one, and $d$ is the metric on $X$. Here we use this for $X=\pc k$ and $d$ defined in \eqref{eq_defdistance} (see also \eqref{eq:d_wedge}).
\smallskip

We recall our main theorem before proving it.
\begin{thm}\label{thm:expo_conv_wasser}
Assume that {\pur} and {\lerg} hold. Then the Markov process $(\hat x_t)_t$ has a unique invariant probability measure $\mu\inv$, and there exist $C>0$ and $\lambda>0$ such that for any initial distribution $\mu$ of $\hat x_0$ over $\pc k$, for all $t\geq 0$, the distribution $\mu_t$ of $\hat x_t$ satisfies
\begin{equation*}
W_1(\mu_t,\mu\inv)\leq C\e^{-\lambda t}
\end{equation*}
where $W_1$ is the Wasserstein distance of order $1$.
\end{thm}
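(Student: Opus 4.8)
The plan is to use the Kantorovich--Rubinstein dual formulation of $W_1$ and to control, for a fixed Lipschitz-$1$ test function $g$ on $\pc k$, the quantity $\int g\,\dd\mu_T-\int g\,\dd\mu\inv$. Since $d(\hat x,\hat y)\leq 1$ on $\pc k$, adding a constant to $g$ changes neither side nor the Lipschitz constant, so I may assume $\|g\|_\infty\leq 1$. The key device is to split the time $T=t+s$ and insert the $\cF$-measurable approximant $\hat y_{t+s}^{s}=\hat y_t\circ\theta^s$ between $\hat x_{t+s}$ and the invariant regime: the parameter $t$ will govern how well $\hat y_{t+s}^{s}$ approximates $\hat x_{t+s}$, while $s$ will govern the ergodic mixing coming from \lerg.

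For the first estimate, since $g$ is Lipschitz-$1$, Proposition~\ref{prop:convconv} gives directly
\[
\big|\eemu\mu\big(g(\hat x_{t+s})\big)-\eemu\mu\big(g(\hat y_{t+s}^{s})\big)\big|\leq\eemu\mu\big(d(\hat x_{t+s},\hat y_{t+s}^{s})\big)\leq C\e^{-\lambda t}.
\]
For the second estimate I would write $g(\hat y_{t+s}^{s})=F\circ\theta^s$ with $F:=g(\hat y_t)$, which is $\cF$-measurable and bounded by $1$ because $\hat y_t=S_t\cdot\hat z_t$ depends only on the noise. Proposition~\ref{prop:margin} (and the remark following it) then identifies the $\qumu\mu$-law of this $\cF$-measurable variable with its $\pprho{\rho_\mu}$-law, so that $\eemu\mu(g(\hat y_{t+s}^{s}))=\eerho{\rho_\mu}(F\circ\theta^s)$; applying Proposition~\ref{prop:convergo} with $\rho=\rho_\mu$ yields
\[
\big|\eerho{\rho_\mu}(F\circ\theta^s)-\eerho{\rhoinv}(F)\big|\leq C\e^{-\lambda s}.
\]
The crucial point is that the target $a(t):=\eerho{\rhoinv}\big(g(\hat y_t)\big)$ no longer depends on the initial law $\mu$ nor on $s$. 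Combining the two estimates gives the master inequality $\big|\eemu\mu(g(\hat x_{t+s}))-a(t)\big|\leq C(\e^{-\lambda t}+\e^{-\lambda s})$, valid for every $\mu$.

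From here the conclusions follow cheaply. Existence of an invariant $\mu\inv$ is granted by the Feller property and compactness of $\pc k$. Taking $\mu=\mu\inv$ makes the left-hand side equal to $\int g\,\dd\mu\inv$ for all $s$, so letting $s\to\infty$ gives $|\int g\,\dd\mu\inv-a(t)|\leq C\e^{-\lambda t}$; since the right-hand side is the same for any invariant measure, two invariant measures assign the same integral to every Lipschitz-$1$ $g$, hence coincide, which proves uniqueness. For a general $\mu$, inserting this last bound into the master inequality yields $|\eemu\mu(g(\hat x_{t+s}))-\int g\,\dd\mu\inv|\leq 2C\e^{-\lambda t}+C\e^{-\lambda s}$; choosing $t=s=T/2$ and taking the supremum over Lipschitz-$1$ $g$ produces $W_1(\mu_T,\mu\inv)\leq 3C\e^{-\lambda T/2}$, which is the claimed exponential decay after relabelling the constants.

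I expect the main obstacle to be conceptual rather than computational: it lies in the bookkeeping that lets the two ingredients interact, namely recognizing that replacing $\hat x_{t+s}$ by the $\cF$-measurable $\hat y_{t+s}^{s}=\hat y_t\circ\theta^s$ collapses the dependence on the initial measure into a dependence on $\rho_\mu$ only (Proposition~\ref{prop:margin}), which is exactly what allows the Lindbladian ergodicity of \lerg\ (through $\e^{t\Lin}(\rho_\mu)\to\rhoinv$, Theorem~\ref{thm:convavstate} and Proposition~\ref{prop:convergo}) to erase that dependence. The two competing time scales---approximation quality in $t$ and mixing in $s$---must be balanced, and one should check carefully the shift identity $g(\hat y_{t+s}^{s})=g(\hat y_t)\circ\theta^s$ together with the uniformity of all constants in $\mu$.
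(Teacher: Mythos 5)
Your proposal is correct and follows essentially the same route as the paper: the same insertion of the $\cF$-measurable approximant $\hat y_{t+s}^{s}=\hat y_t\circ\theta^s$, bounded by Proposition~\ref{prop:convconv}, combined with the reduction of $\cF$-measurable expectations to $\pprho{\rho_\mu}$ via Proposition~\ref{prop:margin} and the mixing bound of Proposition~\ref{prop:convergo}, with the split $t=s=T/2$ at the end. The only differences are bookkeeping: you anchor at the $\mu$- and $s$-independent target $a(t)=\eerho{\rhoinv}\big(g(\hat y_t)\big)$ and let $s\to\infty$ to get uniqueness, whereas the paper symmetrizes the telescoping over $\mu$ and $\mu\inv$ (invoking Remark~\ref{rmq_muinvrhoinv} to identify $\rho_{\mu\inv}=\rho\inv$), which your variant quietly bypasses.
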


\begin{proof}
Let $f\in \mathrm{Lip}_1(\pc k)$. From the definition of Wasserstein distance, we can restrict ourselves to functions $f$ that vanish at some point. Remark that since $\sup_{\hat x,\hat y\in \pc k}d(\hat x,\hat y)=1$, restricting to this set of functions implies $\|f\|_\infty\leq 1$. Let $\mu\inv$ be an invariant probability measure for $(\hat x_t)_t$. We will prove the exponential convergence of $(\mu_t)_t$ towards $\mu\inv$ for any initial $\mu_0$, and that will imply that $(\hat x_t)_t$ accepts a unique invariant probability measure. Let $t\geq 0$, and recall that $\hat y_{t}^{t/2} = \hat y_{t/2} \circ \theta^{t/2}$. We have
\begin{align}
\eemu\mu\big(f(\hat x_t)\big)-\eemu{\mu\inv}\big(f(\hat x_t)\big)&=\eemu\mu\big(f(\hat x_t)\big)-\eemu\mu\big(f(\hat y_t^{t/2})\big)+ \eemu{\mu\inv}\big(f(\hat y_t^{t/2})\big)-\eemu{\mu\inv}\big(f(\hat x_t)\big)  \nonumber\\
&\qquad + \eemu\mu\big(f(\hat y_t^{t/2})\big)-\eemu{\mu\inv}\big(f(\hat y_t^{t/2})\big)\nonumber\\&\leq\eemu\mu \big(d(\hat x_t,\hat y_t^{t/2})\big)+ \eemu{\mu\inv}\big(d(\hat x_t,\hat y_t^{t/2})\big)\label{eq:line_approx}  \\
&\qquad + \eemu\mu\big(f(\hat y_t^{t/2})\big)-\eemu{\mu\inv}\big(f(\hat y_t^{t/2})\big)\label{eq:line_erg}.
\end{align}
The two terms on the right hand side of line \eqref{eq:line_approx} are bounded using Proposition \ref{prop:convconv}.  Using Proposition \ref{prop:margin}, the difference on line \eqref{eq:line_erg} satisfies\[\eemu\mu\big(f(\hat y_t^{t/2})\big)-\eemu{\mu\inv}\big(f(\hat y_t^{t/2})\big)= \eerho{\rho_\mu}\big(f(\hat y_t^{t/2})\big)-\eerho{\rho\inv}\big(f(\hat y_t^{t/2})\big).\]
Then bounding the right hand side using Proposition \ref{prop:convergo}, it follows there exist $C>0$ and $\lambda>0$ such that
\[
\Big|\eemu\mu\big(f(\hat x_t)\big)-\eemu{\mu\inv}\big(f(\hat x_t)\big)\Big|\leq3C \e^{-\lambda t/2 }.
\]
Adapting the two constants yields the theorem.
\end{proof}

\section{Set of invariant measures under {\bf(Pur)}}\label{app:inv_measure}
The results and proofs of this section are a direct translation of \cite[Appendix B]{Inv1}. We reproduce the proofs for the reader's convenience.

Whenever {\lerg} does not hold, $\dim\ker \Lin>1$ and the semigroup $(\e^{t \Lin})_t$ accepts more than one fixed point in $\cD_k$. The convex set of invariant states can be explicitly classified given the matrices $(L_i)_{i\in I_b}$ and $(C_j)_{j\in I_b}$. Following \cite[Theorem 7]{Baumgartner} (alternatively see Theorem~7.2 and Proposition~7.6 in \cite{wolftour}, and \cite{ticozziviola}), there exists a decomposition
\[
\mathbb{C}^k  \simeq \mathbb{C}^{n_1} \oplus\cdots\oplus \mathbb{C}^{n_d} \oplus \mathbb{C}^{D}, \quad k = n_1 + \ldots + n_d + D
\]
with the following properties: 
\begin{enumerate}
\item The range of any invariant states is a subspace of $V = \mathbb{C}^{n_1} \oplus\cdots\oplus \mathbb{C}^{n_d} \oplus \{0\}$;
\item\label{it:e2} The restriction of the operators $L_i$ and $C_j$ to $ \mathbb{C}^{n_1} \oplus\cdots\oplus \mathbb{C}^{n_d}$ are block-diagonal, with
\begin{equation}
\label{eq:e2}
\begin{aligned}
L_i &= L_{1,i} \oplus\cdots\oplus L_{d,i},&& i\in I_b,\\
C_j &= C_{1,j} \oplus\cdots\oplus C_{d,j},&& j\in I_p;
\end{aligned}
\end{equation}
\item\label{it:e3} For each  $\ell=1,\ldots,d$ there is a decomposition $\mathbb{C}^{n_\ell} = \mathbb{C}^{k_\ell} \otimes \mathbb{C}^{m_\ell}, \, n_\ell = k_\ell \times m_\ell$, a unitary matrix $U_\ell$ on $\mathbb{C}^{n_\ell}$  and matrices $\{\hat L_{\ell,i}\}_{i\in I_b}$ and $\{\hat C_{\ell,j}\}_{j\in I_p}$ on $\mathbb{C}^{k_\ell}$ such that
\begin{equation}
\label{eq:e3}
\begin{aligned}
L_{\ell,i}  &= U_\ell (\hat L_{\ell,i} \otimes \Id_{\C^{m_\ell}}) U_\ell^*,&& i\in I_b,\\
C_{\ell,j} &= U_\ell (\hat C_{\ell,j} \otimes \Id_{\C^{m_\ell}}) U_\ell^*,&& j\in I_p;
\end{aligned}
\end{equation}
\item\label{it:e4} There exists a positive definite matrix $\rho_\ell$ on $\mathbb{C}^{k_\ell}$ such that
\begin{equation}
0 \oplus\cdots\oplus U_\ell (\rho_\ell \otimes \Id_{\C^{m_\ell}}) U_\ell^* \oplus\cdots\oplus 0 
\end{equation}
is a fixed point of $(\e^{t\Lin})_t$.
\end{enumerate}

Then, the set of fixed points for $(\e^{t\Lin})$ is
\begin{equation*}
U_1\big(\rho_1\otimes M_{m_1}(\C)\big)U_1^*\oplus\ldots \oplus U_d\big(\rho_d\otimes M_{m_d}(\C)\big)U_d^*\oplus0_{M_D(\C)}.
\end{equation*}
The decomposition simplifies under the purification assumption.
\begin{prop}\label{lem:phi_FP}
Assume that {\pur} holds. Then there exists a set $\{\rho_\ell\}_{\ell=1}^d$ of positive definite matrices and an integer $D$ such that the set of fixed points of $(\e^{t\Lin})_t$ is
\[\C\rho_1\oplus\cdots\oplus\C\rho_d\oplus 0_{M_D(\C)}.\]
\end{prop}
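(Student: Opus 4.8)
The plan is to show that assumption {\pur} forces every multiplicity space to be trivial, i.e.\ $m_\ell=1$ for all $\ell=1,\dots,d$. Granting this, the factor $M_{m_\ell}(\C)$ in the general fixed-point set reduces to $\C$, the block-$\ell$ component $U_\ell(\rho_\ell\otimes M_{m_\ell}(\C))U_\ell^*$ collapses to the line $\C\,U_\ell(\rho_\ell\otimes 1)U_\ell^*$, and relabelling $U_\ell(\rho_\ell\otimes 1)U_\ell^*$ as $\rho_\ell$ (still positive definite by property~\ref{it:e4}) yields exactly $\C\rho_1\oplus\cdots\oplus\C\rho_d\oplus 0_{M_D(\C)}$. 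So the entire content of the proposition is the implication that {\pur} rules out $m_\ell>1$.

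I would argue by contradiction. Suppose $m_\ell>1$ for some $\ell$. Choose any unit vector $e\in\C^{k_\ell}$ and let $\pi$ be the orthogonal projector onto the subspace $U_\ell\big(e\otimes\C^{m_\ell}\big)$ of $\C^{n_\ell}\subset V$, extended by zero on all other blocks and on $\C^D$; concretely $\pi=U_\ell\big((|e\rangle\langle e|)\otimes\Id_{\C^{m_\ell}}\big)U_\ell^*$. By construction $\pi$ is a nonzero orthogonal projector of rank $m_\ell>1$, so it suffices to verify that $\pi$ satisfies the two algebraic conditions of {\pur}, which contradicts the conclusion that any such projector has rank one.

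For the verification I would invoke property~\ref{it:e2}: since $L_i$ and $C_j$ leave $V$ invariant and act on it block-diagonally, and $\ran\pi\subset\C^{n_\ell}$, one has $L_i\pi=L_{\ell,i}\pi$ and $C_j\pi=C_{\ell,j}\pi$. Hence $\pi(L_i+L_i^*)\pi=\pi(L_{\ell,i}+L_{\ell,i}^*)\pi$ and, writing $\pi C_j^*C_j\pi=(C_j\pi)^*(C_j\pi)$, also $\pi C_j^*C_j\pi=\pi C_{\ell,j}^*C_{\ell,j}\pi$. Substituting the tensor structure \eqref{eq:e3} and using the elementary identity $(|e\rangle\langle e|)\,A\,(|e\rangle\langle e|)=\langle e,Ae\rangle\,|e\rangle\langle e|$ on the $\C^{k_\ell}$ factor, both sandwiched operators become scalar multiples of $\pi$, namely $\pi(L_i+L_i^*)\pi=\langle e,(\hat L_{\ell,i}+\hat L_{\ell,i}^*)e\rangle\,\pi$ and $\pi C_j^*C_j\pi=\langle e,\hat C_{\ell,j}^*\hat C_{\ell,j}e\rangle\,\pi$. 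Thus $\pi(L_i+L_i^*)\pi\propto\pi$ for all $i\in I_b$ and $\pi C_j^*C_j\pi\propto\pi$ for all $j\in I_p$, which contradicts {\pur} since $\operatorname{rank}\pi=m_\ell>1$. This forces $m_\ell=1$ and completes the argument.

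The hard part is the bookkeeping that lets me replace the global operators $L_i,C_j$ on $\C^k$ by their block-$\ell$ restrictions once they are sandwiched by $\pi$; this rests squarely on property~\ref{it:e2}, i.e.\ the invariance of $V$ and the absence of any leakage into $\C^D$. For the jump terms the point is delicate because $C_j^*$ need not preserve $V$; writing $\pi C_j^*C_j\pi=(C_j\pi)^*(C_j\pi)$ circumvents this, since only the forward action of $C_j$ on $\ran\pi\subset\C^{n_\ell}$ (which stays inside $\C^{n_\ell}$) enters the computation.
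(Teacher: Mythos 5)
Your proof is correct and takes essentially the same route as the paper's: both reduce the claim to showing {\pur} forces $m_1=\cdots=m_d=1$, and both derive the contradiction from the rank-$m_\ell$ projector $\pi=U_\ell\big(\pi_{\hat e}\otimes \Id_{\C^{m_\ell}}\big)U_\ell^*\oplus 0\oplus\cdots\oplus 0$ built from the Baumgartner--Narnhofer decomposition. Your extra bookkeeping (writing $\pi C_j^*C_j\pi=(C_j\pi)^*(C_j\pi)$ to avoid assuming $C_j^*$ preserves $V$) and your scalar coefficients $\langle e,(\hat L_{\ell,i}+\hat L_{\ell,i}^*)e\rangle$ and $\|\hat C_{\ell,j}e\|^2$ are if anything more precise than the constants displayed in the paper's one-line computation.
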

\begin{proof}
The statement follows from the discussion preceding the proposition if we show that {\pur} implies $m_1 = \ldots = m_d =1$. Assume that one of the $m_\ell$, e.g.\ $m_1$, is greater than $1$. Let $x$ be a norm one vector in $\mathbb{C}^{k_1}$. Then $\pi = U_1(\pi_{\hat{x}} \otimes I_{\mathbb{C}^{m_1}}) U_1^*\oplus 0 \oplus\cdots\oplus0$ is an orthogonal projection of rank $m_1>1$, and
\[
 \pi (L_i+L_i^* )\pi  = \|(L_{1,i}+L_{1,i}^*) x\|^2\, \pi \quad\mbox{and}\quad \pi (C_j^*C_j )\pi  = \|C_{1,j}^*C_{1,j} x\|^2 \,\pi\quad  \mbox{for all }i\in I_b, j\in I_p,
\]
and this contradicts \pur.
\end{proof}

It is clear from Proposition \ref{eq:e2} that to each extremal fixed point $0 \oplus \cdots \oplus \rho_\ell \oplus\cdots\oplus0$ corresponds a unique invariant measure $\mu_\ell$ supported on its range $\ran \rho_\ell$. The converse is the subject of the next proposition.
\begin{prop}\label{prop:set_inv_measure}
Assume {\pur} holds. Then any invariant probability measure of $(\hat x_t)_t$ is a convex combination of the measures $\mu_\ell$, $\ell=1,\ldots,d$.
\end{prop}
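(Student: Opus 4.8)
The plan is to prove that any invariant measure $\mu\inv$ of $(\hat x_t)_t$ lives on the disjoint union $\bigsqcup_{\ell=1}^d\pc{n_\ell}$ of the block projective spaces, and that its normalized restriction to each block is forced to equal $\mu_\ell$ by the uniqueness half of Theorem \ref{thm:expo_conv_wasser1}. Throughout I write $P_\ell$ for the orthogonal projector onto $\C^{n_\ell}$ and $Q$ for the projector onto $\C^D$, and I use the fact, contained in the structure theorem recalled above, that each $\C^{n_\ell}$ is invariant under the flow, i.e.\ $S_tP_\ell=P_\ell S_t$ almost surely; consequently $V=\bigoplus_\ell\C^{n_\ell}$ is invariant and each $\pc{n_\ell}$ is absorbing for $(\hat x_t)_t$. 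As a first, easy reduction I would restrict the support of $\mu\inv$ to $\mathrm P V$: by Remark \ref{rmq_muinvrhoinv}, $\rho_{\mu\inv}=\eemu{\mu\inv}(\pi_{\hat x})$ is an invariant state of $(\e^{t\Lin})_t$, so by the first property of the decomposition its range lies in $V$; hence $0=\tr(Q\rho_{\mu\inv})=\int\|Qx\|^2\,\dd\mu\inv(\hat x)$, which forces $\mu\inv$-almost every $x$ into $V$.

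The crux, and the step I expect to be the main obstacle, is to rule out genuine cross-block components; this is exactly where purification enters, through the exponential decay of $f(t)=\ee(\|\wedge^2 S_t\|)$ from Lemma \ref{lem:f}. Fix $\ell\neq\ell'$. Since $\C^{n_\ell}\perp\C^{n_{\ell'}}$ and $S_t$ preserves both blocks, the vectors $S_tP_\ell x$ and $S_tP_{\ell'}x$ stay orthogonal, so for a norm-one representative $x$ of $\hat x$ one has $\|P_\ell x_t\|\,\|P_{\ell'}x_t\|=\|S_tP_\ell x\wedge S_tP_{\ell'}x\|/\|S_tx\|^2=\|(\wedge^2S_t)(P_\ell x\wedge P_{\ell'}x)\|/\|S_tx\|^2$. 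Using the change of measure $\dd\qumu{\mu\inv}|_{\cF_t\otimes\cB}=\|S_tx\|^2\,\dd\pp\,\dd\mu\inv(\hat x)$ together with $\|(\wedge^2S_t)(P_\ell x\wedge P_{\ell'}x)\|\le\|\wedge^2S_t\|\,\|P_\ell x\|\,\|P_{\ell'}x\|$, I would obtain
\begin{equation*}
\eemu{\mu\inv}\big(\|P_\ell x_t\|\,\|P_{\ell'}x_t\|\big)=\int\ee\big(\|S_tP_\ell x\wedge S_tP_{\ell'}x\|\big)\,\dd\mu\inv(\hat x)\le f(t)\int\|P_\ell x\|\,\|P_{\ell'}x\|\,\dd\mu\inv(\hat x)\le f(t).
\end{equation*}
By stationarity the left-hand side equals the constant $\int\|P_\ell x\|\,\|P_{\ell'}x\|\,\dd\mu\inv(\hat x)$ for every $t$, while $f(t)\to0$ by Lemma \ref{lem:f}; hence that constant vanishes. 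Combined with the first step, this shows that $\mu\inv$-almost every $x$ lies in a single block $\C^{n_\ell}$, i.e.\ $\mu\inv$ is carried by $\bigsqcup_\ell\pc{n_\ell}$.

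Finally I would conclude by a standard restriction-to-absorbing-sets argument. Put $p_\ell=\mu\inv(\pc{n_\ell})$, so that $\sum_\ell p_\ell=1$ by the previous step; since each $\pc{n_\ell}$ is absorbing and $\mu\inv$ charges nothing outside $\bigsqcup_\ell\pc{n_\ell}$, the normalized restriction $\mu\inv_\ell:=p_\ell^{-1}\,\mu\inv|_{\pc{n_\ell}}$ (for $p_\ell>0$) is an invariant measure for the quantum trajectory generated by the restrictions of $H,L_i,C_j$ to the invariant subspace $\C^{n_\ell}$. That restricted dynamics satisfies \lerg, because $\rho_\ell$ is a faithful and (by Proposition \ref{lem:phi_FP}) up-to-scalar unique invariant state, so $0$ is a simple eigenvalue of the restricted generator; and it satisfies \pur, since any offending projector on $\C^{n_\ell}$, extended by $0$ to $\C^k$, would satisfy the global \pur conditions while having rank $\geq2$. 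Theorem \ref{thm:expo_conv_wasser1} applied within $\C^{n_\ell}$ then yields a unique invariant measure there, which is by definition $\mu_\ell$; hence $\mu\inv_\ell=\mu_\ell$ and $\mu\inv=\sum_\ell p_\ell\,\mu_\ell$, a convex combination as claimed.
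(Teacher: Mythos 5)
Your argument is correct in its essentials, but it follows a genuinely different route from the paper's. The paper's proof is a short measure-identification argument: for $f$ continuous, invariance of $\mu$ and Proposition \ref{prop:convconv} give $\int f\,\dd\mu=\lim_{t\to\infty}\eemu\mu\big(f(\hat y_t)\big)$; since $(\hat y_t)_t$ is $\cF$-measurable, Proposition \ref{prop:margin} shows this limit depends on $\mu$ only through $\rho_\mu$, which by Remark \ref{rmq_muinvrhoinv} and Proposition \ref{lem:phi_FP} equals $p_1\,\rho_1\oplus\cdots\oplus p_d\,\rho_d\oplus 0$; the linearity $\pprho{\rho_\mu}=p_1\,\pprho{\rho_1}+\cdots+p_d\,\pprho{\rho_d}$ then yields $\int f\,\dd\mu=\sum_\ell p_\ell\int f\,\dd\mu_\ell$ at once, with no support analysis and no blockwise verification of {\lerg} or {\pur}. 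You instead localize the support of $\mu\inv$ on the union of the block projective spaces (via $\tr(Q\rho_{\mu\inv})=0$ and, for the cross terms $\|P_\ell x_t\|\,\|P_{\ell'}x_t\|$, the decay of $\ee\big(\Vert\wedge^2 S_t\Vert\big)$ from Lemma \ref{lem:f} combined with stationarity), then restrict to each absorbing block and invoke Theorem \ref{thm:expo_conv_wasser1} there. Your route is longer but buys extra information the paper's proof does not make explicit: the support localization itself and the identification $p_\ell=\mu\inv\big(\pc{n_\ell}\big)$. Two points need tightening. First, $S_tP_\ell=P_\ell S_t$ is false as an operator identity on $\C^k$, since $S_t$ may map $\C^D$ into the blocks; what is true, and all your computation uses, is the invariance $S_tP_\ell=P_\ell S_tP_\ell$, which suffices because your first step confines $x$ to $V$. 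Second, even that invariance is not literally contained in the recalled items (1)--(4), which assert block-diagonality only of the $L_i$ and $C_j$ on $V$: you also need $K$ (hence, effectively, $H$) to preserve each $\C^{n_\ell}$. This follows from the standard fact that the support projection $P$ of an invariant state satisfies $\Lin(P\mc kP)\subseteq P\mc kP$, equivalently $V_iP=PV_iP$ for all $i$ and $(\Id-P)KP=0$; a line to this effect should be added. With these repairs, your blockwise verifications of {\lerg} (the fixed points of the restricted semigroup are exactly the block-supported global fixed points, i.e.\ $\C\rho_\ell$ with $\rho_\ell$ faithful) and of {\pur} (extend an offending projector by zero) are sound, and the unique invariant measure of the restricted dynamics is indeed the paper's $\mu_\ell$, so your conclusion $\mu\inv=\sum_\ell p_\ell\,\mu_\ell$ stands.
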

\begin{proof}
Let $\mu$ be an invariant probability measure for $(\hat x_t)_t$ and $f$ be a continuous function on $\pc k$. Proposition \ref{prop:convconv} implies that
\[\int f \,\dd\mu=\eemu\mu\big(f(\hat x_0)\big)=\eemu\mu\big(f(\hat x_t)\big)=\lim_{t\to\infty}\eemu\mu\big(f( \hat y_t)\big).\]
Since $(\hat y_t)_t$ is $\cF$-measurable, Proposition~\ref{prop:margin} implies $\int f \,\dd\mu=\lim_{t\to\infty}\eerho{\rho_\mu}\big(f( \hat y_t)\big)$, and by Remark \ref{rmq_muinvrhoinv}, $\rho_\mu\in\cD_k$ is a fixed point of $(\e^{t\Lin})_t$. Proposition \ref{lem:phi_FP} ensures that there exist nonnegative numbers $p_1,\ldots,p_d$ summing up to one such that $\rho_\mu=p_1\,\rho_1\oplus\cdots\oplus p_d\,\rho_d\oplus 0_{M_D(\C)}$. From the definition of $\pprho{\rho_\mu}$,
\[\pp^{\rho_\mu}=p_1\,\pp^{\rho_1}+\cdots+p_d\,\pp^{\rho_d}\]
with the abuse of notation $\rho_\ell:= 0\oplus\cdots\oplus \rho_\ell\oplus\cdots\oplus 0$, so that
\[\int f \, \dd\mu=\lim_{t\to\infty} p_1\,\eerho{\rho_1}\big(f(\hat y_t)\big)+\cdots+p_d\,\eerho{\rho_d}\big(f(\hat y_t )\big).\]
The same argument gives $\int f \, \dd\mu_\ell= \lim_{t\to\infty} \eerho{\rho_\ell}\big(f(\hat y_t)\big)$, and we have $\mu=p_1\,\mu_1+\ldots+ p_d\,\mu_d$.
\end{proof}

\section{{\bf(Pur)} is not necessary for purification}\label{sec:pur_not_nec}
As shown by the following example, the condition {\pur} is sufficient but not necessary for \eqref{eq:purification} to hold. 

Let $k=3$ and fix an orthonormal basis $\{e_1,e_2,e_3\}$ of $\C^3$. Let $I_b=\{0,1\}$, $I_p=\{2\}$, $u=(e_1+e_2+e_3)/\sqrt{3}$ and $v=(e_1+e_3)/\sqrt{2}$. Let
\begin{equation} \label{eq_defcontrexpur}
	H=0,\qquad V_0=L_0=e_1u^*, \qquad V_1=L_1=2 vv^*+e_2e_2^*, \qquad V_2=C_2=u e_1^*.
\end{equation}
\begin{prop}\label{prop:Lirr-pur}
	Let $\Lin$ be the Lindblad operator given by \eqref{eq_deflindblad} with $H$, $V_1$, $V_2$, $V_3$ defined in~\eqref{eq_defcontrexpur}. Then {\lerg} holds and the unique invariant state $\rhoinv$ is positive definite.
\end{prop}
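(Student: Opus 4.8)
The plan is to verify the two assertions---that \lerg{} holds and that $\rhoinv$ is positive definite---largely by direct computation, exploiting the very simple structure of the operators in \eqref{eq_defcontrexpur}. First I would record the relevant actions of the matrices on the basis $\{e_1,e_2,e_3\}$: note that $L_0=e_1u^*$ sends any vector $w$ to $\langle u,w\rangle e_1$, that $C_2=ue_1^*$ sends $w$ to $\langle e_1,w\rangle u$, and that $L_1=2vv^*+e_2e_2^*$ is self-adjoint with $L_1 v=2v$, $L_1 e_2=e_2$, and $L_1$ annihilating the vector orthogonal to both $v$ and $e_2$ in the $\{e_1,e_3\}$-plane. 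From these I would assemble $L_0^*L_0$, $L_1^*L_1=L_1^2$, and $C_2^*C_2=e_1e_1^*$, and hence write down $\Lin$ explicitly via \eqref{eq_deflindblad} with $H=0$.

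For the ergodicity claim, by the characterization recalled in the excerpt, \lerg{} is equivalent to simplicity of the eigenvalue $0$ of $\Lin$, i.e.\ uniqueness of the invariant state. The cleanest route is to check irreducibility: show that no nontrivial subspace of $\C^3$ is left invariant by all of $\{L_0,L_1,C_2\}$ (equivalently, that the only orthogonal projection $\pi$ with $\Lin(\pi\,\mathrm{M}_3(\C)\,\pi)\subset\pi\,\mathrm{M}_3(\C)\,\pi$ is $\pi=\Id$), since irreducibility of the generator forces a unique and faithful invariant state. Here $C_2$ maps the $e_1$-direction to $u$, which has a component along every basis vector, while $L_0$ maps everything with a $u$-component back to $e_1$; combined with the action of $L_1$ on the $\{v,e_2\}$ directions, these should allow one to reach the whole space starting from any nonzero vector, ruling out any proper invariant subspace. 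I would make this explicit by starting from an arbitrary candidate invariant subspace and tracking how the generators enlarge it until it fills $\C^3$.

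For positive definiteness of $\rhoinv$, the key structural fact is that, under \lerg, the decomposition preceding Proposition \ref{lem:phi_FP} has a single block ($d=1$, $D=0$) and the invariant state is the unique fixed point $\rho_1$, which is positive definite on its supporting block $V$. Thus it suffices to confirm that the support $V$ is all of $\C^3$, i.e.\ that $\rhoinv$ has full rank. This follows once irreducibility is established, since an irreducible Lindbladian has a faithful (hence positive definite) invariant state; alternatively one can solve $\Lin\rhoinv=0$ directly. I expect the main obstacle to be the irreducibility verification: one must ensure that the jump operator $C_2=ue_1^*$ genuinely couples the $e_1$-direction to all three basis directions through $u$, and that no proper subspace aligned with the eigenstructure of $L_1$ can be stabilized by the full family. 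Once irreducibility is in hand, both conclusions follow, and I would close by either invoking the faithfulness of the invariant state of an irreducible Lindbladian or exhibiting $\rhoinv$ explicitly to read off its strict positivity.
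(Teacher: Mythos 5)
Your proposal is correct and takes essentially the same route as the paper: both rest on the criterion of \cite[Proposition 7.6]{wolftour}, reducing \lerg{} and the faithfulness of $\rhoinv$ simultaneously to the claim that the only orthogonal projector $\pi$ with $(\Id-\pi)L_0\pi=(\Id-\pi)L_1\pi=(\Id-\pi)C_2\pi=0$ is $\pi=\Id$ (irreducibility), after which positive definiteness is automatic. The one point where you diverge is in how that verification is executed, and there your argument is left at sketch level. The paper closes it with a dimension-$3$ dichotomy: a proper invariant projector has rank $1$ or $2$, so either its range is spanned by a common eigenvector of $L_0,L_1,C_2$, or the range of $\Id-\pi$ is spanned by a common eigenvector of $L_0^*,L_1^*,C_2^*$; in both cases the only common eigenvectors of the pair $(L_0,C_2)$ (or of their adjoints) lie in $\C(e_2-e_3)$, which is not an eigenspace of the self-adjoint $L_1$. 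Your subspace-enlargement plan closes just as well, and since it is the only substantive step you should make it explicit: if a nonzero invariant subspace $W$ contains some $w$ with $\langle u,w\rangle\neq 0$, then $L_0w$ gives $e_1\in W$, then $C_2e_1=u\in W$ and $L_1e_1=\sqrt2\,v\in W$, and $\{e_1,u,v\}$ spans $\C^3$; otherwise $W\perp u$, and invariance under $C_2$ forces $W\perp e_1$ as well (else $u\in W$, contradicting $W\perp u$), so $W\subseteq\C(e_2-e_3)$, which is ruled out because $L_1(e_2-e_3)=-e_1+e_2-e_3\notin\C(e_2-e_3)$. This unified treatment is marginally more robust than the paper's rank dichotomy (it does not use $k=3$), while the paper's version is shorter because in dimension $3$ everything reduces to eigenvector computations; your final appeal to faithfulness of the invariant state of an irreducible Lindbladian is exactly the implicit content of the paper's use of the same criterion.
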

\begin{proof}
	Using \cite[Proposition 7.6]{wolftour}, it is sufficient to prove that if $\pi$ is a non null orthogonal projector such that $(\Id-\pi) L_0\pi=(\Id-\pi) L_1\pi = (\Id-\pi)C_2\pi=0$, then $\pi=\Id$. Assume $\operatorname{rank}\pi<3$. Since $\pi\in M_3(\C)$, there exist $\hat x\in \pc 3$ such that either $\pi=\pi_{\hat x}$ or $\pi=\Id-\pi_{\hat x}$. If the first alternative holds, $(\Id-\pi) L_0\pi=(\Id-\pi) L_1\pi = (\Id-\pi)C_2\pi=0$ implies $\hat x$ is the equivalence class of a common eigenvector of $L_0$, $L_1$ and $C_2$. If the second alternative holds, $\hat x$ is the equivalence class of a common eigenvector of $L_0^*$, $L_1^*$ and $C_2^*$. The only common eigenvectors of $L_0$ and $C_2$ or $L_0^*$ and $C_2^*$ are elements of $\C (e_2-e_3)$. Since $L_1$ is self adjoint, and this eigenspace is not an eigenspace of $L_1$, the proposition holds.
\end{proof}

 In the orthonormal basis $\{e_1,e_2,e_3\}$,
$$L_0^*+L_0=\frac{1}{\sqrt{3}}\begin{pmatrix}
2&1&1\\ 1&0&0\\ 1&0&0
\end{pmatrix},\quad L_1^*+L_1=2\begin{pmatrix}
1&0&1\\0&1&0\\1&0&1
\end{pmatrix}
\and C_2^*C_2=\begin{pmatrix}
1&0&0\\0&0&0\\0&0&0
\end{pmatrix}.$$
Taking $\pi$ the orthogonal projector onto the subspace spanned by $\{e_2,e_3\}$ it follows that {\pur} does not hold.
Yet we have the following proposition.

\begin{prop} \label{prop:purcontrex}
Consider the family of processes $(\rho_t)_t$ defined by \eqref{eq1:SME} with $H$, $L_0$, $L_1$, $C_2$ defined in \eqref{eq_defcontrexpur}. Then for any $\rho\in \cD_k$,
\[\lim_{t\to\infty} \inf_{\hat y\in \pc k}\|\rho_t -\pi_{\hat y}\|=0\quad \pp^{\rho}\text{-almost surely.}\]
\end{prop}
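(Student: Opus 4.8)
The plan is to prove purification by showing two things: a single jump of $N_2$ instantaneously produces a pure state that stays pure, and almost surely such a jump occurs in finite time. Both facts hold for arbitrary $\rho\in\cD_k$, so no preliminary reduction is needed (one could, if desired, reduce to $\rho=\Id/k$ via the absolute continuity $\dd\pprho\rho = k\,\tr(\rho M_\infty)\,\dd\pp\ch$ of Theorem \ref{thm:M_t}).

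First I would record the purifying effect of a jump. By \eqref{eq_defS}, at a jump time $\tau$ of $N_2$ one has $S_\tau=C_2S_{\tau-}=u\,e_1^*S_{\tau-}$; since $C_2=ue_1^*$ has rank one, $S_\tau$ has rank one, and therefore
\[
\rho_\tau=\frac{S_\tau\rho S_\tau^*}{\tr(S_\tau\rho S_\tau^*)}=\frac{\langle e_1,S_{\tau-}\rho S_{\tau-}^*e_1\rangle\,uu^*}{\tr(\cdots)}=\pi_{\hat u}
\]
is a rank-one projector. As $S_t=S_t^\tau S_\tau$ remains of rank one for all $t\ge\tau$, the state $\rho_t$ stays pure for every $t\ge\tau$, so $\inf_{\hat y\in\pc k}\|\rho_t-\pi_{\hat y}\|=0$ for all $t\ge\tau$. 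Hence on the event that $N_2$ jumps at least once the purification limit holds (in fact the infimum is exactly zero from the first jump on), and it remains only to prove that $\{N_2\text{ never jumps}\}$ is $\pprho\rho$-negligible.

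Next I would compute the no-jump probability explicitly. Using $\dd\pprho\rho|_{\cF_t}=Z_t^\rho\,\dd\pp$ with $Z_t^\rho=\tr(S_t^*S_t\rho)$, and the fact that under $\pp$ the process $N_2$ is a standard Poisson process independent of the Brownian motions $(W_i)_{i\in I_b}$, one gets
\[
\pprho\rho\big(N_2|_{[0,t]}\equiv 0\big)=\e^{-t}\,\ee\big(\tr(\tilde S_t^*\tilde S_t\rho)\big),
\]
where $\tilde S_t$ solves \eqref{eq_defS} with the jump term removed. A short Itô computation shows that $\Phi_t:=\ee(\tilde S_t\rho\tilde S_t^*)$ satisfies $\dot\Phi_t=\tilde{\mathcal L}(\Phi_t)+\Phi_t$, where $\tilde{\mathcal L}(\mu):=\Lin(\mu)-C_2\mu C_2^*=K\mu+\mu K^*+\sum_{i\in I_b}L_i\mu L_i^*$, the extra $+\Phi_t$ coming precisely from the $\frac{\# I_p}{2}\Id$ drift in \eqref{eq_defS}. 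Thus $\Phi_t=\e^{t}\e^{t\tilde{\mathcal L}}(\rho)$, the factor $\e^{-t}$ cancels, and one obtains the clean identity
\[
\pprho\rho\big(N_2|_{[0,t]}\equiv 0\big)=\tr\big(\e^{t\tilde{\mathcal L}}(\rho)\big).
\]

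Finally I would let $t\to\infty$. The map $\e^{t\tilde{\mathcal L}}$ is completely positive and trace-nonincreasing, with $\tr\tilde{\mathcal L}(\mu)=-\tr(C_2^*C_2\mu)$, so $t\mapsto\tr(\e^{t\tilde{\mathcal L}}(\rho))$ is nonincreasing and converges; its limit is $\tr(\rho P_\infty)$ with $P_\infty=\lim_t\e^{t\tilde{\mathcal L}^*}(\Id)\ge0$. This limit vanishes for every $\rho$ unless $W:=\ran P_\infty$ is a nonzero subspace, invariant under $K$ and the $L_i$ and contained in $\ker C_2$ (a dark subspace on which the no-jump evolution is trace preserving). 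Here $C_2^*C_2=e_1e_1^*$, so $\ker C_2=\{e_1\}^\perp=\operatorname{span}\{e_2,e_3\}$; and a direct check shows this plane contains no eigenvector of $K$, equivalently of $\Sigma:=\sum_{i\in I_b}L_i^*L_i+C_2^*C_2=-2\operatorname{Re}K$ (one computes that $\Sigma$ maps $\operatorname{span}\{e_2,e_3\}$ out of itself, ruling out any $K$-invariant line or plane inside $\ker C_2$). Hence there is no dark subspace, $\tr(\e^{t\tilde{\mathcal L}}(\rho))\to0$, the event $\{N_2\text{ never jumps}\}=\bigcap_t\{N_2|_{[0,t]}\equiv 0\}$ is $\pprho\rho$-negligible, and the purification \eqref{eq:purification} follows from the second paragraph.

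I expect the main obstacle to be the last dichotomy: turning the absence of a dark subspace into the decay $\tr(\e^{t\tilde{\mathcal L}}(\rho))\to0$. The easy direction (a dark subspace prevents decay) is immediate, but the converse rests on the structure theory of trace-nonincreasing completely positive semigroups, namely the identification of $\ran P_\infty$ as a subspace invariant under $K$ and the $L_i$ and annihilated by $C_2$. Once this is granted, the remaining step — that $\operatorname{span}\{e_2,e_3\}$ contains no eigenvector of $\Sigma$ — is an elementary $2\times2$ computation, and the rest of the argument is bookkeeping.
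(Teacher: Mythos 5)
Your proposal is correct in substance and, on the one nontrivial point, takes a genuinely different route from the paper. The second half of your argument (a single jump of the rank-one $C_2=ue_1^*$ forces $\rho_\tau=\pi_{\hat u}$, and purity persists because $S_t=S_t^\tau S_\tau$ keeps rank one) is exactly the paper's. Where you diverge is in proving that $N_2$ jumps in finite time $\pprho\rho$-almost surely: the paper first proves Proposition \ref{prop:Lirr-pur} (that {\lerg} holds and $\rhoinv$ is positive definite, hence $\tr(C_2^*C_2\rhoinv)>0$) and then invokes the K\"ummerer--Maassen pathwise ergodic theorem \cite{KuMa} to get $N_2(t)/t\to\tr(C_2^*C_2\rhoinv)>0$ a.s., while you compute the no-jump probability exactly. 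Your bookkeeping there is right: on $\{N_2|_{[0,t]}\equiv0\}$ one has $S_t=\tilde S_t$, the factor $\e^{-t}$ from the Poisson law cancels against the $\frac{\# I_p}{2}\,\Id$ drift in \eqref{eq_defS}, and indeed $\pprho\rho\big(N_2|_{[0,t]}\equiv0\big)=\tr\big(\e^{t\tilde{\mathcal L}}(\rho)\big)$ with $\tilde{\mathcal L}=\Lin-C_2\,\cdot\,C_2^*$. Your route is more self-contained (no ergodic theorem, and no use of {\lerg} at all — only the absence of a dark subspace, which is strictly weaker) and gives an exponential tail for the first-jump time as a bonus; the paper's route is shorter mainly because Proposition \ref{prop:Lirr-pur} is already needed elsewhere. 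Your concluding linear algebra is also correct, including the passage from $K$ to $\Sigma$ — valid here only because $H=0$, so that $K=-\tfrac12\Sigma$ exactly — and one checks that $\Sigma=uu^*+4vv^*+e_2e_2^*+e_1e_1^*$ maps $\operatorname{span}\{e_2,e_3\}$ out of itself and has no eigenvector in that plane, ruling out any $K$-invariant line or plane inside $\ker C_2$.

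The one genuine inaccuracy is your identification of the dark subspace as $\ran P_\infty$. In general $\ran P_\infty$ is neither invariant nor contained in $\ker C_2$: already for a classical sub-Markov chain with a dark state $1$ and a state $2$ that jumps to $1$ at rate $a$ and is killed at rate $b$, one gets $P_\infty=\operatorname{diag}\big(1,\tfrac{a}{a+b}\big)$, whose range is the whole space. The correct object is the eigenspace of $P_\infty$ at eigenvalue $1$: from $\e^{t\tilde{\mathcal L}^*}(P_\infty)=P_\infty$ and $P_\infty\le\|P_\infty\|\,\Id$ one gets, letting $t\to\infty$, $P_\infty\le\|P_\infty\|\,P_\infty$, so $\|P_\infty\|\in\{0,1\}$. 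If $P_\infty\neq0$, pick a unit vector $x$ with $P_\infty x=x$; since $P_\infty\le\e^{t\tilde{\mathcal L}^*}(\Id)\le\Id$, one has $\tr\big(\e^{t\tilde{\mathcal L}}(xx^*)\big)=1$ for all $t$, hence $\tr\big(C_2^*C_2\,\e^{t\tilde{\mathcal L}}(xx^*)\big)=0$ and, by positivity, the range of $\e^{t\tilde{\mathcal L}}(xx^*)$ lies in $\ker C_2$ for all $t$. The span of these ranges is, by the Dyson expansion of $\e^{t\tilde{\mathcal L}}$, the smallest subspace containing $x$ that is invariant under $K$ and the $L_i$, and it is a nonzero dark subspace inside $\ker C_2$. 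With this substitution your dichotomy is exactly right, and your computation ruling out such a subspace finishes the proof.
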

\begin{proof}
Proposition \ref{prop:Lirr-pur} implies that $\rhoinv$, the unique element of $\cD_k$ invariant by $(\e^{t\Lin})_t$ is positive definite. Then, $\tr(C_2^*C_2\rhoinv)>0$. The results of \cite{KuMa} thus ensure that for any $\rho\in \cD_k$,
\[\lim_{t\to \infty} N_2(t)/t=\tr(C_2^*C_2\,\rhoinv),\quad\pp^{\rho}\text{-almost surely.}\]
Let $T=\inf\{t\geq 0 : N_2(t)\geq 1\}$. Then $\pp^{\rho}(T<\infty)=1$ and from the definition of $C_2$,
\[\rho_T=\pi_{\hat u}\and \rho_t=\pi_{S_t^T\cdot \hat u} \text{ for any }t\geq T.\]
Hence $\inf_{\hat y\in \pc k}\|\rho_t -\pi_{\hat y}\|=0$ for any $t\geq T$ and $\pp^{\rho}(T<\infty)=1$ yield the proposition.
\end{proof}

\begin{cor}
Consider the process $(\hat x_t)_t$ defined by \eqref{eq:SDEpur} with $H$, $L_0$, $L_1$, $C_2$ defined in \eqref{eq_defcontrexpur}. Then $(\hat x_t)_t$ accepts a unique invariant probability measure $\mu\inv$ and there exist $C>0$ and $\lambda>0$ such that for any initial distribution $\mu$ of $\hat x_0$ over $\pc 3$, for all $t\geq 0$, the distribution $\mu_t$ of $\hat x_t$ satisfies
\[W_1(\mu_t,\mu\inv)\leq C e^{-\lambda t}.\]
\end{cor}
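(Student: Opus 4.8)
The plan is to run the proof of Theorem~\ref{thm:expo_conv_wasser} essentially verbatim. That proof invokes {\pur} only to produce an $\cF$-measurable process $(\hat y_t)_t$ approximating $(\hat x_t)_t$ exponentially fast, i.e. to establish Proposition~\ref{prop:convconv}; its other ingredient, the ergodic estimate of Proposition~\ref{prop:convergo}, rests on {\lerg}, which holds here by Proposition~\ref{prop:Lirr-pur}. So the task reduces to constructing, for the present example, an $\cF$-measurable process $(\hat y_t)_t$ satisfying the conclusion of Proposition~\ref{prop:convconv}, namely $\eemu\mu\big(d(\hat x_{t+s},\hat y_{t+s}^s)\big)\le C\e^{-\lambda t}$ uniformly in $s$ and in the initial law $\mu$. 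Once this bound holds, the argument of Theorem~\ref{thm:expo_conv_wasser} applies unchanged, since it uses $(\hat y_t)_t$ only through this estimate and through $\cF$-measurability (Remark~\ref{rmq_muinvrhoinv} supplies $\rho_{\mu\inv}=\rho\inv$, and Proposition~\ref{prop:convergo} then handles the ergodic term).

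The structural feature I would exploit is that $C_2=ue_1^*$ has rank one with range $\C u$, so at any jump time of $N_2$ the state is reset to the deterministic point $\hat u$: if $N_2$ jumps at time $\sigma$ then $S_\sigma x_0=C_2S_{\sigma-}x_0=u\,\langle e_1,S_{\sigma-}x_0\rangle$, and since $N_2$ has stochastic intensity $\|C_2x_{\sigma-}\|^2=|\langle e_1,x_{\sigma-}\rangle|^2$ the scalar $\langle e_1,S_{\sigma-}x_0\rangle$ is a.s.\ nonzero at a jump, whence $\hat x_\sigma=\hat u$ independently of $\hat x_0$. I set $\tau=\inf\{t\ge0:N_2(t)\ge1\}$ and define the $\cF$-measurable process $\hat y_t=S_t^{\tau}\cdot\hat u$ for $t\ge\tau$ (any fixed value for $t<\tau$). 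Using $S_t=S_t^{\tau}S_\tau$ for $t\ge\tau$ and $S_\tau\cdot\hat x_0=\hat u$, one gets $\hat x_t=S_t^{\tau}\cdot\hat u=\hat y_t$ for all $t\ge\tau$; the same computation applied to the process shifted by $\theta^s$, together with $d\le1$, yields
\[
\eemu\mu\big(d(\hat x_{t+s},\hat y_{t+s}^s)\big)\le\qumu\mu\big(N_2(s+t)-N_2(s)=0\big)=\eemu\mu\big(\pprho{\pi_{\hat x_s}}(\tau>t)\big).
\]
Hence the analogue of Proposition~\ref{prop:convconv} reduces to the uniform exponential tail $\sup_{\hat x\in\pc 3}\pprho{\pi_{\hat x}}(\tau>t)\le C\e^{-\lambda t}$.

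To get this tail I would prove a minorization: there are $\Delta>0$ and $p>0$ with $\inf_{\hat x}\pprho{\pi_{\hat x}}\big(N_2(\Delta)\ge1\big)\ge p$, the exponential tail then following by iterating over the intervals $[n\Delta,(n+1)\Delta]$ with the Markov property. On $\{N_2(\Delta)=0\}$ the state evolves as the pure diffusion $(\tilde x_t)_t$ obtained by deleting the $N_2$-term, and the theory of point processes with stochastic intensity gives $\pprho{\pi_{\hat x}}(N_2(\Delta)=0)=g(\hat x)$ with $g(\hat x)=\ee_{\hat x}\big(\exp(-\int_0^\Delta|\langle e_1,\tilde x_s\rangle|^2\,\dd s)\big)$. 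I claim $g(\hat x)<1$ for every $\hat x$: for $\hat x\notin e_1^\perp$ this is immediate, while for $\hat x\in e_1^\perp=\mathrm{span}(e_2,e_3)$ one computes $\langle e_1,L_0x\rangle=\langle u,x\rangle$ and $\langle e_1,L_1x\rangle=\sqrt2\,\langle v,x\rangle$, and these vanish simultaneously on $e_1^\perp$ only for $x=0$. Thus at every point of $e_1^\perp$ at least one Brownian vector field pushes $\langle e_1,\tilde x_s\rangle$ off $0$, so $\int_0^\Delta|\langle e_1,\tilde x_s\rangle|^2\,\dd s>0$ with positive probability and $g(\hat x)<1$. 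Since $g$ is continuous (the diffusion on the compact space $\pc 3$ depends continuously on its starting point) and $\pc 3$ is compact, $\sup_{\hat x}g(\hat x)=1-p<1$, which is the minorization.

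Assembling these steps, the reduction of the second paragraph turns the uniform exponential tail of $\tau$ into the required substitute for Proposition~\ref{prop:convconv}; combined with Proposition~\ref{prop:convergo} (valid because {\lerg} holds), the proof of Theorem~\ref{thm:expo_conv_wasser} then produces uniqueness of $\mu\inv$ and $W_1(\mu_t,\mu\inv)\le C\e^{-\lambda t}$. The hard part is the minorization of the third paragraph, i.e.\ showing the diffusion cannot be trapped in the subspace $e_1^\perp$ on which $C_2$ vanishes; this is exactly where the transversality encoded in the failure of $e_2-e_3$ to be an eigenvector of $L_1$ (the feature already used in Proposition~\ref{prop:Lirr-pur}) enters, and it is what replaces the now-unavailable assumption {\pur}.
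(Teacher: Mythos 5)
Your proof is correct, but it takes a genuinely different route from the paper's. The paper exploits the fact that its proof of Theorem \ref{thm:expo_conv_wasser1} uses {\pur} only through the \emph{conclusion} of Proposition \ref{prop:pur} (that $M_\infty$ is a.s.\ a rank-one projector), and verifies that conclusion directly: taking $\rho_0=\Id/3$ makes $M_t$ unitarily equivalent to $\rho_t$, and by the argument of Proposition \ref{prop:purcontrex} --- where the K\"ummerer--Maassen ergodic theorem \cite{KuMa} and the positive definiteness of $\rhoinv$ (Proposition \ref{prop:Lirr-pur}) give an a.s.\ finite first jump time $T$ of $N_2$ --- the matrix $M_t$ has rank one for all $t\geq T$; the whole chain Lemma \ref{lem:f} $\to$ Proposition \ref{prop:convconv} $\to$ Theorem \ref{thm:expo_conv_wasser1} then runs unchanged, with Fekete's lemma in Lemma \ref{lem:f} silently upgrading the \emph{qualitative} input ($f(t)\to 0$) into exponential decay. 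You keep the same reduction to Proposition \ref{prop:convconv} but discard the maximum-likelihood process and the $\wedge^2$/Fekete machinery altogether, replacing them by the exact coupling $\hat x_t=\hat y_t=S_t^{\tau}\cdot\hat u$ after the first jump, and you pay for the exponential rate with the quantitative minorization $\sup_{\hat x}\pprho{\pi_{\hat x}}(N_2(\Delta)=0)<1$. What each buys: the paper's route is shorter given the machinery already in place and needs no new estimate beyond a.s.\ finiteness of $T$; yours is more self-contained (no appeal to \cite{KuMa}) and makes the convergence mechanism and the origin of the rate explicit, your approximating process even being exact after time $\tau$ rather than merely asymptotic. Two of your steps are stated as standard facts but deserve a word of rigor: (a) the identity $g(\hat x)=\ee_{\hat x}\big(\exp(-\int_0^\Delta|\langle e_1,\tilde x_s\rangle|^2\,\dd s)\big)$ requires justifying the reweighting of the diffusion law on the no-jump event; the cleanest route is the unnormalized picture under the reference measure $\pp$, which gives $g(\hat x)=\e^{-\Delta}\,\langle x,\ee(\tilde S_\Delta^*\tilde S_\Delta)\,x\rangle$ with $\tilde S$ the jump-free solution of \eqref{eq_defS} --- a quadratic form in $x$, which incidentally makes your continuity claim for $g$ immediate; (b) for $g(\hat x)<1$ on $e_1^\perp$, the precise argument is that $\int_0^\Delta|\langle e_1,\tilde x_s\rangle|^2\,\dd s=0$ a.s.\ would force the quadratic variation of $\langle e_1,\tilde S_s x\rangle$ to vanish, hence $\langle u,\tilde S_s x\rangle=\langle v,\tilde S_s x\rangle=0$, and since $e_1,u,v$ span $\C^3$ and $\tilde S_s$ is invertible this contradicts $\tilde S_s x\neq 0$ --- exactly the transversality you identify, which is indeed the same structural feature used in Proposition \ref{prop:Lirr-pur}.
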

\begin{proof}
It is a direct adaptation of our proof of Theorem \ref{thm:expo_conv_wasser1}. Indeed Theorem \ref{thm:expo_conv_wasser1} holds if one substitutes the conclusion of Proposition \ref{prop:pur} for \pur. Taking $\rho_0=\Id/3$ in the latter proposition yields $\rho_t=\frac{S_t S_t^*}{\tr(S_t S_t^*)}$ and $M_t=\frac{S_t^* S_t}{\tr(S_t S_t^*)}$. Therefore, $\rho_t$ and $M_t$ are unitarily equivalent. Following the arguments and notation of the proof of Proposition \ref{prop:purcontrex} we see that $\pp\ch$-almost surely, $M_T$ has rank one, and so does any $M_t$ for $t\geq T$. Hence the conclusion of Proposition \ref{prop:pur} holds and the corollary is proven.
\end{proof}

Following the proofs of the discrete-time results of \cite{Inv1}, we can prove that the implication in Proposition \ref{prop:pur} is an equivalence if {\pur} is replaced by
\hypertarget{nscpur}{}
\begin{description}
\item[\nscpur] Any non zero orthogonal projector $\pi$ that satisfies $\pi S_t^*S_t\pi\propto\pi$ $\pp$-almost surely for any $t\geq 0$ has rank one.
\end{description}

\noindent Alas, in practice, such a condition is hard to check.

\section{Examples}\label{sec:ex}
In the following examples $k=2$. We recall the definition of the Pauli matrices:
\begin{align*}
\sigma_x:=\begin{pmatrix}
0&\phantom{-}1\\1&\phantom{-}0
\end{pmatrix},\quad
\sigma_y:=\begin{pmatrix}
0&-\ii\\\ii&\phantom{-}0
\end{pmatrix}\quad\mbox{and}\quad
\sigma_z:=\begin{pmatrix}
1&\phantom{-}0\\0&-1
\end{pmatrix}.
\end{align*}
A standard orthonormal basis of $M_2(\C)$ equipped with the Hilbert--Schmidt inner product is 
\[\left(\tfrac{1}{\sqrt{2}}\Id,\tfrac{1}{\sqrt{2}}\sigma_x,\tfrac{1}{\sqrt{2}}\sigma_y,\tfrac{1}{\sqrt{2}}\sigma_z\right).\]
In the basis of Pauli matrices one can write in a unique way any projection $\pi_{\hat x}$ as
\[ \pi_{\hat x}=\tfrac12\big(\Id + \x \sigma_x + \y \sigma_y+\z \sigma_z\big)\]
where
\[\x= \tr (\pi_{\hat x}\sigma_x), \quad\y= \tr (\pi_{\hat x}\sigma_y), \quad\z= \tr (\pi_{\hat x}\sigma_z).\]
We denote in particular by $\x_t,\y_t,\z_t$ respectively the coordinates associated with $\pi_{\hat x_t}$.

\subsection{Unitarily perturbed non demolition diffusive measurement}\label{subsec_UPQND}
Our first example consists of a $\tfrac12$-spin (or qbit) in a magnetic field oriented along the $y$-axis and subject to indirect non demolition measurement along the $z$-axis. It is a typical quantum optics experimental situation (see for example \cite{ficheux2018dynamics}). In terms of the parameters defining the related quantum trajectories, we get $H=\sigma_y$, $I_b=\{0\}$, $I_p=\emptyset$ and $L_0=\sqrt{\gamma}\,\sigma_z$ with $\gamma>0$. Then $(\pi_{\hat x_t})_t$ conditioned on $\hat x_0$ is the solution of 
\begin{equation}\label{eq:SDE_unitary_QND}
\dd\pi_{\hat x_t}=\big(-\ii[\sigma_y,\pi_{\hat x_t}] +\gamma(\sigma_z\pi_{\hat x_t}\sigma_z-\pi_{\hat x_t})\big)\,\dd t + \sqrt{\gamma}\big(\sigma_z\pi_{\hat x_t}+\pi_{\hat x_t}\sigma_z-2\tr(\sigma_z\pi_{\hat x_t})\pi_{\hat x_t}\big)\,\dd B_t.
\end{equation}
For this quantum trajectory it is immediate to verify {\pur}, and solving $\Lin(\rho)=0$ shows that $\rho\inv=\tfrac12\Id$ is the unique invariant state, so that {\lerg} holds. Hence, by Theorem \ref{thm:expo_conv_wasser1} $(\hat x_t)_t$ has a unique invariant measure. In the following we derive an explicit expression for this invariant measure.

The next Lemma allows us to restrict the state space.
\begin{lem}\label{lem:unitary_QND_delta_phi}
If $\mu(\y_0=0)=1$ then $\qumu\mu(\y_t=0)=1$ for all $t$ in $\R$.
\end{lem}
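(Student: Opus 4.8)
The plan is to show that the $\y$-coordinate of $\pi_{\hat x_t}$ is a deterministic consequence of the SDE \eqref{eq:SDE_unitary_QND} whenever it starts at zero, so that $\y_t=0$ is preserved almost surely. First I would convert the matrix-valued equation \eqref{eq:SDE_unitary_QND} into the scalar SDE satisfied by $\y_t = \tr(\sigma_y \pi_{\hat x_t})$. Applying $\tr(\sigma_y \,\cdot\,)$ to each term and using the Pauli algebra relations (for instance $\tr(\sigma_y[\sigma_y,\pi_{\hat x_t}])=0$, and the fact that $\sigma_z\sigma_y\sigma_z=-\sigma_y$ gives $\tr(\sigma_y(\sigma_z\pi_{\hat x_t}\sigma_z-\pi_{\hat x_t}))=-2\y_t$, while $\tr(\sigma_y(\sigma_z\pi_{\hat x_t}+\pi_{\hat x_t}\sigma_z))=0$ because $\sigma_y\sigma_z+\sigma_z\sigma_y=0$), I expect the drift to be of the form $-2\gamma\,\y_t\,\dd t$ plus possibly a term proportional to $\z_t$ from the Hamiltonian commutator, and crucially the diffusion coefficient to be proportional to $\y_t$ as well (from $-2\tr(\sigma_z\pi_{\hat x_t})\y_t$ in the martingale part).

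The key structural point I would verify is that the resulting scalar SDE for $\y_t$ is \emph{linear and homogeneous} in $\y_t$: both the drift and the diffusion coefficient vanish identically when $\y_t=0$. Concretely, I anticipate an equation of the schematic form
\begin{equation*}
\dd\y_t = \y_t\big(a(\x_t,\z_t)\,\dd t + b(\x_t,\z_t)\,\dd B_t\big),
\end{equation*}
where the coefficients $a,b$ are bounded functions of the remaining coordinates (they live on the compact sphere $\x_t^2+\y_t^2+\z_t^2=1$). Once this homogeneous form is established, the conclusion is immediate: $\y_t\equiv 0$ is a solution, and by the pathwise uniqueness for this SDE (the coefficients are locally Lipschitz on the compact state space, as already used throughout the paper to invoke Cauchy--Lipschitz), it is \emph{the} solution when $\y_0=0$. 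Since $\mu(\y_0=0)=1$, integrating against $\dd\mu$ and using the definition of $\qumu\mu$ via $\dd\pprho{\pi_{\hat x}}(E)\,\ind_{\hat x\in A}$ gives $\qumu\mu(\y_t=0)=1$ for all $t$.

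The main obstacle is bookkeeping rather than conceptual: I must carefully check that \emph{every} contribution to $\dd\y_t$ carries a factor of $\y_t$, in particular the term coming from the Hamiltonian $-\ii[\sigma_y,\pi_{\hat x_t}]$. Geometrically a magnetic field along the $y$-axis generates rotation about the $y$-axis, which preserves the plane $\y=0$, so this term should indeed contribute nothing that pushes $\y_t$ off zero; algebraically I would confirm $\tr(\sigma_y[\sigma_y,\pi_{\hat x_t}])=0$ holds identically, so no stray $\x_t$ or $\z_t$ term survives in the drift. The nondemolition measurement along $z$ only damps $\y_t$ multiplicatively (this is the $-2\gamma\y_t$ term) and its noise is likewise proportional to $\y_t$, so no term independent of $\y_t$ appears. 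A clean alternative that avoids re-deriving uniqueness would be to note that $\y_t\equiv 0$ forces $\hat x_t$ to stay in the real great circle $\{\y=0\}$, which is invariant under all the vector fields of the equation; but the homogeneous-SDE argument above is the most direct and self-contained route.
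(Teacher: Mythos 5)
Your argument is correct and is essentially the paper's own: the paper likewise reduces \eqref{eq:SDE_unitary_QND} to the scalar homogeneous linear SDE for $\y_t$, with drift $-2\gamma\,\y_t\,\dd t$ and diffusion $-2\sqrt{\gamma}\,\z_t\,\y_t\,\dd B_t$ (your drift sign is in fact the correct one; the $+2\gamma$ in the published proof is a harmless typo), and concludes by writing $\y_t$ explicitly as $\y_0$ times a Dol\'eans--Dade exponential, which vanishes identically when $\y_0=0$. Your appeal to pathwise uniqueness for the linear equation is interchangeable with that explicit stochastic-exponential formula, so the two proofs coincide in substance.
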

\begin{proof}
From equation \eqref{eq:SDE_unitary_QND}, $(\y_t)_t$ is the solution of
$\dd \y_t=2\y_t(\gamma \,\dd t-\sqrt{\gamma} \,\z_t\,\dd B_t)$.
It is therefore a Doléans-Dade exponential:
\[\y_t=\y_0\ e^{2\gamma t}	\exp\big(-2\gamma\int_0^t \z_s^2\, \dd s -2\sqrt{\gamma}\int_0^t \z_s \, \dd B_s\big)\]
and the conclusion follows.
\end{proof}

Now we prove that the invariant measure admits a rotational symmetry.
\begin{lem}\label{lem:symmetry_sigma_y} 
	Assume that the distribution $\mu$ of $\hat x_0$ is invariant with respect to the mapping $\hat x\mapsto \sigma_y\cdot\hat x$. Then $\mu_t$ is invariant with respect to the same mapping.
\end{lem}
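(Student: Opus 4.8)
The plan is to exploit the symmetry of the stochastic Schr\"odinger equation under the unitary $\sigma_y$. The key observation is that the dynamics \eqref{eq:SDE_unitary_QND} is invariant under the transformation $\hat{x}\mapsto \sigma_y\cdot\hat{x}$, because $\sigma_y$ commutes with the Hamiltonian $H=\sigma_y$ and conjugation by $\sigma_y$ maps the measurement operator $L_0=\sqrt{\gamma}\,\sigma_z$ to $\sigma_y\sigma_z\sigma_y = -\sigma_z = -L_0$. The sign change is harmless: the SDE depends on $L_0$ only through the completely positive pieces $\sigma_z\pi\sigma_z$ and $\sigma_z\pi+\pi\sigma_z-2\tr(\sigma_z\pi)\pi$, all of which are even in $L_0$, so the generator of the process is unchanged under conjugation by $\sigma_y$.

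More precisely, I would first define the pushforward map $\Phi:\hat{x}\mapsto\sigma_y\cdot\hat{x}$ on $\pc 2$ and record that it is an involution (since $\sigma_y^2=\Id$). The main step is to show that $\Phi$ intertwines the Markov semigroup with itself, i.e.\ that $\Phi$ commutes with the transition operator. Concretely, if $(\hat{x}_t)_t$ is the solution of \eqref{eq:SDE_unitary_QND} started from $\hat{x}_0$, then $(\Phi(\hat{x}_t))_t$ is a solution of the same equation started from $\Phi(\hat{x}_0)$, driven by the \emph{same} Brownian motion $B_t$ up to the sign flip. Because the driving noise enters only through the diffusion coefficient $\sqrt{\gamma}(\sigma_z\pi+\pi\sigma_z-2\tr(\sigma_z\pi)\pi)$, which picks up a minus sign under conjugation by $\sigma_y$, the process $(\Phi(\hat{x}_t))_t$ solves \eqref{eq:SDE_unitary_QND} with $B_t$ replaced by $-B_t$; since $-B_t$ is again a standard Brownian motion, $(\Phi(\hat{x}_t))_t$ has the same law as the solution started from $\Phi(\hat{x}_0)$. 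This yields the semigroup identity at the level of distributions.

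Once the intertwining relation is established, the conclusion is immediate. Denote by $P_t$ the Markov transition kernel of $(\hat{x}_t)_t$, and by $\mu_t = \mu P_t$ the law at time $t$ started from $\mu$. The intertwining says $\Phi_*(\mu P_t) = (\Phi_*\mu) P_t$ for every measure $\mu$, where $\Phi_*$ denotes pushforward. Applying this with the hypothesis $\Phi_*\mu=\mu$ gives
\[
\Phi_*\mu_t = \Phi_*(\mu P_t) = (\Phi_*\mu)P_t = \mu P_t = \mu_t,
\]
which is exactly the claimed invariance of $\mu_t$ under $\hat{x}\mapsto\sigma_y\cdot\hat{x}$.

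The main obstacle, and the only place requiring care, is the rigorous verification that conjugation by $\sigma_y$ maps solutions to solutions with the time-reversed noise $-B_t$. This is a routine but slightly delicate It\^o-calculus computation: one applies the transformation $\pi\mapsto\sigma_y\pi\sigma_y$ to both sides of \eqref{eq:SDE_unitary_QND}, uses $\sigma_y\sigma_z\sigma_y=-\sigma_z$ together with the evenness of the drift and the oddness of the diffusion term, and invokes the equality in law of $B$ and $-B$. An alternative, cleaner route avoiding explicit SDE manipulation would be to work at the level of the operators $(S_t)_t$ of \eqref{eq_defS}: conjugation by $\sigma_y$ transforms the defining data $(H,L_0)$ into $(H,-L_0)$, hence $\sigma_y S_t\sigma_y$ solves the same linear SDE as $S_t$ with $W_0$ replaced by $-W_0$, which has the same law under $\pp$; the identity $\hat{x}_t=S_t\cdot\hat{x}_0$ then transfers the symmetry directly to the law of $(\hat{x}_t)_t$ and gives the result without any nonlinear computation.
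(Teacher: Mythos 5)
Your proof is correct and follows essentially the same route as the paper's: conjugate the SDE \eqref{eq:SDE_unitary_QND} by $\sigma_y$, use the anticommutation $\sigma_y\sigma_z=-\sigma_z\sigma_y$ to see that the drift is unchanged while the diffusion coefficient flips sign, invoke $(B_t)_t\sim(-B_t)_t$ together with weak uniqueness of the SDE, and conclude by pushing forward the initial law. Your alternative remark about working with the linear process $(S_t)_t$ of \eqref{eq_defS} is also sound (one just needs to note that the Girsanov density $\|S_t x_0\|^2$ is itself invariant under the conjugation), but the paper does not take that route.
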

\begin{proof}
Since $\sigma_y$ is unitary and self-adjoint, we have $\pi_{\sigma_y\cdot\hat x}=\sigma_y \pi_{\hat x}\sigma_y$. Since $\sigma_y\sigma_z=-\sigma_z\sigma_y$, it follows from \eqref{eq:SDE_unitary_QND} that
\begin{align*}
\dd(\sigma_y\pi_{\hat x_t}\sigma_y)=&\big(-\ii[\sigma_y,\sigma_y\pi_{\hat x_t}\sigma_y]+\gamma(\sigma_z\sigma_y\pi_{\hat x_t}\sigma_y\sigma_z- \sigma_y\pi_{\hat x_t}\sigma_y)\big)\,\dd t\\
		& -\sqrt{\gamma}\big(\sigma_z\sigma_y\pi_{\hat x_t}\sigma_y+\sigma_y\pi_{\hat x_t}\sigma_y\sigma_z-2\tr(\sigma_z\sigma_y\pi_{\hat x_t}\sigma_y)\sigma_y\pi_{\hat x_t}\sigma_y\big)\,\dd B_t.
\end{align*}
Then it follows from $\sigma_y \cdot \hat x_0\sim\hat x_0$ and $(B_t)_t\sim (-B_t)_t$ that $(\hat x_t)_t$ and $(\sigma_y\cdot\hat x_t)_t$ are both weak solutions to the same SDE with the same initial condition. Since this SDE has a unique solution, they have the same distributions.
\end{proof}

\begin{prop}\label{prop:unitary_inv_meas}
Let $(\hat x_t)_t$ be the process defined by \eqref{eq:SDE_unitary_QND}. Then its unique invariant measure is the normalized image measure by
\begin{equation}\label{eq_muinvimagepar}
	\iota:\theta\mapsto \tfrac12\big(\Id + \sin\theta\,\sigma_x +\cos\theta \,\sigma_z\big)
\end{equation}
of the measure $\tau(\theta)\,\dd\theta$ on $(-\pi,\pi]$ with
\[\tau(\theta)=\int_\theta^\pi \exp\frac{\cot x-\cot \theta}{\gamma} \frac{\sin x}{\sin^3\theta}\,\dd x\]
for $\theta\in[0,\pi]$ and $\tau(\theta)=\tau(\theta+\pi)$ for $\theta\in(-\pi,0]$.
\end{prop}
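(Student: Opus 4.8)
The plan is to reduce the problem on $\pc2$ to a one-dimensional diffusion on the invariant great circle $\{\y=0\}$ and then to solve a stationary Fokker--Planck equation. First I would establish that $\mu\inv$ is carried by the circle $C=\{\hat x\in\pc2 : \y=0\}$. By Lemma~\ref{lem:unitary_QND_delta_phi} the set $C$ is almost surely invariant under the dynamics, so the process started on $C$ is itself a Feller Markov process on the compact set $C$ and therefore admits an invariant measure $\nu$ (existence being clear by compactness and the Feller property, as noted in the introduction). Viewed as a measure on $\pc2$, this $\nu$ is invariant for the full process. Since {\pur} and {\lerg} hold for this example, Theorem~\ref{thm:expo_conv_wasser1} gives uniqueness of the invariant measure, whence $\mu\inv=\nu$ is supported on $C$.

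Next I would parametrize $C$ by $\iota(\theta)$, i.e.\ $\x=\sin\theta$, $\z=\cos\theta$, and read off the scalar SDE governing $\theta_t$. Writing $\pi_{\hat x_t}=\tfrac12(\Id+\x_t\sigma_x+\z_t\sigma_z)$ on $C$ and inserting into \eqref{eq:SDE_unitary_QND}, It\^o's formula yields
\[\dd\theta_t=\big(2-\gamma\sin2\theta_t\big)\,\dd t-2\sqrt\gamma\,\sin\theta_t\,\dd B_t,\]
a computation I would cross-check by verifying that the resulting drift and diffusion coefficients of $\x_t=\sin\theta_t$ and $\z_t=\cos\theta_t$ match those read directly off \eqref{eq:SDE_unitary_QND}. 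The invariant density $p$ of this diffusion solves the stationary forward equation, which after one integration asserts that the probability current $J=a(\theta)p(\theta)-\tfrac12\big(\sigma^2(\theta)p(\theta)\big)'$ is constant, where $a(\theta)=2-\gamma\sin2\theta$ and $\sigma^2(\theta)=4\gamma\sin^2\theta$. Setting $q=\sigma^2 p$ turns this into the linear ODE $q'=\big(\tfrac{1}{\gamma\sin^2\theta}-\cot\theta\big)q-2J$, whose integrating factor is $\sin\theta\,\e^{\cot\theta/\gamma}$. Integrating from $\pi$ gives, on $(0,\pi)$,
\[p(\theta)=\frac{\e^{-\cot\theta/\gamma}}{4\gamma\sin^3\theta}\Big(C_0+2J\int_\theta^\pi\e^{\cot x/\gamma}\sin x\,\dd x\Big).\]

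The crux is the behaviour at the two degenerate points $\theta\in\{0,\pi\}$, where $\sigma^2$ vanishes but the drift $a$ does not: these are exactly the $\sigma_z$-eigenstates through which the Hamiltonian $\sigma_y$ pushes the trajectory deterministically, producing a genuinely nonzero current $J$. As $\theta\to\pi$ the factor $\e^{-\cot\theta/\gamma}/\sin^3\theta$ blows up super-exponentially, so integrability of $p$ forces the integration constant $C_0=0$, and this is what singles out the stated formula $p\propto\tau$. I would then confirm that $\tau$ is finite and positive at the endpoints: a Laplace/integration-by-parts estimate shows $\tau(\theta)\to\gamma$ as $\theta\to0^+$ and as $\theta\to\pi^-$, the $\e^{\pm\cot\theta/\gamma}$ factors exactly cancelling the $\sin^{-3}$ singularity, so $\tau\,\dd\theta$ is a finite positive measure.

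Finally, Lemma~\ref{lem:symmetry_sigma_y}, which here amounts to invariance under $\theta\mapsto\theta+\pi$ (since $\hat x\mapsto\sigma_y\cdot\hat x$ acts as $(\x,\z)\mapsto(-\x,-\z)$), justifies extending $\tau$ $\pi$-periodically to all of $(-\pi,\pi]$; because the coefficients $a,\sigma^2$ are themselves $\pi$-periodic, the single constant $J$ is automatically consistent across $\theta=0$ and $\theta=\pi$. Normalizing and pushing forward by $\iota$ then delivers $\mu\inv$. I expect the main obstacle to be precisely the degeneracy of the diffusion at $\theta=0,\pi$: one must argue rigorously that the formal Fokker--Planck solution on the open arc extends to a bona fide invariant measure on the closed circle — verifying $\int \mathcal Af\,p\,\dd\theta=0$ for test functions $f$, with the boundary terms from integration by parts vanishing thanks to the continuity of $q=\sigma^2p$ — and that the non-reversible, nonzero-current structure forces $C_0=0$ rather than the reversible ansatz $J=0$.
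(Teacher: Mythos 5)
Your proposal is correct and follows essentially the same route as the paper's proof: reduction to the circle $\{\y=0\}$ via Lemma \ref{lem:unitary_QND_delta_phi}, the same scalar SDE for $\theta_t$ (your drift $2-\gamma\sin 2\theta_t$ is the paper's $2(1-\gamma\cos\theta_t\sin\theta_t)$ in \eqref{eq:SDE_theta_unitary_QND}), uniqueness from Theorem \ref{thm:expo_conv_wasser1}, $\pi$-periodicity from Lemma \ref{lem:symmetry_sigma_y}, and the stationary Fokker--Planck analysis on $(0,\pi)$. Where the paper invokes standard one-dimensional diffusion theory and discards the second fundamental solution $\tau_2(\theta)=\e^{-\cot\theta/\gamma}/\sin^3\theta$ because $\int_0^\pi\tau_2(\theta)\,\dd\theta=\infty$, you make the identical selection via the nonzero-current argument forcing $C_0=0$, so the two write-ups differ only in the level of detail supplied for this standard step.
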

\begin{proof} The convergence results in Theorem \ref{thm:expo_conv_wasser1} and Lemma \ref{lem:unitary_QND_delta_phi} imply that the invariant measure $\mu\inv$ is the image by $\iota$ of a probability measure $\uptau$ on $(-\pi,\pi]$. Let $(\theta_t)_t$ be the solution of
\begin{equation}\label{eq:SDE_theta_unitary_QND}
\dd\theta_t=2(1-\gamma \cos\theta_t\sin\theta_t)\,\dd t -2\sqrt{\gamma}\sin\theta_t \,\dd B_t
\end{equation}
with initial condition $\theta_0$. Remark that $(\theta_t)_t$ is $2\pi$-periodic with respect to its initial condition, namely, $(\theta_t+2\pi)_t$ is solution of \eqref{eq:SDE_theta_unitary_QND} with initial condition $\theta_0+2\pi$. Now, using the It\^o formula,
\[(\cos\theta_t,\sin\theta_t)_t\sim\big(\tr(\pi_{\hat x_t}\sigma_z),\tr(\pi_{\hat x_t}\sigma_x)\big)_t\]
for $(\pi_{\hat x_t})_t$ solution of \eqref{eq:SDE_unitary_QND} with initial condition $\hat x_0=\tfrac12\big(\Id + \sin\theta_0\,\sigma_x +\cos\theta_0 \,\sigma_z\big)$.
Hence $\big(\iota(\theta_t)\big)_t$ has the same distribution as $(\hat x_t)_t$. Therefore $\uptau$ is an invariant measure for the diffusion defined by \eqref{eq:SDE_theta_unitary_QND}; in addition, Theorem \ref{thm:expo_conv_wasser1} shows that this invariant measure is unique, and Lemma \ref{lem:symmetry_sigma_y} shows that it is $\pi$-periodic. Following standard methods (see \cite{MR611513}), one shows that the restriction of $\uptau$ to $[0,\pi)$ has a density of the form $\tau(\theta)=C_1\tau_1(\theta)+C_2\tau_2(\theta)$ with $C_1,C_2\in\R$ and
\[\tau_1(\theta)=\frac{\int_\theta^\pi \sin x \, \exp (\tfrac1\gamma \cot  x) \, \dd x}{\sin^3\theta  \exp(\frac1\gamma \cot \theta)} \qquad \tau_2(\theta)=\frac1{ \sin^3\theta  \exp(\tfrac1\gamma \cot \theta)}\]
Now, straightforward analysis shows that $\int_0^\pi \tau_1(\theta)\,\dd\theta<\infty$ while $\int_0^\pi \tau_2(\theta)\,\dd\theta=\infty$. Therefore, $\tau$ is proportional to $\tau_1$ and the result follows.

\end{proof}
\begin{rmq}
	For $\gamma\to\infty$, the invariant measure $\mu\inv$ in Proposition \ref{prop:unitary_inv_meas} is a Dirac measure at $0$ and~$\pi$. To describe the scaling for $\gamma$ large we embed $\tau(\theta)$ into $L^{1}(\mathbb{R})$ by defining it to be zero outside the region $(-\pi, \pi]$. Then on the positive half line, in the $L^1$ norm,
	$$
	\lim_{\gamma \to \infty}\frac{1}{2 \gamma^3} \tau(\frac{\theta}{\gamma}) = \frac{1}{\theta^3} \exp(-\frac{1}{\theta}).
	$$
Hence, for large $\gamma$, the stationary probability distribution has two peaks of width (of order) $1/\gamma$ located $1/\gamma$ radians clockwise from the limit points $0$ and $\pi$. Furthermore the probability to find the particle around the limit points is exponentially suppressed.

The strong noise limit, $\gamma \to \infty$, was recently studied in various models \cite{BCFFS,BBT15,BCCNP}. This is the first model that allows for an explicit calculation of the shape of the stationary probability measure. The density of the invariant probability distribution is plotted in Figure~\ref{figure_UPQND} for three values of $\gamma$, and for $\theta \in [0,\pi]$.

\begin{figure}[ht]
\includegraphics[width=1.0\textwidth]{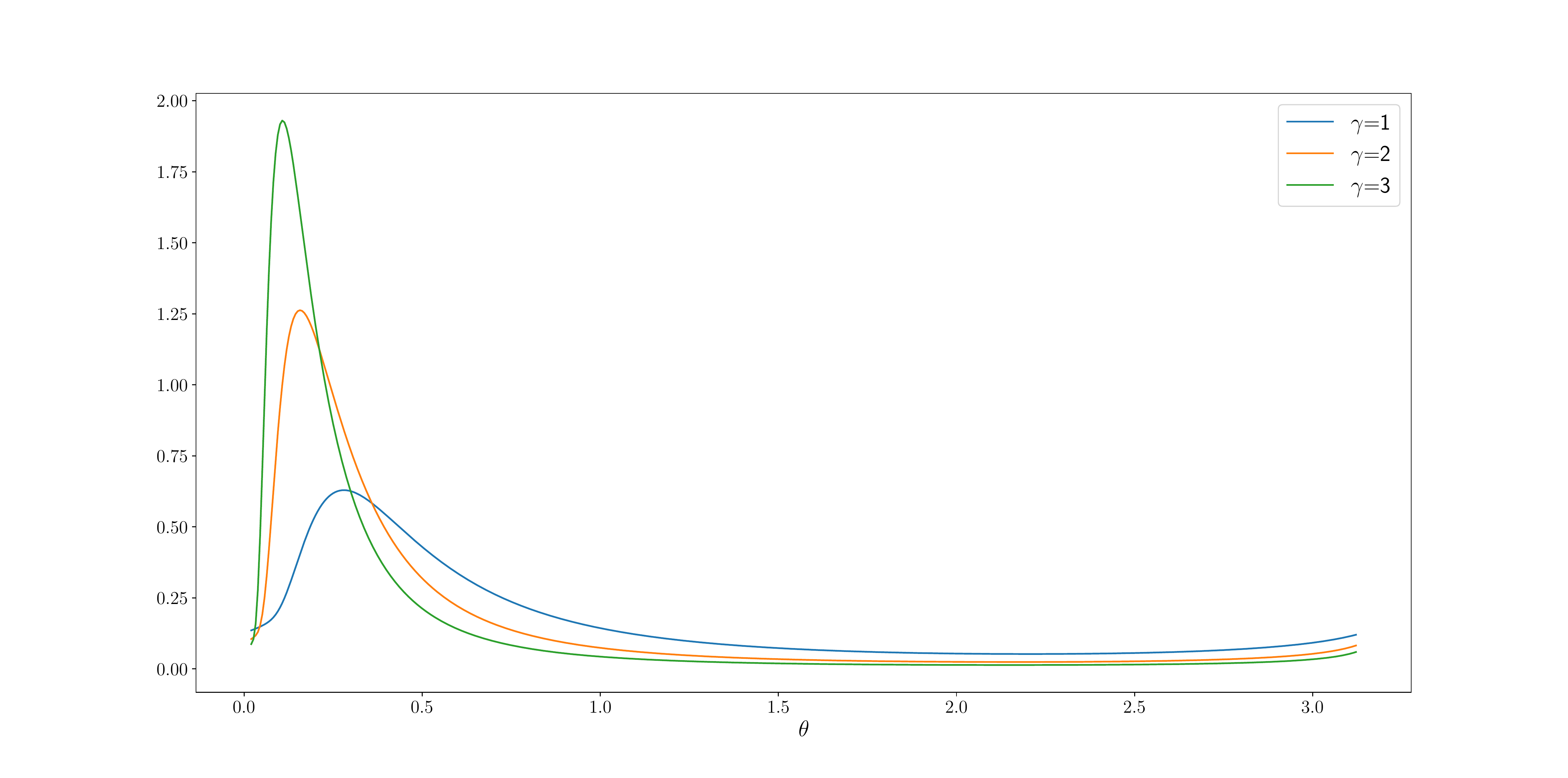}
\vspace{-2em}\caption{The restriction to $[0,\pi]$ of the density of the invariant probability distribution in Example \ref{subsec_UPQND}.}
\label{figure_UPQND}
\end{figure}
\end{rmq}

\subsection{Thermal qubit, diffusive case} \label{subsec_thqubitdiff}
The following second example corresponds to the evolution of a qubit interacting weakly with the electromagnetic field at a fixed temperature. The emission and absorption of photons by the qubit are stimulated by a resonant coherent field (laser). In the limit of a strong stimulating laser, the measurement of emitted photons results in a diffusive signal whose drift depends on the instantaneous average value of the raising and lowering operators of the qubit (see \cite[\S4.4]{wisemanmilburn} for a more detailed physical derivation). We obtain an analytically solvable model if we assume that the unitary rotation of the qubit is compensated for and thus frozen. In terms of the parameters defining the related quantum trajectories, we get $H=0$, $I=I_b=\{0,1\}$, $L_0=\sqrt{a}  \,\sigma_+$ and $L_1=\sqrt{b}\,\sigma_-$ with $a,b\in\mathbb R_+\setminus\{0\}$ and $\sigma_\pm=\tfrac12(\sigma_x\pm \ii\sigma_y)$, so that $\sigma_+= \begin{pmatrix} 0 & 1 \\ 0 & 0\end{pmatrix}$ and $\sigma_-= \begin{pmatrix} 0 & 0 \\ 1 & 0\end{pmatrix}$.

The stochastic master equation satisfied by $\pi_{\hat x_t}$ is
\begin{equation}\label{eq:SDE11}
\begin{aligned}
\dd\pi_{\hat x_t}&=\quad a\Big(\sigma_+\pi_{\hat x_t}\sigma_--\tfrac{1}{4}\big((\Id-\sigma_z)\pi_{\hat x_t}+\pi_{\hat x_t}(\Id-\sigma_z)\big)\Big)\,\dd t\\
&\quad+b\Big(\sigma_-\pi_{\hat x_t}\sigma_+-\tfrac{1}{4}\big((\Id+\sigma_z)\pi_{\hat x_t}+\pi_{\hat x_t}(\Id+\sigma_z)\big)\Big)\,\dd t\\
&\quad+ \sqrt{a}\Big(\sigma_+\pi_{\hat x_t}+\pi_{\hat x_t}\sigma_--\tr(\sigma_x\pi_{\hat x_t})\pi_{\hat x_t}\Big)\,\dd B_0(t)\\
&\quad+\sqrt{b}\Big(\sigma_-\pi_{\hat x_t}+\pi_{\hat x_t}\sigma_+-\tr(\sigma_x\pi_{\hat x_t})\pi_{\hat x_t}\Big)\,	\dd B_1(t)
\end{aligned}
\end{equation}
Again it is immediate to verify {\pur}, and solving for $\Lin(\rho)=0$ shows that {\lerg} holds.

\begin{lem}\label{lem:therm_Y_0}
If $\mu(\y_0=0)=1$ then  $\qumu\mu(\y_t=0)=1$ for all $t$ in  $\R$.
\end{lem}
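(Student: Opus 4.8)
The plan is to mimic the proof of Lemma \ref{lem:unitary_QND_delta_phi}: derive a scalar stochastic differential equation for $(\y_t)_t$ from the matrix SDE \eqref{eq:SDE11}, observe that it is linear in $\y_t$, and conclude that $\y_t$ is a Dol\'eans-Dade exponential multiplied by $\y_0$, so that $\y_0=0$ forces $\y_t=0$ for all $t$.

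First I would extract the $\y_t$-component of \eqref{eq:SDE11} by applying $\tr(\,\cdot\,\sigma_y)$ to both sides, using $\y_t=\tr(\pi_{\hat x_t}\sigma_y)$. The key computation is to check that every term on the right-hand side contributes something proportional to $\y_t$. For the drift terms involving $\sigma_+\pi_{\hat x_t}\sigma_-$, $\sigma_-\pi_{\hat x_t}\sigma_+$, and the anticommutators with $\Id\pm\sigma_z$, one uses the Pauli algebra relations (e.g.\ $\sigma_z\sigma_y=-\sigma_y\sigma_z=\ii\sigma_x$, etc.) together with $\tr(\sigma_a\sigma_b)=2\delta_{ab}$ to express each trace in terms of $\x_t,\y_t,\z_t$, and I expect the $\x_t$- and $\z_t$-dependence to cancel, leaving a multiple of $\y_t$. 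For the diffusive terms $\sqrt{a}\big(\sigma_+\pi_{\hat x_t}+\pi_{\hat x_t}\sigma_--\tr(\sigma_x\pi_{\hat x_t})\pi_{\hat x_t}\big)$ and its $\sqrt{b}$ analogue, the same kind of algebra applies; note that $\sigma_++\sigma_-=\sigma_x$ and $\sigma_+-\sigma_-=\ii\sigma_y$, and the term $\tr(\sigma_x\pi_{\hat x_t})\pi_{\hat x_t}$ contributes $\x_t\,\y_t$, which is again proportional to $\y_t$. The outcome should be an equation of the form $\dd\y_t=\y_t\big(\alpha(\hat x_t)\,\dd t+\beta_0(\hat x_t)\,\dd B_0(t)+\beta_1(\hat x_t)\,\dd B_1(t)\big)$ for bounded adapted coefficients $\alpha,\beta_0,\beta_1$.

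Once the linear SDE is established, the conclusion is immediate: the unique solution with $\y_0=0$ is identically zero, and since $(\hat x_t)_t$ under $\qumu\mu$ has the law of the solution of \eqref{eq:SDE11}, the event $\{\y_0=0\}$ is $\qumu\mu$-almost sure by hypothesis and propagates to $\{\y_t=0\}$ for all $t$. Explicitly, one may write $\y_t$ as $\y_0$ times a Dol\'eans-Dade exponential as in Lemma \ref{lem:unitary_QND_delta_phi}, so that $\y_0=0$ $\qumu\mu$-almost surely gives $\y_t=0$ $\qumu\mu$-almost surely for every $t$.

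The main obstacle is purely computational: verifying that the $\x_t$- and $\z_t$-contributions genuinely cancel in the drift so that the SDE for $\y_t$ closes and is homogeneous of degree one in $\y_t$. This requires careful bookkeeping with the Pauli commutation and anticommutation relations, in particular handling the two dissipators with their asymmetric $\Id\mp\sigma_z$ anticommutator structure and confirming that the nonlinear feedback term $\tr(\sigma_x\pi_{\hat x_t})\pi_{\hat x_t}$ does not introduce any $\y_t$-independent piece. Physically this is expected, since $\y$ is the coordinate orthogonal to the $(\x,\z)$-plane in which both Lindblad operators $\sigma_\pm$ and the measurement act, so the dynamics should preserve the invariant plane $\{\y=0\}$; the work is just to make this explicit.
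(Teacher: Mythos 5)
Your proposal is correct and follows essentially the same route as the paper: extracting the scalar SDE for $\y_t$ from \eqref{eq:SDE11} and noting it is homogeneous of degree one in $\y_t$, so that $\y_t=\y_0\,\e^{-\frac12(a+b)t}M_t$ with $M_t$ a Dol\'eans-Dade-type exponential. The computation you defer does close as expected — the jump-free Pauli algebra gives exactly $\dd\y_t=-\y_t\big(\tfrac12(a+b)\,\dd t+\x_t(\sqrt{a}\,\dd B_0(t)+\sqrt{b}\,\dd B_1(t))\big)$, which is the equation the paper uses.
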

\begin{proof}
From \eqref{eq:SDE11}, $(\x_t)_t$ and $(\y_t)_t$ satisfy
\[\dd \y_t=-\y_t\Big(\tfrac12(a+b)\,\dd t+\x_t\big(\sqrt{a}\,\dd B_0(t)+\sqrt{b}\,\dd B_1(t)\big)\Big).\]
Therefore, if one defines
\[M_t=\exp\Big(-\frac12\int_0^t (a+b)\x_s^2\, \dd s -\int_0^t \x_s\big(\sqrt{a}\,\dd B_0(s)+\sqrt{b}\,\dd B_1(s)\big)\Big)\]
then one has $\y_t=\y_0 \,\e^{-\tfrac12(a+b)t}M_t$, and this proves Lemma \ref{lem:therm_Y_0}.
\end{proof}

\begin{prop}
Let $(\hat x_t)_t$ be the process defined by \eqref{eq:SDE11}. Then its unique invariant measure is the normalized image measure  by $\iota$ (defined by \eqref{eq_muinvimagepar}) of the measure $\tau(\theta) \, \dd\theta$ on $(-\pi,\pi]$ with
\[
\tau(\theta)=\frac{e^{\varsigma z\arctan\big(\varsigma(\cos\theta-z)\big)}}{\big(\cos^2\theta+1-2z\cos\theta)\big)^{3/2}},
\]
with $z=\frac{a-b}{a+b}$ and $\varsigma=\frac{a+b}{2\sqrt{ab}}.$
\end{prop}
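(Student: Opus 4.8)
The plan is to reduce the two-dimensional projective dynamics to a one-dimensional diffusion on the angle $\theta$, exactly as in the previous example (Subsection \ref{subsec_UPQND}), and then to find the stationary density by solving the associated stationary Fokker--Planck equation. First I would invoke Theorem \ref{thm:expo_conv_wasser1} together with Lemma \ref{lem:therm_Y_0} to conclude that the unique invariant measure $\mu\inv$ is supported on the great circle $\{\y=0\}$ of the Bloch sphere, hence is the image under $\iota$ of some probability measure $\uptau$ on $(-\pi,\pi]$. Applying the It\^o formula to $\x_t=\tr(\sigma_x\pi_{\hat x_t})$ and $\z_t=\tr(\sigma_z\pi_{\hat x_t})$ on the invariant submanifold, where $\x_t=\sin\theta_t$ and $\z_t=\cos\theta_t$, I would derive the one-dimensional SDE for $(\theta_t)_t$. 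The drift and diffusion coefficients will be read off from \eqref{eq:SDE11}: the two Brownian terms both carry the factor $\x_t=\sin\theta_t$, so the diffusion coefficient is $(a+b)\sin^2\theta_t$, while the drift combines the deterministic Lindblad part with the It\^o correction.

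Next I would write down the stationary Fokker--Planck (forward Kolmogorov) equation for the diffusion $(\theta_t)_t$. Writing the generator as $\mathcal A g=\beta(\theta)\,g'+\tfrac12\alpha(\theta)\,g''$ with $\alpha(\theta)=(a+b)\sin^2\theta$ and $\beta(\theta)$ the computed drift, the invariant density $\tau$ solves the adjoint equation $-\big(\beta\tau\big)'+\tfrac12\big(\alpha\tau\big)''=0$. Integrating once reduces this to a first-order linear ODE for the probability current, and following the standard method (as referenced via \cite{MR611513}) the general solution is a two-parameter family spanned by a normalizable piece and a non-normalizable piece. As in the previous proposition, the non-integrable solution is discarded by checking $\int \tau=\infty$, and the normalizable solution is the claimed $\tau$. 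The candidate density, up to the explicit change of variables, should match
\[
\tau(\theta)=\frac{e^{\varsigma z\arctan\big(\varsigma(\cos\theta-z)\big)}}{\big(\cos^2\theta+1-2z\cos\theta\big)^{3/2}},
\]
with $z=\frac{a-b}{a+b}$ and $\varsigma=\frac{a+b}{2\sqrt{ab}}$; I would verify it by direct substitution into the stationary equation, computing $\tau'/\tau$ and confirming the drift-diffusion balance.

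The main obstacle I expect is the bookkeeping in the It\^o computation of the drift $\beta(\theta)$. Because both $\sigma_+$ and $\sigma_-$ appear with distinct rates $a$ and $b$, the deterministic part of \eqref{eq:SDE11} is not symmetric under $\hat x\mapsto\sigma\cdot\hat x$ in the way the first example was, so there is no rotational symmetry to exploit and the drift will contain an asymmetric term responsible for the $\arctan$ factor and the parameter $z$. Carefully assembling the quadratic-variation correction from the two independent Brownian motions $B_0,B_1$, and simplifying the resulting trigonometric drift into a form whose logarithmic derivative matches that of the proposed $\tau$, is the delicate step; the appearance of $\varsigma=\frac{a+b}{2\sqrt{ab}}$ strongly suggests that a substitution such as $w=\varsigma(\cos\theta-z)$ will linearize the integrating factor. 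Once $\beta(\theta)$ is correct, verifying that the displayed $\tau$ satisfies the stationary Fokker--Planck equation and that its competing non-normalizable partner is genuinely non-integrable near the zeros of $\sin\theta$ is routine, and uniqueness then follows from Theorem \ref{thm:expo_conv_wasser1}.
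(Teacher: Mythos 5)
Your overall strategy is exactly the paper's: use Theorem \ref{thm:expo_conv_wasser1} and Lemma \ref{lem:therm_Y_0} to reduce the invariant measure to the image under $\iota$ of a measure on $(-\pi,\pi]$, derive a one-dimensional SDE for $\theta_t$ by It\^o calculus, and solve the stationary equation by the standard Karlin--Taylor method. However, your proposal contains a concrete computational error that would derail the Fokker--Planck step: the claim that ``the two Brownian terms both carry the factor $\x_t=\sin\theta_t$, so the diffusion coefficient is $(a+b)\sin^2\theta_t$'' is false. You have transplanted the structure of the first example, where the single noise coefficient $-2\sqrt\gamma\sin\theta_t$ in \eqref{eq:SDE_theta_unitary_QND} does give $\alpha=4\gamma\sin^2\theta$. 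Here, reading off \eqref{eq:SDE11}, the noise coefficient of $\z_t$ is $\sqrt a\,\x_t(1-\z_t)$ for $B_0$ and $-\sqrt b\,\x_t(1+\z_t)$ for $B_1$; passing to $\theta$ via $\z_t=\cos\theta_t$ divides out the factor $\sin\theta_t$, leaving the SDE
\[
\dd \theta_t = \big((b-a)\sin\theta_t + \tfrac12(a+b) \cos\theta_t\sin\theta_t\big) \dd t + \sqrt a \,(\cos\theta_t-1)\,\dd B_0(t) + \sqrt b \,(\cos\theta_t +1)\,\dd B_1(t),
\]
whose diffusion coefficient is
\[
\alpha(\theta)=a(1-\cos\theta)^2+b(1+\cos\theta)^2=(a+b)\big(1+\cos^2\theta-2z\cos\theta\big),
\]
not $(a+b)\sin^2\theta$. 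Note that this $\alpha$ is uniformly elliptic ($\alpha\geq 4ab/(a+b)>0$), and it is precisely the quantity whose $3/2$ power appears in the denominator of the claimed $\tau$: with $u=\cos\theta$ one gets $\int 2\beta/\alpha \,\dd\theta=-\tfrac12\ln(u^2-2zu+1)+\varsigma z\arctan\big(\varsigma(u-z)\big)$ using $1-z^2=\varsigma^{-2}$, so the zero-current solution $\alpha^{-1}\exp\big(\int 2\beta/\alpha\big)$ reproduces the stated density. With your $\alpha$, the stationary equation would produce a different (and degenerate) answer.

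A secondary consequence of the same error: your criterion for discarding the second solution, namely non-integrability ``near the zeros of $\sin\theta$,'' does not apply here, since the correct $\alpha$ does not vanish anywhere on the circle and both solutions of the stationary equation are locally integrable. The correct selection mechanism on the circle is periodicity: since $\beta$ is odd and $\alpha$ even, $\oint 2\beta/\alpha\,\dd\theta=0$, so the zero-current solution is $2\pi$-periodic (indeed the claimed $\tau$ is a function of $\cos\theta$ alone), while a nonzero probability current is incompatible with periodicity. Your final safety net---direct substitution of $\tau$ into the stationary equation---would have exposed the wrong $\alpha$, but as written the drift-diffusion bookkeeping, which you yourself flag as the delicate step, is where the argument fails.
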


\begin{proof}
As in the proof of Proposition \ref{prop:unitary_inv_meas}, Theorem \ref{thm:expo_conv_wasser1} and Lemma \ref{lem:therm_Y_0} imply that the invariant measure $\mu\inv$ is the image by $\iota$ of a probability measure $\uptau$ on $(-\pi,\pi]$. Let $(\theta_t)_t$ be the solution of 
\begin{align}\label{eq_eqthetathermalqubitdiff}
\dd \theta_t = \big((b-a)\sin\theta_t + \tfrac12(a+b) \cos\theta_t\sin\theta_t\big) \, \dd t + \sqrt a \,(\cos\theta_t-1)\,\dd B_0(t) + \sqrt b \,(\cos\theta_t +1)\,\dd B_1(t).
\end{align}
The It\^o formula implies once again
\[(\cos\theta_t,\sin\theta_t)\sim\big(\tr(\pi_{\hat x_t}\sigma_z),\tr(\pi_{\hat x_t}\sigma_x)\big)_t\]
for $(\pi_{\hat x_t})_t$ solution of \eqref{eq:SDE11} with initial condition $\hat x_0=\tfrac12\big(\Id + \sin\theta_0\,\sigma_x +\cos\theta_0 \,\sigma_z\big)$.
Hence $\big(\iota(\theta_t)\big)$ has the same distribution as $(\hat x_t)$. As in the proof of Proposition \ref{prop:unitary_inv_meas}, standard techniques show that the unique invariant distribution for \eqref{eq_eqthetathermalqubitdiff} has density proportional to the function $\tau$ above.
\end{proof}
The density of the invariant probability distribution for three values of the pair $(a,b)$ is plotted in Figure~\ref{figure_thqubitdiff}.
\begin{figure}[ht]
\includegraphics[width=1.0\textwidth]{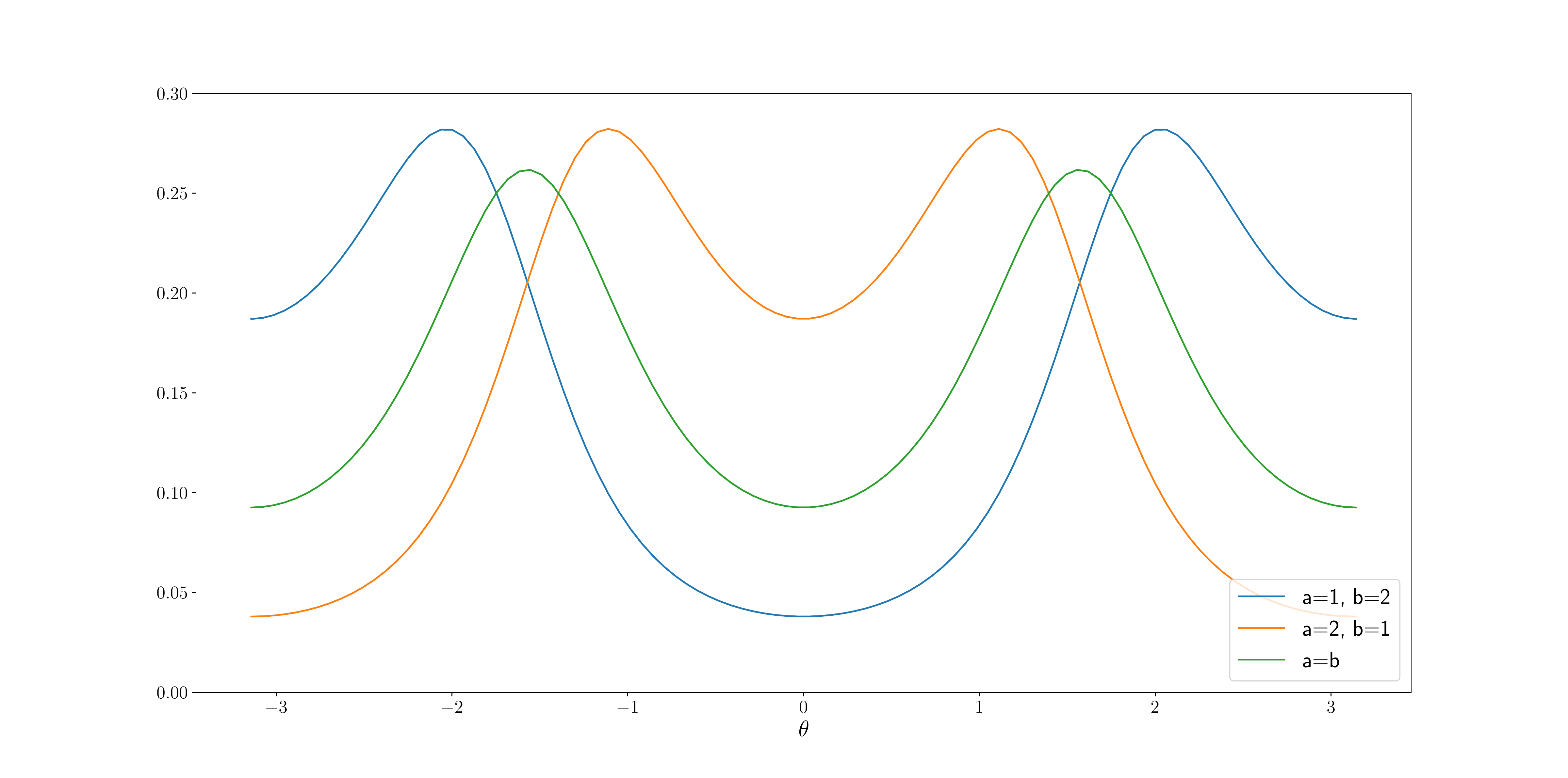}
\vspace{-2em}\caption{The density of the invariant probability distribution for Example \ref{subsec_thqubitdiff}.}
\label{figure_thqubitdiff}
\end{figure}

\subsection{Thermal qubit, jump case}
Our third example is the second one where the stimulating coherent field has relatively small amplitude and is filtered out. Then, the signal is composed only of the photons absorbed or emitted by the qubit. The resulting trajectory involves only jumps related to these events. The parameters defining the model are then, $H=0$, $I_b=\emptyset$ and $I_p=\{0,1\}$, $C_0=\sqrt{a}\,\sigma_+$ and $C_1=\sqrt{b}\,\sigma_-$ with $a,b\in\mathbb R_+\setminus\{0\}$.

The process $(\pi_{\hat x_t})_t$ is solution of 
\begin{equation}\label{eq:SDE12}
\begin{split}
\dd\pi_{\hat x_t}=&a\Big(\sigma_+\pi_{\hat x_{t-}}\sigma_--\tfrac{1}{4}\big((\Id-\sigma_z)\pi_{\hat x_{t-}}+\pi_{\hat x_{t-}}(\Id-\sigma_z)\big)\Big)\dd t\\
&+b\Big(\sigma_-\pi_{\hat x_{t-}}\sigma_+-\tfrac{1}{4}\big((\Id+\sigma_z)\pi_{\hat x_{t-}}+\pi_{\hat x_{t-}}(\Id+\sigma_z)\big)\Big)\dd t\\
& +\Big(\frac{\sigma_+\pi_{\hat x_{t-}}\sigma_-}{\tr(\sigma_-\sigma_+\pi_{\hat x_{t-}})}-\pi_{\hat x_{t-}}\Big)\Big(\dd N_0(t)-a\tr(\sigma_-\sigma_+\pi_{\hat x_{t-}})\dd t\Big)\\
&+\Big(\frac{\sigma_-\pi_{\hat x_{t-}}\sigma_+}{\tr(\sigma_+\sigma_-\pi_{\hat x_{t-}})}-\pi_{\hat x_{t-}}\Big)\Big(\dd N_1(t)-b\tr(\sigma_+\sigma_-\pi_{\hat x_{t-}})\dd t\Big)
\end{split}
\end{equation}
where $N_0$ and $N_1$ are Poisson processes of stochastic intensities 
\[t\mapsto\int_0^t a\tr(\sigma_-\sigma_+\pi_{\hat x_{s-}})\, \dd s\quad\textrm{and}\quad t\mapsto\int_0^t b\tr(\sigma_+\sigma_-\pi_{\hat x_{s-}})\,\dd s.\]
Assumptions \pur\ and {\lerg} hold as in Example \ref{subsec_thqubitdiff}.
\begin{prop}
Let $\{e_1,e_2\}$ denote the canonical basis of $\C^2$. The invariant measure for Equation \eqref{eq:SDE12} is
\[\mu\inv{}=\frac{a}{a+b}\,\delta_{\pi_{\hat e_1}}+\frac{b}{a+b}\,\delta_{\pi_{\hat e_2}}\]
\end{prop}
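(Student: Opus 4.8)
The plan is to exploit the fact that the two pure states $\hat e_1$ and $\hat e_2$ form an absorbing set on which the quantum trajectory degenerates into a classical two-state Markov jump process, whose stationary distribution is then immediate.

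First I would record the relevant algebra. Since $\sigma_+=e_1e_2^*$ and $\sigma_-=e_2e_1^*$, any $x=\alpha e_1+\beta e_2$ satisfies $\sigma_+x=\beta e_1$ and $\sigma_-x=\alpha e_2$. Consequently, whenever a jump is defined (i.e.\ $C_jx\neq0$), one has $C_0\cdot\hat x=\hat e_1$ and $C_1\cdot\hat x=\hat e_2$ independently of the pre-jump state: an $N_0$-event sends the trajectory to $\hat e_1$ and an $N_1$-event sends it to $\hat e_2$.

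Second, the continuous part. With $H=0$ and $I_b=\emptyset$, the generator of the deterministic flow is $K+\Id$ with $K=-\tfrac12(C_0^*C_0+C_1^*C_1)=-\tfrac12(b\,e_1e_1^*+a\,e_2e_2^*)$, which is diagonal in the basis $\{e_1,e_2\}$. Since in projective space the interjump evolution of $\hat x_t$ is $\widehat{\e^{tK}x_0}$ (the nonlinear drift being a mere scalar rescaling), this flow fixes $\hat e_1$ and $\hat e_2$. Combining with the first step, once the trajectory reaches $\{\hat e_1,\hat e_2\}$ it never leaves: this set is absorbing.

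Third, on $\{\hat e_1,\hat e_2\}$ the dynamics is a two-state continuous-time Markov chain whose rates I read off from the stochastic intensities of \eqref{eq:SDE12}. From $\hat e_1$ the only nonvanishing intensity is that of $N_1$, equal to $b\,\tr(\sigma_+\sigma_-\pi_{\hat e_1})=b$, triggering $\hat e_1\to\hat e_2$; from $\hat e_2$ the only nonvanishing intensity is that of $N_0$, equal to $a\,\tr(\sigma_-\sigma_+\pi_{\hat e_2})=a$, triggering $\hat e_2\to\hat e_1$ (the complementary intensities vanish because $\sigma_+e_1=0$ and $\sigma_-e_2=0$). The stationary law of this chain solves the balance relation $p_1b=p_2a$ together with $p_1+p_2=1$, giving $p_1=a/(a+b)$ and $p_2=b/(a+b)$. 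Hence the probability measure $\tfrac{a}{a+b}\delta_{\pi_{\hat e_1}}+\tfrac{b}{a+b}\delta_{\pi_{\hat e_2}}$ is invariant, and since \pur\ and \lerg\ hold, the uniqueness asserted in Theorem \ref{thm:expo_conv_wasser1} identifies it as $\mu\inv$. The argument is essentially a verification and presents no serious obstacle; the only point demanding a little care is to confirm rigorously that $\{\hat e_1,\hat e_2\}$ is absorbing — both under the continuous flow, via diagonality of $K$, and under the jumps, via $\sigma_\pm$ collapsing onto $e_1$ and $e_2$ — before reducing to the two-state chain and invoking uniqueness.
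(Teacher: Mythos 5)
Your proposal is correct and takes essentially the same route as the paper, whose proof consists precisely in checking that, started from $\hat e_1$ or $\hat e_2$, the trajectory $(\pi_{\hat x_t})_t$ is a two-state jump process with intensity $b$ for $\pi_{\hat e_1}\to\pi_{\hat e_2}$ and intensity $a$ for the reverse jumps, and then invoking uniqueness of the invariant measure from Theorem \ref{thm:expo_conv_wasser1}. Your extra verifications (that $\{\hat e_1,\hat e_2\}$ is absorbing under both the interjump flow and the jumps, and that any jump from an arbitrary state lands in this set) simply make explicit what the paper leaves to the reader.
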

\begin{proof}
It is enough to check from \eqref{eq:SDE12} that, if ${\hat x_0}$ is either $\hat e_1$ or $\hat e_2$ then $(\pi_{\hat x_t})_t$ is a jump process on $(\pi_{\hat e_1},\pi_{\hat e_2})$ with intensity $b$ for the jumps from $\pi_{\hat e_1}$ to $\pi_{\hat e_2}$ and intensity $a$ for the reverse jumps.
\end{proof}

\subsection{Finite state space Markov process embedding}\label{sec:classical_MC}

In this last example we show how we can recover all the usual continuous-time Markov chains using special quantum trajectories.

Let $\{e_\ell\}_{\ell=1}^k$ be an orthonormal basis of $\C^k$, and $(X_t)_t$ a $\{e_1,\ldots,e_k\}$-valued Markov process with generator $Q$ (we recall that $Q$ is a $k\times k$ real matrix such that $\mathbb E(\langle v,X_t\rangle|X_0)=\langle v, e^{tQ}X_0\rangle$ for any vector $v\in \C^k$). Let $H$ be diagonal in the basis $\{e_\ell\}_{\ell=1}^k$, let $I_b=\emptyset$ and $I_p=\{ (i,j); i\neq j \mbox{ in }1,\ldots,k\}$ and for any $(i,j)\in I_p$ let $C_{i,j}=\sqrt{Q_{i,j}} \,e_j e_i^*$.

\begin{prop}\label{prop:compare_classical}
Let $(\hat x_t)_t$ be the quantum trajectory defined by Equation \eqref{eq:SDEpur} and the above parameters. Then assumption \pur\ holds. In addition,
\begin{enumerate}[label=(\roman*)]
\item Let $T=\inf\big\{t\geq 0: \hat x_t\in \{\hat e_1,\ldots,\hat e_k\}\big\}$. If for all $i$ there exists $j$ with $Q_{i,j}>0$ then for any probability measure $\mu$ over $\pc k$,  $\pp_\mu(T<\infty)=1$.\label{it:class-pur}
\item Conditionally on $\hat x_0\in \{\hat e_\ell\}_{\ell=1}^k$, the process $(\hat x_t)_t$ has the same distribution as the image by~$x\mapsto\hat x$ of $(X_t)_t$.
\item The assumption {\lerg} holds if and only if $(X_t)_t$ accepts a unique invariant measure. In that case, the unique invariant measure $\nu\inv$ for $(\hat x_t)_t$ is the image by $x\mapsto \hat x$ of the unique invariant measure for $(X_t)_t$.\label{it:class-erg}
\end{enumerate}
\end{prop}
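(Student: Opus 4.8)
The plan is to exploit the very rigid structure of the data: each jump operator $C_{i,j}=\sqrt{Q_{i,j}}\,e_je_i^*$ is a scaled matrix unit, so that $C_{i,j}^*C_{i,j}=Q_{i,j}\,\pi_{\hat e_i}$, and both $H$ and $\sum_{i,j}C_{i,j}^*C_{i,j}$ are diagonal in the basis $\{e_\ell\}$. To obtain \pur{} (here $I_b=\emptyset$), let $\pi$ be a nonzero orthogonal projector with $\pi C_{i,j}^*C_{i,j}\pi\propto\pi$ for every $(i,j)\in I_p$. For each index $\ell$ carrying a nonzero transition this reduces to $(\pi e_\ell)(\pi e_\ell)^*=\pi\pi_{\hat e_\ell}\pi\propto\pi$; as the left-hand side has rank at most one, either $\pi$ has rank one or $e_\ell\in\ker\pi$. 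If $\pi$ had rank $\geq 2$, the second alternative would hold for every such $\ell$, and since these $e_\ell$ span $\C^k$ (every state being involved in some transition) we would reach the contradiction $\pi=0$. Hence $\pi$ has rank one and \pur{} holds.

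For the identification in (ii), I would observe that between consecutive jumps equation \eqref{eq:SDEpur} reduces to the deterministic flow generated by $D(x)$, whose non-scalar part $-\big(\ii H+\tfrac12\sum_{i,j}C_{i,j}^*C_{i,j}\big)$ is diagonal and hence leaves each coordinate line $\C e_m$ invariant; thus, started at $\hat e_m$, the process is frozen at $\hat e_m$ until the next jump. A jump of $N_{i,j}$ sends the state to $\widehat{C_{i,j}e_m}$, which is $\hat e_j$ exactly when $i=m$, and the intensity of $N_{m,j}$ in state $\hat e_m$ is $\tr(C_{m,j}^*C_{m,j}\pi_{\hat e_m})=Q_{m,j}$. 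Therefore, started from a basis state, $(\hat x_t)_t$ is a pure-jump process on $\{\hat e_\ell\}$ jumping from $\hat e_m$ to $\hat e_j$ at rate $Q_{m,j}$; matching these rates against the defining relation of the generator $Q$ identifies the two laws, which proves (ii).

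Claim (i) follows from the same rigidity: the range of every $C_{i,j}$ is a coordinate line, so immediately after any jump the state is a basis vector, and hence $T$ is bounded above by the first jump time. The total jump intensity at $\hat x$ equals $\sum_i\big(\sum_{j\neq i}Q_{i,j}\big)\,|\langle e_i,x\rangle|^2$. Under the hypothesis that for each $i$ some $Q_{i,j}>0$, the row sum $\sum_{j\neq i}Q_{i,j}$ is positive, so this intensity is bounded below on the unit sphere by $c:=\min_i\sum_{j\neq i}Q_{i,j}>0$. The first jump time is then stochastically dominated by an exponential variable of parameter $c$, hence almost surely finite, so $\pp_\mu(T<\infty)=1$ for every $\mu$.

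The delicate part is (iii), and I would first show that under \pur{} every $\Lin$-invariant state is diagonal. Applying $\Lin$ to an off-diagonal matrix unit $e_ae_b^*$ with $a\neq b$, the jump term $\sum_{i,j}C_{i,j}\,e_ae_b^*\,C_{i,j}^*$ vanishes (it would require $j=a$ and $j=b$ at once), and the surviving terms give $\Lin(e_ae_b^*)=\big({-}\ii(h_a-h_b)-\tfrac12(c_a+c_b)\big)\,e_ae_b^*$, where $c_\ell\geq0$ is the $\ell$-th diagonal entry of $\sum_{i,j}C_{i,j}^*C_{i,j}$ and vanishes exactly when state $\ell$ is absorbing. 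Since \pur{} excludes two distinct absorbing states, $c_a+c_b>0$ for all $a\neq b$, so no off-diagonal unit lies in $\ker\Lin$ and every invariant state is diagonal. A direct computation then shows that $\Lin$ restricted to diagonal matrices acts as the generator of $(X_t)_t$, so $p\mapsto\mathrm{diag}(p)$ is an affine bijection between invariant distributions of $(X_t)_t$ and $\Lin$-invariant states. Consequently \lerg{}, being equivalent to uniqueness of $\rhoinv$, holds if and only if $(X_t)_t$ has a unique invariant distribution $p\inv$. In that case Remark \ref{rmq_muinvrhoinv} gives $\rho_{\nu\inv}=\rhoinv=\mathrm{diag}(p\inv)$, while (ii) shows that the image measure $\sum_\ell p\inv_\ell\,\delta_{\hat e_\ell}$ is itself invariant for $(\hat x_t)_t$; by the uniqueness in Theorem \ref{thm:expo_conv_wasser1} the two must coincide, which is the asserted description of $\nu\inv$. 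I expect the diagonality of invariant states, together with the bookkeeping needed to match the two possible index conventions hidden in $Q$, to be the only genuinely delicate points; everything else reduces to direct computation.
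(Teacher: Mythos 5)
Your proposal is correct, and for \pur, (i) and (ii) it is essentially the paper's own argument: the paper likewise observes $C_{i,j}^*C_{i,j}=Q_{i,j}\,e_ie_i^*$, bounds the total jump intensity below by $\min_i\sum_j Q_{i,j}$ to get an almost surely finite first jump time after which the state is a basis vector, and checks that the drift vanishes at basis states so that between jumps the trajectory is frozen at $\hat e_i$ with $N_{i,j}$ of intensity $Q_{i,j}\,t$ there. One point worth flagging: your parenthetical ``every state being involved in some transition'' is genuinely needed for \pur\ --- if two states are absorbing, the rank-two projector onto their span satisfies all the constraints with proportionality constant $0$ (which the paper explicitly allows), so \pur\ fails; this hypothesis appears in the statement only inside item (i), and the paper's ``\pur\ holds trivially'' glosses over exactly the same point, so you are if anything more careful than the source. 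Where you genuinely depart from the paper is (iii). The paper argues probabilistically: by (i) and (ii), after the almost surely finite first jump time the trajectory evolves as the image of $(X_t)_t$, so every invariant measure of $(\hat x_t)_t$ is the image of an invariant measure of $(X_t)_t$, and it then invokes Theorem \ref{thm:expo_conv_wasser1} together with the classification of Section \ref{app:inv_measure} (under \pur, invariant measures are convex combinations of the $d$ extremal ones, and $d=1$ exactly when \lerg\ holds) to transfer uniqueness back and forth. You instead compute $\ker\Lin$ directly: each off-diagonal unit $e_ae_b^*$ is an eigenvector of $\Lin$ with eigenvalue $-\ii(h_a-h_b)-\tfrac12(c_a+c_b)$, \pur\ excludes two absorbing states so $c_a+c_b>0$, hence invariant states are diagonal, and on diagonal matrices $\Lin$ acts as the generator of $(X_t)_t$; the resulting affine bijection between invariant states and invariant distributions gives the equivalence in (iii), and the identification of $\nu\inv$ follows from (ii) and the uniqueness in Theorem \ref{thm:expo_conv_wasser1}, as you say. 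Your route buys an explicit description of all $\Lin$-invariant states and bypasses the Section \ref{app:inv_measure} machinery for the ``only if'' direction, at the cost of using the standard equivalence between \lerg\ and $\dim\ker\Lin=1$ (covered by the cited \cite[Proposition 7.6]{wolftour}); the paper's route is shorter given its Section \ref{app:inv_measure} and never touches the generator. Your closing caveat about index conventions is also well taken: the defining relation $\E(\langle v,X_t\rangle\,|\,X_0)=\langle v,\e^{tQ}X_0\rangle$ makes $Q_{ji}$ the rate of $i\to j$, while the usage $C_{i,j}=\sqrt{Q_{i,j}}\,e_je_i^*$ (and the row sums $Q_i=\sum_jQ_{i,j}$ in the proof of (i)) presuppose the row convention, so there is indeed a transpose to reconcile, and your bookkeeping resolves it in the direction the proofs actually use.
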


\begin{proof}
Note first that any $C_{i,j}^*C_{i,j}=Q_{i,j}\,e_i e_i^*$
, so that \pur\ holds trivially. 

To prove (i), let $T_1=\inf\{t>0; \ \exists (i,j)\in I_p \mbox{ such that } N_{i,j}(t)>0\}$. Remark that because $\tr(C_{i,j}\pi_{\hat x_{s-}}C_{i,j}^*)=Q_{i,j} |\langle e_i, x_{s-}\rangle|^2$, the sum $\sum_{i,j}N_{i,j}$ of independent Poisson processes has intensity
\[\sum_{i,j}\int_0^t Q_{i,j}  |\langle e_i, x_{s-}\rangle|^2\, \dd s\geq  t \,\min_i Q_i\]
where $Q_i=\sum_j Q_{i,j}$ is positive by assumption, so that $T_1$ is almost surely finite. Now consider the almost surely unique $(i,j)$ in $I_p$ such that $N_{i,j}(T_1)>0$; necessarily $\tr(C_{i,j} \pi_{\hat x_{t-}} C_{i,j}^*)>0$, and then $\frac{C_{i,j} \pi_{\hat x_{t-}} C_{i,j}^*}{\tr(C_{i,j} \pi_{\hat x_{t-}} C_{i,j}^*)}=\pi_{\hat e_j}$, so that $T\leq T_1$. This proves (i).

Now, to prove (ii), remark that equation \eqref{eq1:SME} can be rewritten in the form
\begin{equation}
\begin{aligned}
\dd \pi_{\hat x_t}=&\hphantom{+}\sum_{(i,j)\in I_p}\big(\tr(C_{i,j}\pi_{\hat x_{t-}}C_{i,j}^*)\pi_{\hat x_{t-}}-\frac{1}{2}\{C_{i,j}^*C_{i,j},\pi_{\hat x_{t-}}\}\big)\,\dd t\nonumber\\
&+\sum_{(i,j)\in I_p}\Big(\frac{C_{i,j}\pi_{\hat x_{t-}}C_{i,j}^*}{\tr(C_{i,j}\pi_{\hat x_{t-}}C_{i,j}^*)}-\pi_{\hat x_{t-}}\Big)\,\dd N_{i,j}(t).
\end{aligned}
\end{equation}
Let $T_1$ be defined as above; then for $t<T_1$ the process $(\pi_{\hat x_t})_t$ satisfies
\begin{equation}\label{eq:ode}
\pi_{\hat x_t}=\pi_{\hat x_0}+\sum_{(i,j)\in I_p} \int_0^t \big(\tr(C_{i,j}\pi_{\hat x_{s-}}C_{i,j}^*)\pi_{\hat x_{s-}}-\frac{1}{2}\{C_{i,j}^*C_{i,j},\pi_{\hat x_{s-}}\}\big)\,\dd s.\end{equation}
Starting with an initial condition $\hat x_0\in \{\hat e_\ell\}_{\ell=1}^k$,
one proves easily that the integrand is zero, which means that $\pi_{\hat x_t}=\pi_{\hat x_0}$ for $t<T_1$. This shows in addition that for $t<T_1$, the intensity of $N_{i,j}$ is 
\[\int_0^t \tr(C_{i,j}\pi_{\hat x_{s-}}C_{i,j}^*)\,\dd s=
\left\{
\begin{array}{cl}
	Q_{i,j}\, t & \mbox{ if } x_0=e_i\\
	0		  & \mbox{ otherwise.}
\end{array}
\right.\]
Therefore, conditionally on $x_0=e_i$, $T_1=\inf\{t>0; \ \exists j\neq i \mbox{ such that } N_{i,j}(t)>0\}$ and there exists an almost surely unique $j$ such that $N_{i,j}(T_1)>0$. One then has
\[\pi_{\hat x_{T_1}}=\frac{C_{i,j} \pi_{\hat x_{T_1-}} C_{i,j}^*}{\tr(C_{i,j} \pi_{\hat x_{T_1-}} C_{i,j}^*)}=\pi_{\hat{e}_j}.\]
This shows that for $t\in[0,T_1]$ the process $(\hat x_t)_t$ has the same distribution as the process of equivalence classes of $X_t$. This extends to all $t$ by the Markov property of the Poisson processes. This proves (ii).

Points (i) and (ii) show that for $t>T_1$, the process $(\hat x_t)_t$ has the same distribution as $(X_t)_t$ with initial condition $X_{T_1}$ satisfying $\hat X_{T_1}=\hat x_{T_1}$. Therefore any invariant measure for $(\hat x_t)_t$ is the image by $x\mapsto \hat x$ of an invariant measure for $(X_t)_t$. Theorem \ref{thm:expo_conv_wasser1} and Section \ref{app:inv_measure} show that $(\hat x_t)_t$ admits at least one invariant measure, and that the invariant measure is unique if and only if \lerg\ holds. This implies that $(X_t)_t$ has a unique invariant measure if and only if \lerg\ holds.
\end{proof}

\paragraph{\textbf{Acknowledgments}} 
The research of T.B., Y.P.\ and C.P.\ has been supported by the ANR project StoQ ANR-14-CE25-0003-01. The research of T.B.\ has been supported by ANR-11-LABX-0040-CIMI within the program ANR-11-IDEX-0002-02. The research of M.F.\ was supported in part by funding from the Simons Foundation and the Centre de Recherches Mathématiques, through the Simons-CRM scholar-in-residence program. Y.P.\ acknowledges the support of ANR project NonStops ANR-17-CE40-0006, and of the Cantab Capital Institute for the Mathematics of Information at the University of Cambridge.

\bibliographystyle{abbrv}

\end{document}